\def\Var{\mathop{\rm Var}}
\def\argmin{\mathop{\rm Arg\,Min}\limits}
\def\argmax{\mathop{\rm Arg\,Max}\limits}
\newcommand{\e}[1]{\mbe\brac{#1}}
\newcommand{\mtc}{\mathcal}
\newcommand{\mbf}{\mathbf}
\newcommand{\wh}[1]{{\widehat{#1}}}
\newcommand{\ind}[1]{{\mbf{1}\{#1\}}}
\newcommand{\paren}[1]{\left(#1\right)}
\newcommand{\brac}[1]{\left[#1\right]}
\newcommand{\set}[1]{\left\{#1\right\}}
\newcommand{\abs}[1]{\left\lvert #1 \right\rvert}
\def\cF{{\mathscr{F}}}
\def\cG{{\mtc{G}}}
\def\cN{{\mtc{N}}}
\newcommand{\eps}{\varepsilon}
\newcommand{\mbe}{\mathbb{E}}
\newcommand{\mbr}{\mathbb{R}}
\newcommand{\mbp}{\mathbb{P}}
\newcommand{\mbn}{\mathbb{N}}
\newcommand{\prob}[1]{\mbp\brac{#1}}
\newcounter{nbdrafts}
\newcommand{\checknbdrafts}{
\ifnum \thenbdrafts > 0
\@latex@warning@no@line{**********************************************************************}
\@latex@warning@no@line{* The document contains \thenbdrafts \space draft note(s)}
\@latex@warning@no@line{**********************************************************************}
\fi}
\newcommand{\pdl}{p_L(d)}
\newcommand{\fre}{\mathsf{Fr\acute{e}chet}}
\journalname{Statistics and Computing}
\begin{document}

\title{Extensions of stability selection using subsamples of observations and covariates
\thanks{
A preliminary version of this work was presented
at the conference DAGM 2012 \citep{beinrucker2012simple}.}}

\subtitle{}

\author{Andre Beinrucker \and \"Ur\"un Dogan \and Gilles Blanchard
}

\authorrunning{A. Beinrucker, \"U. Dogan, G. Blanchard} 

\institute{Andre Beinrucker, Gilles Blanchard \at
           University of Potsdam\\
              Am Neuen Palais 10, 14469 Potsdam, Germany\\
              \email{\{andre.beinrucker,gilles.blanchard\}@uni-potsdam.de}
           \and                    
           \"Ur\"un Dogan  \at
           Microsoft/Skype Labs\\
              2 Waterhouse Square, 140 Holborn EC1N2ST London\\  
              Unided Kingdom\\
              \email{urundogan@gmail.com}              
}


\maketitle

\begin{abstract}
We introduce extensions of stability selection, a method to stabilise variable selection methods introduced by Meinshausen and Bühlmann (J R Stat Soc 72:417-473, 2010). We propose to apply a base selection method repeatedly to 
random subsamples of observations and subsets of covariates 
under scrutiny, and to select
covariates based on their selection frequency. We analyse
the effects and benefits of these extensions. Our analysis generalizes the
theoretical results of Meinshausen and Bühlmann (J R Stat Soc 72:417-473, 2010) from
the case of half-samples to subsamples of arbitrary size.
We study, in a theoretical manner, the effect of taking random covariate subsets
using a simplified score model. Finally we validate these
extensions on numerical experiments on both synthetic and
real datasets, and compare the obtained results in detail
to the original stability selection method.

\keywords{variable selection \and stability selection \and subsampling}
\end{abstract}

\section{Introduction}

\subsection{Motivation}

In many applications a very large number of covariates are observed, of which only a few carry information about an outcome of interest. Variable selection techniques aim at identifying such relevant covariates (for a review see \citealp{guyon2006feature}). Usually, variable selection aims at one of two goals: to identify informative covariates in order to get scientific insight into the data and the process that generated the outcome; or to use the covariates identified as relevant in order to predict the outcome. 
In this work we primarily focus on the identification of informative covariates but also consider
prediction results using real data.
We consider {\em variable selection} (also called {\em feature selection} in computer science-related communities) as a part of the broader field of dimensionality reduction.

Many variable selection methods share the common drawback of being  unstable with respect to small changes of the data: if one estimates the set of relevant covariates on different sets of observations coming from the same source, the result can vary significantly. While this is not necessarily of concern if prediction is the goal, it makes the identification of relevant covariates very difficult. One approach to overcome this problem is {\em stability selection} \citep{meinshausen2010stability}. It consists in applying repeatedly a variable selection method to randomly chosen subsamples of half size of the observations. 
The final selection is obtained by picking only those covariates whose selection frequency across
repetitions exceeds a certain threshold. This threshold can be chosen such that (under some
assumptions) the expected number of false positive selections is guaranteed to be below a chosen value. 

\subsection{Contributions}

In the remainder of the paper we will refer to the variable selection method that is repeatedly applied to data subsamples as the {\em base method}. Similarly to \citet{meinshausen2010stability}, we construct a method that can be
applied on top of an arbitrary base method, which is considered as a black box.

We propose to extend the central idea of stability selection in two natural directions.
First,  \citet{meinshausen2010stability} use random samples containing half of the observations of the full dataset. Instead, we choose some integer $L>1$ and draw subsamples of size $\frac{1}{L}$ of the full sample size. More precisely, we randomly partition the observations into disjoint subsamples, extending the approach of complementary pairs stability selection - CPSS \citep{shah2013variable}. 
We investigate theoretically the behaviour of the expected number of false positive selections depending on the number of subsamples. In addition, we perform extensive simulation studies to compare the number of correct variables recovered for artificial and semi-synthetic datasets.

Secondly, \citet{meinshausen2010stability} remarked  from empirical comparisons that stability selection can be improved by randomising the base method. We propose a randomization by simply applying the base method to random  subsets of covariates. We obtain these subsets by randomly partitioning the covariates into disjoint subsets. 
Because the effect of doing so depends on the base method being used, it is difficult to analyse it
theoretically in full generality. In this work, we restrict our theoretical analysis to a simplified toy model, in which we assume that for each covariate there exists a latent score reflecting its informativeness about the outcome of interest. Furthermore, we assume that the base method has access to noisy observations of these scores, and outputs the covariate with the largest observed score. 
We investigate how the probability of selecting a noninformative covariate (false positive)
is influenced by the size of the random subset of covariates used.
Besides the theoretical analysis of this toy model, we performed simulation studies similar to the ones in the investigation of the subsampling of observations.

We call the method that combines the two extensions proposed {\em extended stability selection}. 
To summarize, it applies the base method repeatedly to randomly chosen subsets of the observations and covariates and finally ranking covariates by their selection frequency.
There are two parallel goals for this extension. The first goal is to improve the precision of the selection, that is to reduce the number of false positives. The second goal is to reduce the computational complexity of stability selection. 
Indeed, each call of the base method is restricted to a subset of observations {\em and}  covariates;
this reduces the memory requirements of the method. If the complexity of the base method grows faster than linearly in the data size (number of observations times number of covariates), the total computation cost is also reduced.
This is particularly appealing if the base method needs to load the data that it operates on into memory, which can be infeasible for large data matrices, but easy for smaller submatrices. Furthermore, this naturally allows for parallelization of the method, since these submatrices can be processed independently. 

\subsection{Overview of results}
Concerning the subsampling of observations, we obtain a bound on the expected number of false positives, depending on the size of the subsamples. This bound sharply generalizes Theorem 1 of \citet{meinshausen2010stability}
and Theorem 1 of \citet{shah2013variable}. Our results suggest that there is a trade-off between improving the selection of covariates for each individual subsample by using a smaller number of larger subsamples, and improving the final selection by averaging over a larger number of smaller, independent subsamples. This finding is in line with general insights on subsampling
methods \citep{politis1999subsampling} and cross-validation \citep[Section 10.3]{arlot2010survey}.
Even though our empirical comparison shows only small differences, a significant 
advantage of our proposed subsampling extension is that it has much less computational and memory requirements compared to the original stability selection or CPSS.

For the randomization of the base method obtained by taking disjoint subset of covariates, the
theoretical analysis of our simplified score model shows that under certain assumptions, there exists an optimal size for the randomly chosen covariate subsets. Our empirical results support this finding: such randomization generally improves the performance, unless the subset size is too small. 

\subsection{Organization of the paper}
In Section 2 we give a detailed description of the algorithm proposed, including 
the base methods considered in the experiments.
The theoretical analysis is presented in Section 3. We motivate the use of small observation subsamples in Section 3.1 and investigate the randomization of the base method in Section 3.2. Experimental results are given in Section 4, where we measure the performance of the algorithm in selecting informative covariates in Section 4.1 and 4.2 and apply our method in an image classification setting in Section 4.3. We conclude our work in Section 5 with a summary and a discussion.

\section{Methods} 

\subsection{Description of the algorithm} \label{sec:description}

In the sequel we assume to be given a dataset $\mathscr{D}$ containing $N$
observations $(X^{(i)},Y^{(i)})_{i=1,\ldots,N}$, each observation consisting of $D$ covariates $(X_1,\ldots,X_D)$
and an outcome $Y$ of interest. We choose $T$, the number of times we repeat the random partitioning of the data and a threshold $\tau \in (0,1)$ that indicates the fraction of observation subsamples in which a covariate needs to be chosen in order to enter the final selection. The number of observations and covariates that we apply our base method on is determined by the parameters $(L,V)$. 
We use random observation subsamples of size $\left\lfloor \frac{N}{L} \right\rfloor$ and  covariate subsets of size $\left\lfloor \frac{D}{V} \right\rfloor$ or $\left\lfloor \frac{D}{V} \right\rfloor + 1$.

We denote by $S^{\mathrm{base}}(X^{(\mathscr{S})}_{\cF},Y^{(\mathscr{S})})$ the output of the base method applied to the restriction of the data $\mathscr{D}$ to observations with indices $\mathscr{S}\subset\set{1,\ldots,N}$ and covariates with indices $\cF\subset\set{1,\ldots,D}$. We denote by $\Pi_{L,V}^{\mathrm{SFS}}(d)$ the selection frequency of covariate $d$ where the superscript SFS stands for {\em stability feature selection}. 
We give the pseudo-code of the method we propose below.  Note that we recover the original stability selection algorithm (more precisely, the CPSS algorithm of \citealp{shah2013variable}) for $L=2$, $V=1$.

\begin{algorithmic}
\STATE {\bf Parameters:}\\
$\bullet$ Number of iterations $T$ \\
$\bullet$ Number of observation subsamples per iteration $L$\\
$\bullet$ Number of covariate subsets per iteration $V$ \\
$\bullet$ Threshold $\tau\in(0,1)$ \\
\STATE {\bf Input:} $\mathscr{D} = (X^{(i)},Y^{(i)})_{i=1,\ldots,N}$ (with $X^{(i)} \in \mathbb{R}^D$.)
	\STATE {\bf Initialization:} selection frequencies $\Pi_{L,V}^{\mathrm{SFS}}(d)=0$, $d=1,\ldots,D$
	\FOR{$t = 1$ \TO $T$}
		\STATE Draw $L$ disjoint random subsamples $\mathscr{S}{(1)}, \ldots,\mathscr{S}{(L)}$ of size 
		$\left\lfloor \frac{N}{L} \right\rfloor$ without repetition of $\set{1,\ldots,N}$\,.
		\STATE Partition  $\set{1,\ldots,D}$ into $V$ disjoint random subsets $\cF{(1)}, \ldots, \cF{(V)}$ of size $\left\lfloor \frac{D}{V} \right\rfloor$ or $\left\lfloor \frac{D}{V} \right\rfloor + 1$\,.
		\FOR{$i=1,\ldots,L$; $j=1,\ldots,V$}
			\STATE $ \cG := S^{\mathrm{base}}\left(X^{(\mathscr{S}(i))}_{\cF(j)},Y^{(\mathscr{S}(i))}\right)$
			\FORALL{$d \in \cG$} 
	 			\STATE $ \Pi_{L,V}^{\mathrm{SFS}}(d) \leftarrow \Pi_{L,V}^{\mathrm{SFS}}(d) +1/LT$
			\ENDFOR
		\ENDFOR
	\ENDFOR
\RETURN set of indices in final selection \\
$\quad \quad \quad \quad S_{L,V,\tau}^{SFS}:=\{d: \Pi_{L,V}^{\mathrm{SFS}}(d) \geq \tau\}$.
\end{algorithmic}

\subsection{Comparison to previous work}

Statistical methods can be applied to subsamples of the data in various ways. A classical way in this context is the bootstrap \citep{efron1979bootstrap}, where subsamples are drawn with replacement. In contrast, stability selection and our extension follow the idea of subsampling without replacement \citep{politis1999subsampling} and are strongly related to subagging \citep{buhlmann2002analyzing}.

Several approaches have been developed to combine variable selection and subsampling of observations. \citet{sauerbrei1992bootstrap} investigated bootstrapping variable selection methods in the Cox regression model. Further, several methods have been proposed where a predictor that does variable selection intrinsically is applied to subsamples of the data. The selection obtained is then used as input to the final prediction method. \citet{bi2003dimensionality} used linear a support vector machines (SVM) to select variables and train a kernel SVM on them. Similarly, in Bolasso \citep{bach2008bolasso}, the Lasso method is  applied to bootstrapped subsamples of the data. The final selection is obtained as intersection of the sets of selected covariates on the bootstraps. On this final selection, Lasso is applied again to obtain a final estimate of the coefficients. Similarly, in random Lasso \citep{wang2011random} the importance of covariates is estimated by applying Lasso to bootstrap subsamples of the data. Afterwards Lasso is applied to a random selection of the covariates, the probability to be included is proportional to the measure obtained in the first step. 

\citet{meinshausen2010stability} proposed stability selection, a method that combines subsampling with variable selection. In their analysis the authors give a bound for the number of false positives under
some simplifying assumptions. Their work was the basis for numerous follow-up studies by theoreticians and practitioners.

\citet{shah2013variable} introduced complementary pairs stability selection (CPSS), a variant of stability selection which uses not only subsamples of size $\left\lfloor \frac{N}{2} \right\rfloor$ of the data, but also their complements. They loosen the assumptions of the original method and give bounds on both errors of the selection procedure, false positives and false negatives. Our work is an extension of their results, as we apply our base method to complementary subsamples as well.

Stability selection has been applied in various disciplines. Fields of application include genome-wide association studies \citep{alexander2011stability, he2011variable}, biomarker discovery \citep{he2010stable} and inference of gene regulatory networks  \citep{haury2012tigress}. 

The idea of repeatedly applying a statistical method to covariate subsets has been investigated before. One famous example is Random Forest \citep{breiman2001random} where decision trees are build on covariate subsets. Each decision tree can be regarded as a variable selection method as well. Recently, \citet{hinton2012improving} remarked that omitting randomly chosen covariates in the training of a neuronal network improves classification accuracy on test data drastically. A different approach to reduce the dimensionality of the problem is to cluster the covariates and apply a variable selection method to cluster representatives \citep{buhlmann2013correlated}.

\subsection{Base methods}\label{sec:basemethods}
In stability selection any variable selection method can in principle be used as a base method. In this section, we describe variable selection methods in general and give details about two selection methods that we used as a base method for our extended stability selection experiments. Base methods are described below 
for the full sample $\mathscr{D}$ for simplicity.

Variable selection methods can be classified into filters, wrapper and embedded methods \citep{guyon2006feature}. Computationally most efficient are usually filter methods, which perform variable selection independently of the specific statistical treatment that might be applied afterwards. Examples are methods based on simple univariate correlation between covariates and the outcome of interest or mutual information (see, e.g., \citealp[Chapter 2]{cover2006elements}). 
Wrapper methods evaluate the relevance of a subset of covariates using the output of the ensuing statistical treatment (typically regression or discrimination) computed on the subset of covariates only.
Embedded methods perform variable selection and prediction simultaneously.

To assess the performance of the proposed methods, we choose two popular methods as base method: CMIM - conditional mutual information maximisation \citep{fleuret2004fast} from the class of filters and Lasso \citep{tibshirani1996regression} from the class of embedded methods. Lasso was also used as a base method in the original work of \citet{meinshausen2010stability}

\subsubsection{CMIM - conditional mutual information maximization}
\label{sec:CMIM}
Intuitively, the aim of the CMIM algorithm \citep{fleuret2004fast} is to find a subset of covariates of given size $K$ that maximizes the amount of information that the selected covariates $X_{\nu(1)}, \dots , X_{\nu(K)}$ contain about the outcome $Y$. It usually finds a good trade-off between relevance and redundancy of the selected covariates and is much faster than many competing mutual information based variable selection methods.

To make the computation feasible, the CMIM algorithm does not look directly for the set of covariates that globally maximizes the mutual information with the target, but performs 
greedy selection by iteratively selecting the covariate that has the largest mutual information with the target, conditional to the set of already selected covariates. Furthermore, the latter quantity is approximated by a simpler upper bound, namely, the minimal information about the target that a candidate covariate adds to any of the already selected covariates, taken individually (rather than jointly).
The final algorithm takes the following simple form:
\begin{align}
  \nu(1) & = \argmax_{d\in\set{1, \dots, D}} \wh{I}(X_d;Y)\,; \label{eq:cmim1} \\
  \nu(\ell) & = 
   \argmax_{d\in\set{1, \dots, D}} \label{eq:cmim2} 
   \min_{j < \ell} \wh{I}\left(X_d;Y|X_{\nu(j)} \right)\,, \quad \ell >1\,,
\end{align}
where $\wh{I}(X;Y)$ denotes an estimator of the mutual information of $X$ and $Y$ and $\wh{I}(X;Y|Z)$ an estimator of the mutual information between $X$ and $Y$ given $Z$. To speed up the computation, one can use a fast implementation of the algorithm \citep{fleuret2004fast}.

\subsubsection{Lasso}
The Lasso problem \citep{tibshirani1996regression} is to find an $\ell_1$ regularized solution for the least squares problem in the linear model. It can be stated as
\begin{equation}
\label{eq:lasso}
  \hat{\beta}_{\lambda} = \argmin _{\beta \in \mbr^D} \quad \sum_{i=1}^N (Y^{(i)}- \langle X^{(i)},\beta \rangle)^2 + \lambda |\beta| \,,
\end{equation}
where $\lambda \in \mbr_{>0}$ is some regularization parameter.

Thanks to the geometric properties of the $\ell_1$-Norm, solutions of the Lasso problem tend to be sparse, meaning that only a few coefficients of $\beta$ are non-zero. This property allows us to use Lasso solvers as variable selection methods, and to define 
for any $\lambda >0$ the set of selected covariates as the ones corresponding to non-zero coefficients.
Additionally to the selection of relevant covariates, a solution $\hat{\beta}_{\lambda}$ of the Lasso problem can be used to predict an outcome for new data. Examples for popular Lasso solvers are LARS \citep{hastie2012lars} and GLMNET \citep{friedman2010regularization}.


\section{Analysis}
\label{se:anal}

In this section, we analyse the effect of the size of the observations subsamples as well as the effect of taking random covariate subsets on the performance of the final method. 

\subsection{Subsampling of observations}
\label{se:subobs}

In this section the $N$ observations of the full sample $\mathscr{D}$ are always assumed to be drawn i.i.d. from an underlying, unknown generating distribution. 
As we do not take covariate subsets in this section, we fix $V=1$ and suppress the dependence on $V$ 
in the notation.  We recall that we repeat $T$ times the random draw of $L$ disjoint observation subsamples 
of equal size without repetition; we denote $\mathscr{S}(\ell,t)$ the 
$\ell$-th subsample of observations indices in the $t$-th drawing.
 On the $L\times T$ subsamples that we obtain, 
the selection frequency of covariate $d$ is
\[
  \Pi^{SFS}_L(d):= \frac{1}{TL} \sum_{t=1}^{T}\sum_{\ell=1}^L 
  \ind{d \in S^{\mathrm{base}}(X^{(\mathscr{S}(\ell,t))},Y^{(\mathscr{S}(\ell,t))})}\,.
\]
Thresholding this quantity, we obtain the output of the stability selection procedure. For any $\tau\in(0,1)$ we define
\[
S^{SFS}_{L,\tau}:= \set{d: \Pi^{SFS}_L(d) \geq \tau}.
\]

To evaluate the performance of the method, we need to define the set of covariates that we would like to be excluded from our selection, i.e. the covariates that we consider false positives if they are selected. Since the base method is treated
here as a black box and otherwise unspecified, we have to trust that, {\em on average} (over a random i.i.d. sample),
the base method selects relevant covariates more frequently than irrelevant ones.
Consider virtually drawing an independent, i.i.d. set of observations of size $\left\lfloor \frac{N}{L} \right\rfloor$ and denote the random output of the base method on this sample by $S^{\mathrm{base}}_L$.
Then we define for each covariate $d$ 
\begin{equation*}\label{eq:pdl}
\pdl:=\mathbb{P}\left[d\in S^{\mathrm{base}}_L\right],
\end{equation*}
its probability to be selected by the base method using a sample of size $\left\lfloor \frac{N}{L} \right\rfloor$.

The quantities $\pdl$ give us a yardstick to rank the covariates by relevance; 
accordingly, for any given threshold $\theta\in(0,1)$ we define
\[
  A_{\theta,L} := \{d:\pdl\le \theta\}
\]
the set of uninformative covariates at base selection probability lower than $\theta$.
In this definition, and the assumption that the probability of selection under the base method reflects
the true relevance, we follow the general approach of \citet{shah2013variable};  the relation to the assumptions of \citet{meinshausen2010stability} is discussed below.
Observe that, since each subsample appearing in the definition $\Pi^{SFS}_L(d)$ is individually
an i.i.d. sample of size $\left\lfloor \frac{N}{L} \right\rfloor$, it follows that
$\Pi_L^{SFS}(d)$ is an unbiased estimate of $\pdl$.

The following theorem bounds in Equation \eqref{eq1} the ratio of the expected number of false positive selections of $S_{L,\tau}^{SFS}$ compared to the total number of uninformative covariates (false positive rate), as well as compared to the expected number of false positive selections of the base method applied on a single subsample in Equation \eqref{eq2}. A corresponding result for the false negatives is available in Equations \eqref{eq3} and \eqref{eq4}.

\begin{theorem}\label{th:sampleSplitting}
  Let $L\in \mathbb{N}$ 
  and $\tau\in (0,1)$. Denote 
  $\forall p,q\in (0,1) \, D(p,q):= p\log \frac{p}{q} + (1-p) \log \frac{1-p}{1-q}$ the Kullback-Leibler
  divergence between two Bernoulli variables of parameters $p$ and $q$.
  For two integers $a\leq b$ denote $\set{a..b}$ the integer interval with endpoints $a,b$.
  Depending on the choice of $\theta$ and $\tau$ we have two cases with two results each:

  If $\theta<\tau$ we have
  \begin{align}\label{eq1}
	&\frac{\e{\abs{S^{SFS}_{L,\tau} \cap A_{\theta,L}}}} {\abs{A_{\theta,L}}} \\ 
    &\quad \leq
    \min_{\ell_0 \in \set{\lceil L\theta \rceil .. \lceil L \tau \rceil}}
    \paren{\frac{L-\ell_0 +1}{\tau L - \ell_0 + 1}}\exp\paren{ - L D\paren{\frac{\ell_0}{L},\theta}} \nonumber 
  \end{align}
  and
  \begin{align}\label{eq2}
    &\frac{\e{\abs{S^{SFS}_{L,\tau} \cap A_{\theta,L}}}}{\e{\abs{S^{\mathrm{base}}_{L} \cap A_{\theta,L} }}} \\
    &\quad \leq \min_{\substack{\ell_0 \in\\ \set{ \lceil L\theta \rceil + 1 .. \lceil L \tau \rceil}}}
    \paren{\frac{L-\ell_0 +1}{\tau L - \ell_0 + 1}}
    \frac{\exp\paren{ - L D\paren{\frac{\ell_0}{L},\theta}}}{\theta} \,,\nonumber
  \end{align}

  Similarly, if $\tau<\theta$ we have
  \begin{align}\label{eq3}
    &\frac{\e{\abs{\paren{S^{SFS}_{L,\tau}}^c \cap \paren{A_{\theta,L}}^c}}}{\abs{\paren{A_{\theta,L}}^c}}\\
    &\quad\leq \min_{\substack{\ell_0 \in \\ \set{\lfloor L\tau \rfloor .. \lfloor L \theta \rfloor}}}
    \paren{\frac{\ell_0 + 1}{\ell_0 - \tau L + 1}} \exp\paren{ - L D\paren{\frac{\ell_0}{L},\theta}}\,,\nonumber
  \end{align}
  and 
  \begin{align}\label{eq4}
    &\frac{\e{\abs{\paren{S^{SFS}_{L,\tau}}^c \cap \paren{A_{\theta,L}}^c}}}{\e{\abs{\paren{S^{\mathrm{base}}_L}^c\cap\paren{A_{\theta,L}}^c}}}  \\
    &\quad\leq \min_{\substack{\ell_0 \in\\ \set{\lfloor L\tau \rfloor .. \lfloor L \theta \rfloor -1}}}
    \paren{\frac{\ell_0 + 1}{\ell_0 - \tau L + 1}}
    \frac{\exp\paren{ - L D\paren{\frac{\ell_0}{L},\theta}}}{1-\theta},\nonumber  
  \end{align}
  where $A^c$ denotes the complement of a set $A$ (in $\set{1,...,D}$).

\end{theorem}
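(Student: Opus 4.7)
The plan is to reduce all four inequalities to a Markov--Chernoff chain once we recognize that within a single iteration the relevant counting statistic is exactly Binomial. The starting observation is that within iteration $t$, the subsamples $\mathscr{S}(1,t),\ldots,\mathscr{S}(L,t)$ are disjoint subsets of $\{1,\ldots,N\}$ of size $\lfloor N/L\rfloor$ drawn independently of the data, so by exchangeability of the i.i.d.\ observations the $L$ blocks $(X^{(\mathscr{S}(\ell,t))},Y^{(\mathscr{S}(\ell,t))})_{\ell=1,\ldots,L}$ are jointly distributed as $L$ independent i.i.d.\ samples of size $\lfloor N/L\rfloor$. Consequently the indicators $I^t_\ell := \mathbf{1}\{d \in S^{\mathrm{base}}(X^{(\mathscr{S}(\ell,t))},Y^{(\mathscr{S}(\ell,t))})\}$ are i.i.d.\ Bernoulli$(p_L(d))$ and $M_t(d) := \sum_{\ell=1}^L I^t_\ell \sim \mathrm{Bin}(L,p_L(d))$.

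For \eqref{eq1}, fix any integer $\ell_0 \in \{\lceil L\theta\rceil,\ldots,\lceil L\tau\rceil\}$. The elementary bound $M_t(d) \le (\ell_0-1) + (L-\ell_0+1)\mathbf{1}\{M_t(d)\ge \ell_0\}$, summed over $t$, yields the implication
\[
\Pi^{SFS}_L(d) \ge \tau \;\Longrightarrow\; \sum_{t=1}^T \mathbf{1}\{M_t(d)\ge\ell_0\} \ge T\,\frac{L\tau-\ell_0+1}{L-\ell_0+1}.
\]
Markov's inequality applied to the nonnegative sum on the right (whose expectation equals $T\,\mathbb{P}[M_1(d)\ge\ell_0]$ since the $M_t(d)$ are identically distributed across $t$), combined with Chernoff's Binomial bound $\mathbb{P}[\mathrm{Bin}(L,p)\ge\ell_0] \le \exp(-L\,D(\ell_0/L,p))$ and the monotonicity of $D(a,\cdot)$ on $[0,a]$ (so $D(\ell_0/L,p_L(d)) \ge D(\ell_0/L,\theta)$ for $d \in A_{\theta,L}$), produces
\[
\mathbb{P}[\Pi^{SFS}_L(d)\ge\tau] \le \frac{L-\ell_0+1}{L\tau-\ell_0+1}\,\exp(-L\,D(\ell_0/L,\theta)).
\]
Summing over $d \in A_{\theta,L}$ and optimizing over $\ell_0$ delivers \eqref{eq1}.

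Bound \eqref{eq2} has $\sum_{d\in A_{\theta,L}} p_L(d)$ in the denominator rather than $|A_{\theta,L}|$, so we need the per-$d$ estimate $\mathbb{P}[M_1(d)\ge\ell_0]/p_L(d) \le \exp(-L\,D(\ell_0/L,\theta))/\theta$. A short calculation shows that the map $p \mapsto \exp(-L\,D(\ell_0/L,p))/p$ is nondecreasing on $(0,\theta]$ exactly when $\ell_0 > 1 + (L-1)p$, a condition ensured for all $p \le \theta$ precisely by the strictly larger lower bound $\ell_0 \ge \lceil L\theta\rceil + 1$ appearing in \eqref{eq2}. The bounds \eqref{eq3} and \eqref{eq4} then follow by the symmetry $M_t(d) \mapsto L - M_t(d) \sim \mathrm{Bin}(L,1-p_L(d))$, which turns $\{\Pi^{SFS}_L(d) < \tau\}$ into a right-tail event on the complementary counts and swaps the triple $(\theta,\tau,p_L(d))$ with $(1-\theta,1-\tau,1-p_L(d))$; the substitution $\tilde\ell_0 = L-\ell_0$ together with the Bernoulli KL identity $D(1-a,1-p) = D(a,p)$ then reproduces the exact forms \eqref{eq3}--\eqref{eq4}.

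The main obstacle is really just the first observation: recognizing that the disjointness of the subsamples within a single iteration, applied to an i.i.d.\ dataset, yields \emph{exact} independence of the $I^t_\ell$ across $\ell$ and hence a genuine Binomial statistic $M_t(d)$. Once this is in hand, the argument is a controlled Markov+Chernoff chain; the only delicate accounting is the monotonicity check that pins down the slightly tighter lower bound on $\ell_0$ in \eqref{eq2} (and, by symmetry, in \eqref{eq4}).
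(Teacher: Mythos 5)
Your proposal is correct and follows essentially the same route as the paper: the elementary bound $M_t \le (\ell_0-1)+(L-\ell_0+1)\mathbf{1}\{M_t\ge\ell_0\}$ is exactly the paper's Lemma relating $\Pi^{SFS}_L$ to the simultaneous-selection count, and the Markov-plus-Chernoff step on the Binomial count (valid because the disjoint subsamples of an i.i.d.\ sample are jointly distributed as independent samples) is the paper's second Lemma, with the same monotonicity arguments handling the $\theta$ versus $\theta$-divided-by-$p$ variants. Your explicit symmetry substitution $\tilde\ell_0 = L-\ell_0$ for \eqref{eq3}--\eqref{eq4} correctly reproduces the stated ranges and prefactors where the paper only says ``similarly,'' and your monotonicity threshold $\ell_0 \ge 1+(L-1)p$ is a (correct, slightly sharper) restatement of the paper's condition $p \le \ell_0/L - 1/L$.
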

For the special case $L=2$ we recover the results of \citet[Theorem 1]{shah2013variable} by choosing $\ell_0=2$ in Equation \eqref{eq1} and $\ell_0=0$ in \eqref{eq3}. In the following corollary we formulate our results under the assumptions and in the notation of \citet[Theorem 1]{meinshausen2010stability}:

\begin{corollary}\label{co:sampleSplitting}
Suppose we are given a set of noise covariates $\cN$ and a set of signal covariates. Assume that all noise covariates have the same probability to be selected by the base method.
Assume further that the base variable selection method has a larger probability to select any informative covariate than random guessing. We denote $q_L=\e{\abs{S^{\mathrm{base}}_L}}$.
Then for any $\tau > \frac{q_L}{D}$:
\begin{align}
\label{eq:MB-like}
&\frac{\e{\abs{S^{SFS}_{L,\tau} \cap \cN}}}{|\cN|} \\
&\quad\leq \min_{\substack{\ell_0 \in\\ \set{ \left\lceil \frac{L q_L}{D} \right\rceil .. \left\lceil \tau L \right\rceil}}}
\paren{\frac{L-\ell_0+1}{\tau L - \ell_0 + 1}}
\exp\paren{ - L D\paren{\frac{\ell_0}{L},\frac{q_L }{D}}}\,.\nonumber 
\end{align}
\end{corollary}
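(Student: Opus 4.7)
The plan is to apply Theorem~\ref{th:sampleSplitting}, Equation~\eqref{eq1}, with the carefully chosen threshold $\theta = q_L/D$. The work is front-loaded into verifying that the set $A_{\theta,L}$ coincides exactly with $\cN$ for this choice of $\theta$; once that is established, the corollary follows by direct substitution.

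First I would derive the common noise selection probability. Let $p^{\ast} := \pdl$ for any $d \in \cN$, well-defined by the symmetry assumption on noise covariates. Using the identity $q_L = \e{\abs{S^{\mathrm{base}}_L}} = \sum_{d=1}^{D} \pdl$, I would split the sum into noise and signal contributions:
\begin{equation*}
  q_L \;=\; \abs{\cN}\, p^{\ast} \;+\; \sum_{d \notin \cN} \pdl.
\end{equation*}
Because every signal covariate satisfies $\pdl > q_L/D$ (the ``larger than random guessing'' hypothesis), the second sum strictly exceeds $(D - \abs{\cN})\, q_L/D$, which after rearrangement yields the strict inequality $p^{\ast} < q_L/D$.

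Next, with $\theta := q_L/D$, I would verify that $A_{\theta,L} = \cN$. Every noise covariate lies in $A_{\theta,L}$ because $p^{\ast} < \theta$; no signal covariate does, because $\pdl > \theta$ by assumption. The hypothesis $\tau > q_L/D$ becomes precisely $\theta < \tau$, which is exactly the case condition for Equation~\eqref{eq1}. Substituting $\theta = q_L/D$ into that bound yields the statement of the corollary directly, since both the index range $\set{\lceil L\theta \rceil .. \lceil L\tau \rceil}$ and the exponent $\exp(-L\,D(\ell_0/L,\theta))$ specialise to those displayed in Equation~\eqref{eq:MB-like}.

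The only (mild) obstacle is ensuring the strict inequality $p^{\ast} < q_L/D$ rather than the non-strict version, because $A_{\theta,L}$ is defined via a non-strict inequality: if some signal covariate happened to have $\pdl = q_L/D$ it would contaminate $A_{\theta,L}$ and destroy the crucial equality $A_{\theta,L} = \cN$. The strict inequality assumed on the signal covariates is exactly what rules this out, making the reduction to Theorem~\ref{th:sampleSplitting} a clean one-line substitution.
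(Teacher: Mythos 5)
Your proposal is correct and follows essentially the same route as the paper: set $\theta = q_L/D$, use the identity $q_L = \sum_d \pdl$ together with the ``better than random'' assumption on signal covariates to conclude $\pdl < q_L/D$ for noise covariates, deduce $A_{\theta,L} = \cN$, and substitute into Equation~\eqref{eq1}. The only difference is that you spell out explicitly the averaging argument behind $p^{\ast} < q_L/D$, which the paper states in one line.
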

If we choose $\ell_0=L=2$ and we use that \linebreak $\exp(-2 D(1,\theta)) = \theta^2$ we recover the order $O(q_L ^2/D)$ of the bound  of \citet[Theorem 1]{meinshausen2010stability} as well as the constraint $\tau > 1/2$ there.

Bound \eqref{eq:MB-like} involves a minimum over the allowed values of $\ell_0$,
which is merely a technical parameter in the bound. In order to make the bound more readable,
we can pick a specific value of $\ell_0$ as follows.
As the exponential term in the bound is monotonically decreasing in $\ell_0$, the largest allowed value 
$\ell_0=\lceil \tau L \rceil$ seems a natural choice.  
However, one should ensure that 
the multiplicative term in front does not become too large.
For this, 
choose $\tau$ and $L$ such that $\tau L$ is an integer; then we have $\frac{\ell_0-1}{L} = \frac{\tau L -1}{L} = \tau - \frac{1}{L}$. In this case, Corollary \ref{co:sampleSplitting} simplifies as follows.

\begin{corollary}
  Under the assumptions of Corollary \ref{co:sampleSplitting}, if $\tau L$ is an integer and if we choose $\ell_0= \tau L $, then we have as a direct consequence of Corollary \ref{co:sampleSplitting} for any $\tau > \frac{q_L }{D}$:
\begin{equation}
\frac{\e{\abs{S^{SFS}_{L,\tau} \cap \cN}}}{|\cN|} 
\leq \left( L(1 - \tau) + 1 \right)
\exp\paren{ - L D\paren{\tau,\frac{q_L }{D}}}\,.
\end{equation}  
\end{corollary}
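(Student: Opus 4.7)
The plan is to simply specialize the minimum appearing in the bound of Corollary~\ref{co:sampleSplitting} to the particular choice $\ell_0 = \tau L$. Since the left-hand side of Corollary~\ref{co:sampleSplitting} is bounded by the minimum over an admissible range of $\ell_0$, it is automatically bounded by the value of the expression at any single admissible $\ell_0$; the only genuine content to verify is admissibility and then perform the substitution.

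First I would check that $\ell_0 := \tau L$ lies in the admissible range $\bigl\{\lceil L q_L/D \rceil, \ldots, \lceil \tau L \rceil\bigr\}$. By assumption $\tau L$ is an integer, so the upper endpoint satisfies $\lceil \tau L \rceil = \tau L$ and our choice hits it exactly. For the lower endpoint, the hypothesis $\tau > q_L/D$ gives $\tau L > L q_L/D$; since $\tau L$ is an integer strictly exceeding $L q_L/D$, it must be at least $\lceil L q_L/D \rceil$. Hence $\ell_0 = \tau L$ is indeed an admissible index.

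Next I would substitute this value of $\ell_0$ into the general bound. The multiplicative prefactor becomes
\[
\frac{L - \ell_0 + 1}{\tau L - \ell_0 + 1} \;=\; \frac{L - \tau L + 1}{1} \;=\; L(1-\tau) + 1,
\]
and the exponential term becomes $\exp\bigl(-L\, D(\tau, q_L/D)\bigr)$, using $\ell_0/L = \tau$. Bounding the minimum in Corollary~\ref{co:sampleSplitting} by its value at this particular $\ell_0$ then yields precisely the claimed inequality.

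There is no real obstacle here; the statement is deliberately presented as a ``direct consequence,'' and the only subtlety is the admissibility check at the lower endpoint, which relies crucially on $\tau L$ being an integer to promote the strict inequality $\tau L > L q_L/D$ into $\tau L \ge \lceil L q_L/D \rceil$.
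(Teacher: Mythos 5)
Your proposal is correct and follows exactly the paper's route: the paper likewise obtains this corollary by specializing the minimum in Corollary \ref{co:sampleSplitting} to $\ell_0 = \tau L$, which gives the prefactor $\frac{L-\tau L+1}{1} = L(1-\tau)+1$ and the exponent $-LD(\tau, q_L/D)$. Your explicit check that $\tau L$ lies in the admissible range (using integrality to upgrade $\tau L > Lq_L/D$ to $\tau L \ge \lceil Lq_L/D\rceil$) is the only nontrivial point, and you handle it correctly.
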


\begin{remark}.
  \label{rem1}
As the expected number of false positive selections decays exponentially with $L$, it is tempting to conclude that $L$ should be chosen as large as possible. But one should not forget that the parameter $L$ has an important influence on base selection probabilities $\pdl$ as well.
This influence strongly depends on the particular base method used, and we cannot hope
to derive a generic quantitative 
statement concerning that point. Generally speaking, we expect that
as $L$ becomes larger and the subsample size smaller, the base method will receive less information and therefore will get closer to random guessing; in other words, we expect that
for larger $L$, base selection probabilities $\pdl$ are all pulled together closer to 
the value $q_L /D$ (corresponding random selection). 
Even assuming the ranking of the values of $\pdl$ is unchanged for different
values of $L$, the set $A_{\theta,L}^c$ of relevant variables at base selection probability
larger than $\theta$ will contain a smaller number of covariates for larger $L$ and
fixed $\theta>q_L /D$. To maintain the same number of covariates in this relevant set, one has to consider a lower value of $\theta$ for larger $L$. In other words, as $L$ increases,
there is a trade-off between variance reduction of the selection frequencies $\Pi_L^{SFS}(d)$ as quantified by Theorem \ref{th:sampleSplitting}, and the reduced separation of their means $\pdl$, both of which are important for successful discrimination of relevant covariates.  
A similar effect occurs in $L$-fold cross-validation as when $L$ grows the number of test sets available increases, but the size of each test set decreases. This trade-off has been discussed extensively, see for instance \citet[Section 10.3]{arlot2010survey}.


{\em Illustration of the bounds.} In Figures \ref{fig:bound} and \ref{fig:tauQ} we illustrate the bound given by Corollary \ref{co:sampleSplitting}.
On Figure~\ref{fig:bound}, we show the dependence of the bound on $\tau$ for $D=1000$, $q_L = 28$ and $L\in \{2,4,8\}$.
We see that, for a fixed error level guaranteed by the bound, for larger $L$ we can choose a smaller frequency selection threshold $\tau$ and therefore
possibly select more covariates (while keeping in mind the tradeoff effect
discussed in Remark~\ref{rem1}).
Moreover, we see that for error control ensuring a very low level of
false positives, the bound for $L=2$ is not applicable, while the bound for larger $L$ is. 

\begin{figure}[b]
  \centerline{\includegraphics[width=0.4\textwidth]{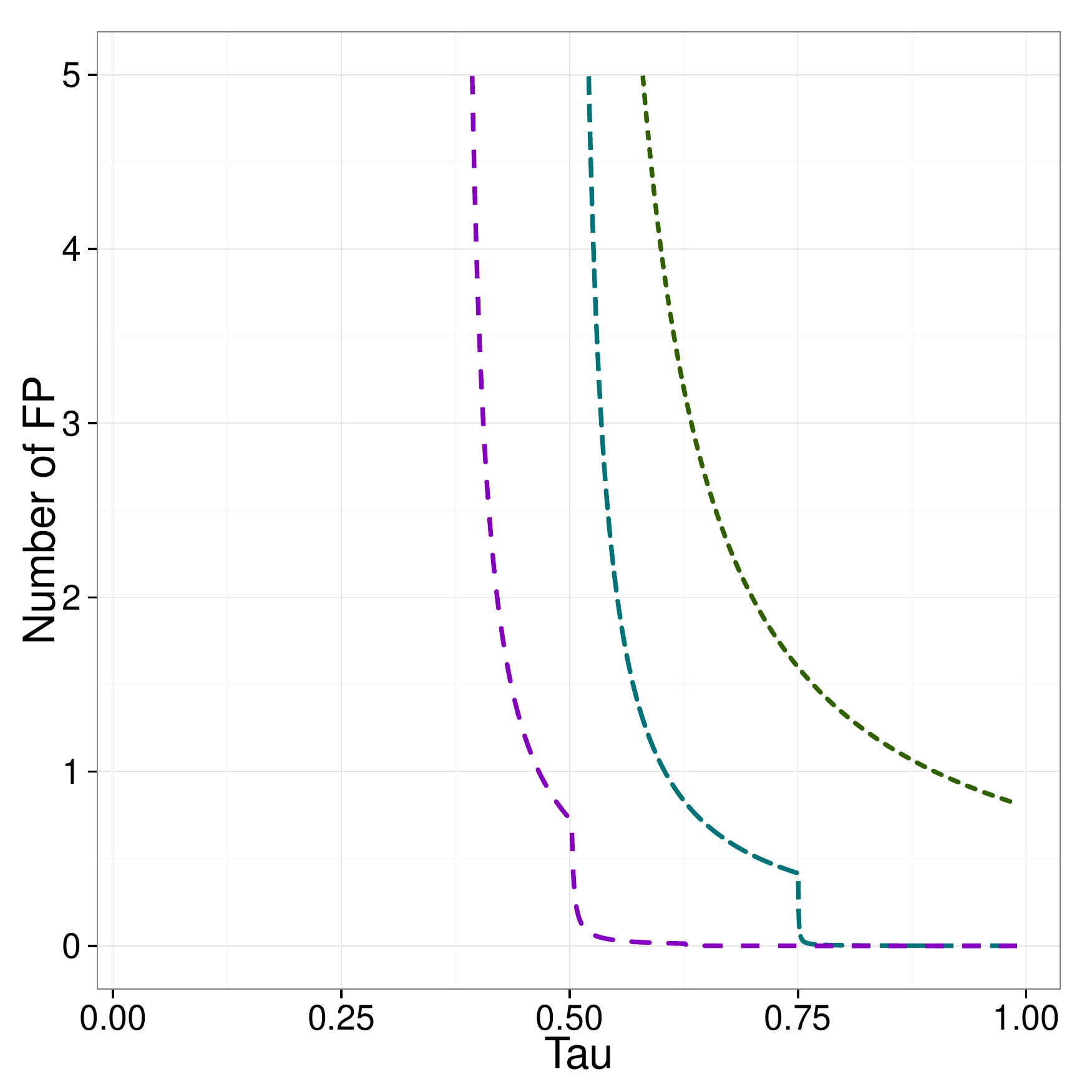}\includegraphics[height=6cm]{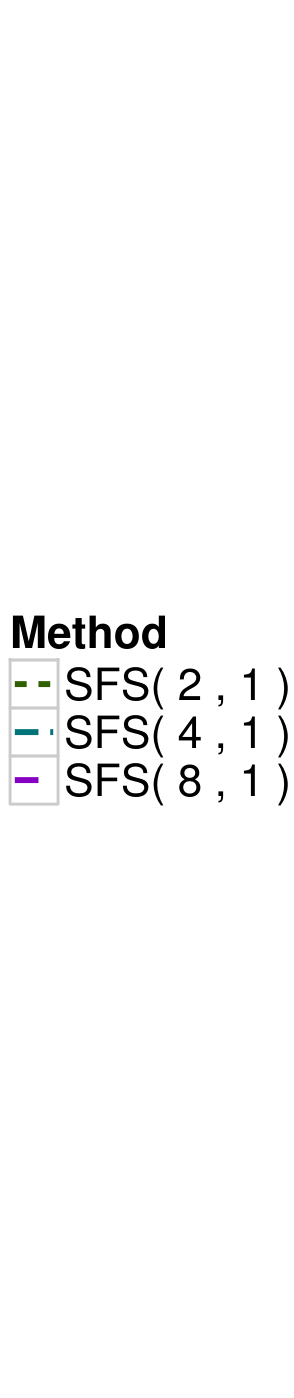}}
  \caption{Dependence of the bound of Corollary \ref{co:sampleSplitting} on $\tau$ for various $L$, $q_L=28$}
  \label{fig:bound}
\end{figure}

In Figure~\ref{fig:tauQ}, we fix the expected number of false positives that we can tolerate, and use for each $L\in \{2,4,8\}$ and each $q_L \in \{1,..,100\}$ Corollary \ref{co:sampleSplitting} to determine the 
smallest selection frequency $\tau_{\min}$ that guarantees this bound. On this figure, we
enforce at most one false positive on average, results for two and five FPs are given in the supplementary material. Reported is the behaviour of
$\tau_{\min}$ as a function $q_L$, the number of covariates selected by the baseline. Note that for standard stability selection ($L=2$) the bound in Corollary \ref{co:sampleSplitting} (which we recall coincides with the result of \citealt{meinshausen2010stability})
does not achieve the value one for any $q_L \ge 32$.

\begin{figure}[t]
  \centerline{\includegraphics[width=0.4\textwidth]{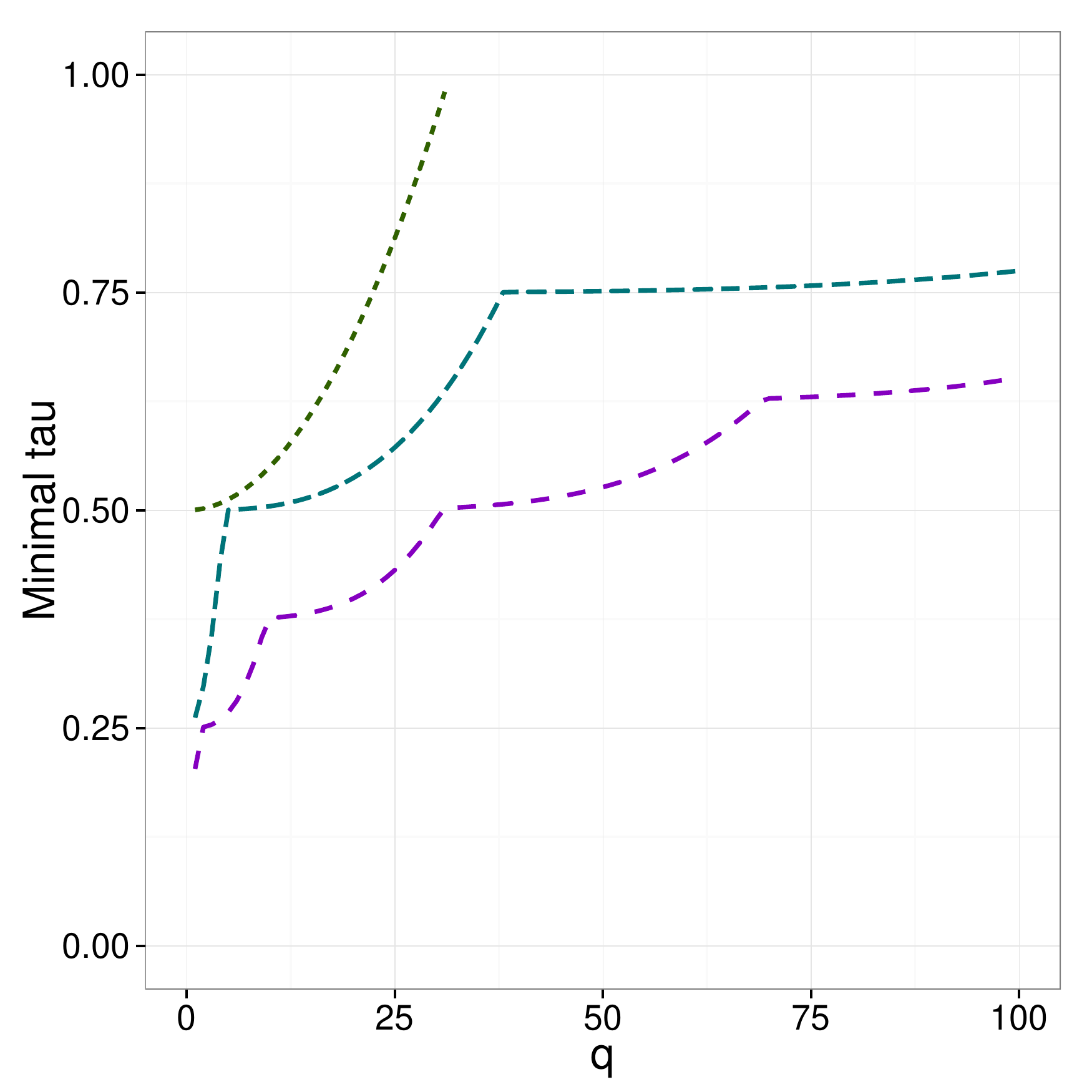}\includegraphics[height=6cm]{legendFp}}
  \caption{The minimal value of $\tau$ such that the bound in Corollary \ref{co:sampleSplitting} is below 1, as a function of $q_L$, for various $L$.}
  \label{fig:tauQ}
\end{figure}


An automatic choice of $L$ is a thorny theoretical question and ultimately depends on the intended goal. 
If the goal is to select truly informative covariates with a strict control of the number of false positives,
the above theoretical bounds can provide guidance; in Section~\ref{se:artif}, we investigate their
practical relevance to drive the choice of $L$ based on a given $q_L$ and a target bound on
the average number of false positives. If the goal is to improve prediction accuracy,
we recommend indirect assessment methods such as  cross-validation. If some constraints due to computing scalability
are present, we recommend choosing $L$ according to possible memory constraints or available parallel computing capabilities.
\end{remark}

\subsection{Randomization by taking covariate subsets}
\label{se:subcov}
It is not possible to study the effect of taking random covariate subsets on the selection probabilities in as much generality as we studied the effect of observation subsampling in the previous section. The reason is that this effect depends prominently on the specific base procedure used (see Section \ref{sec:basemethods}). In this section, 
we assume an iterative, score-based base selection procedure (such as CMIM, see Section \ref{sec:CMIM}). 
We further limit the analysis and zoom in on a single iteration of this procedure (thus considering only the
selection of 1 covariate) using a strongly simplified modelling of covariate scores.
Although the result of further iterations generally strongly depends on previous ones,
the prototypical model considered here for one iteration already highlights some interesting
behaviour of the covariate subsampling procedure.

The simplified model is as follows. We assume that each covariate has an underlying score $Q_d$ reflecting its true informativeness; only an estimation $\wh{Q}_d$ of  that score is available, which we assume to follow the simple additive model
\begin{equation}
  \hat{Q}_d = Q_d + \eps_d,  \qquad \qquad d = 1, \dots, D\label{eq:Q_hat}.
\end{equation}

In Section \ref{se:subobs} we used the probability $\pdl$ to be selected by the base method as a yardstick to tell signal from noise covariates. In that case, the quantity $\pdl$ is an example for a score $Q_d$.

We expect that the amplitude of the random estimation noise $\eps$ will typically depend on the size of the observations subsample and thus on the parameter $L$. However in this section we consider $L$ as fixed and therefore omit it from the notation from now on.

The base procedure then outputs the covariate with the largest estimated score.
We want to compare this base procedure to the {\em randomized base procedure} consisting in first picking at random a subset of $D' < D$ covariates, and returning out of those the one with the largest estimated score.

Similarly to what was considered in the previous section, we define uninformative covariates as those having true scores below a certain threshold $\theta$:
\begin{align}
  A_{D,\theta}:=\set{ 1\leq d \leq D: Q_d \leq \theta};\label{eq:A_D}
\end{align}
we also denote $A_{D,\theta}^c:=\set{1,\ldots,D}\setminus A_{D,\theta}$.
Denote $p(d)$ and $p^{rand}(d)$ the probability of selecting covariate $d$ using the deterministic and the randomized base procedure respectively.
It is desirable for these to be as large as possible for $d\in A^c_{D,\theta}$, so that we compare the two base procedures by means of the sum of these probabilities, i.e. , for the deterministic base procedure, $\sum_{d \in A_{D,\theta}^c} p(d)=\prob{\wh{d}_D \in A_{D,\theta}^c}$, where $\wh{d}_{D}$ denotes the index of the covariate selected by the deterministic base procedure.

In the following theorem we analyse the behaviour of the latter quantity as $D$ grows. The main theoretical finding of this section is that, under certain circumstances concerning the distribution of the estimation noise, in an asymptotic sense $\hat{d}_D$ will be determined only by the estimation error, and not by the true score. In other words, if $D$ grows too large, the deterministic selection resembles picking at random.

This therefore supports the principle of randomizing the base procedure by taking subsets of  covariates in stability selection, since when the total number of covariates $D$ is large enough, the  probability of correct selection will be higher when taking  a random covariate subset of size $D'<D$. This phenomenon is illustrated by
a small simulation example at the end of the present section.

\begin{theorem}\label{th:frechet}
Consider a sequence of models of the form \eqref{eq:Q_hat}, a fixed number $\theta$, and
the following assumptions: 
\begin{itemize}
\item[(i)] The true scores $Q_d$ belong to the bounded interval $[0,M]$.
\item[(ii)] The noise variables $\eps_i$ are independent and identically distributed,
and their marginal distribution belongs, for some $\alpha>0$, to the maximal domain of attraction (MDA) of
a $\fre(\alpha)$ distribution (for details see Chapter 3.2 and 3.3 in \cite{embrechts1997modelling}).
\item[(iii)]  As $D\rightarrow \infty$, $\frac{\abs{A_{D,\theta}}}{D} \rightarrow \eta \in [0,1]$, where
$A_{D,\theta}$ is defined by \eqref{eq:A_D}.
\end{itemize}

Then for $\hat{d}_D = \argmax_{d = 1, \dots, D} \hat{Q}_d$, it holds that 
  \begin{align*}
\lim_{D\rightarrow \infty}    \mathbb{P}\left[ \hat{d}_D \in A_{D,\theta} \right]   \rightarrow \eta.
  \end{align*}
\end{theorem}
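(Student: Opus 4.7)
The guiding picture is that Fréchet-MDA noise has such heavy right tails that, on the appropriate scale, the bounded scores $Q_d\in[0,M]$ are negligible compared to $\max_d \eps_d$; the deterministic argmax $\hat d_D$ should therefore coincide with the noise argmax $I_D:=\argmax_{d\le D}\eps_d$ with overwhelming probability. Since the noise is i.i.d.\ from a continuous Fréchet-MDA distribution, $I_D$ is uniformly distributed on $\{1,\ldots,D\}$, so $\prob{I_D\in A_{D,\theta}}=\abs{A_{D,\theta}}/D\to\eta$ by assumption~(iii). It therefore suffices to prove $\prob{\hat d_D\ne I_D}\to 0$, because then
\[
\abs{\prob{\hat d_D\in A_{D,\theta}}-\prob{I_D\in A_{D,\theta}}} \le \prob{\hat d_D\ne I_D}\to 0.
\]

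The core estimate is that the gap between the top two noise order statistics $\eps_{(1)}$ and $\eps_{(2)}$ tends to infinity in probability. By standard Fréchet-MDA theory (Embrechts et al.\ 1997, Ch.~3.2--3.3), there exist normalizing constants $a_D\to\infty$ such that the rescaled point process $\{\eps_d/a_D : d\le D\}$ converges in distribution to a Poisson point process on $(0,\infty)$ with intensity $\alpha y^{-\alpha-1}\,dy$. In particular, $(\eps_{(1)}/a_D,\eps_{(2)}/a_D)$ converges jointly to a pair of random variables $(Y_1,Y_2)$ with $Y_1>Y_2>0$ almost surely, so $(\eps_{(1)}-\eps_{(2)})/a_D$ converges in distribution to a strictly positive limit. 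Since $a_D\to\infty$, this yields $\prob{\eps_{(1)}-\eps_{(2)}>M}\to 1$ for the fixed constant $M$.

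On the event $\{\eps_{(1)}-\eps_{(2)}>M\}$, for every $i\ne I_D$,
\[
\hat Q_{I_D}-\hat Q_i = (\eps_{I_D}-\eps_i) + (Q_{I_D}-Q_i) \ge (\eps_{(1)}-\eps_{(2)}) - M > 0,
\]
using $\eps_{I_D}-\eps_i \ge \eps_{(1)}-\eps_{(2)}$ and $Q_{I_D}-Q_i \ge -M$. Hence $\hat d_D=I_D$ on this event, giving $\prob{\hat d_D=I_D}\to 1$, which closes the argument.

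The main obstacle is the extreme-value input: one must justify that the joint limit of the rescaled top two order statistics has strictly separated atoms almost surely, rather than invoke only the univariate extremal convergence $\eps_{(1)}/a_D$ to a Fréchet law. Once this Poisson-process picture for the top order statistics is in hand, the remainder is an elementary decoupling of noise and bounded signal in the additive model \eqref{eq:Q_hat}.
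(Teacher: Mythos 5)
Your proof is correct, and it takes a genuinely different route from the paper's. The paper sandwiches $\prob{\hat d_D \in A_{D,\theta}}$ between matching bounds: an upper bound obtained by replacing each $Q_d$ by $\theta$ (reducing to the noise argmax), and a lower bound obtained by bounding $Q_d$ by $0$ on $A_{D,\theta}$ and by $M$ on $A_{D,\theta}^c$, then applying the univariate Fr\'echet limit separately to $\max_{d\in A_{D,\theta}}\eps_d$ and $\max_{d\in A_{D,\theta}^c}\eps_d$, combining via independence and Slutsky's theorem, and identifying the resulting limit $\prob{F-\gamma^{1/\alpha}F'>0}$ as $\eta$ only indirectly (by rerunning the argument without the $M$ term); the boundary cases $\eta=0$ and $\eta=1$ require separate treatment. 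You instead prove the stronger statement $\prob{\hat d_D=\argmax_{d}\eps_d}\to 1$ via divergence of the spacing $\eps_{(1)}-\eps_{(2)}$, which rests on the joint limit of the two upper order statistics (Theorem 4.2.8 of Embrechts et al.\ --- a result the paper itself invokes only in a later remark about selecting $k$ covariates) rather than on the univariate max limit alone; your deduction that $(\eps_{(1)}-\eps_{(2)})/a_D$ has a strictly positive limit and $a_D\to\infty$ forces $\prob{\eps_{(1)}-\eps_{(2)}>M}\to1$ is sound. What your route buys: the extreme-value input is isolated in one clean lemma, assumption (iii) enters only in the final line, no case distinction on $\eta$ is needed, and the conclusion holds for an arbitrary index set in place of $A_{D,\theta}$. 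What it costs: a marginally heavier extreme-value ingredient, and --- exactly as in the paper's own proof --- the implicit assumption that the noise law is atomless, so that the argmax of the $\eps_d$ is a.s.\ unique and uniformly distributed on $\set{1,\ldots,D}$.
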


{\em Interpretation of the theorem.} Consider for comparison the blind strategy of simply
drawing a covariate uniformly at random among $D$, regardless of observed scores.
Then the probability for this covariate  
to lie in $A_{D,\theta}$ is obviously $\abs{A_{D,\theta}} / D.$ 
Thus, the theorem states that as $D$ grows, the strategy
consisting in picking the largest observed scores will asymptotically 
not be any better than the blind
strategy.

{\em Comments on assumption (iii).} This assumption concerns the asymptotic repartition of the true scores $Q_d, \linebreak 1\leq d \leq D$,
as $D$ grows. It is quite weak, and allows for the family of true scores to depend on $D$, provided this assumption remains satisfied. 
In particular, we can apply the theorem if the true scores are themselves random, and such that the assumption is satisfied a.s.  
In that situation, the theorem can be applied conditionally to the true score sequence and the conclusion will hold (for conditional incorrect selection probabilities) 
for almost all realizations of this sequence, and therefore also
in expectation over the scores (i.e. for unconditional probabilities).

A simple and intuitive instance of the above is when  
the true scores $Q_i$ are themselves assumed to be i.i.d. random draws following some
(arbitrary) distribution on the interval $[0, M]$. In that
case, assumption (iii) is satisfied a.s. with $\eta = \prob{Q_1 \leq \theta}$
by the law of large numbers.
Additionally,  if the true scores are modelled as
random i.i.d., the randomized procedure is equivalent to the base
procedure with $D$ simply replaced by $D'$.
In this sense, the conclusion of the theorem applies to the randomized selection, as well.
Denoting $E_{D}:=\prob{\wh{d}_{D} \in A_{D,\theta}}$ in this setting, we have clearly 
$E_1=\prob{Q_1\leq \theta}:= \eta$ 
as well as $\lim_{D\rightarrow \infty} E_D=\eta$ by the theorem. On the other hand, it is easy to see that $E_D>\eta$ for any $D>1$ (for any $D>1$, selecting the covariate with largest observed score must be at least slightly better than random guessing).
We conclude that $E_D$ is not monotone in $D$, and that it must attain a minimum value for some finite $D_{opt}>1$. (In the simulations shown at the end of the section, we actually see that $E_D$ is unimodal.) The same applies to $\wh{d}^{rand}_{D'}$, and we conclude that if $D>D_{opt}$, then it brings an advantage to select covariates out of random subsets of size smaller $D'$ (the optimal size being $D'=D_{opt}$). To sum up the finding in an (apparently) paradoxical statement: as $D$ grows too large, the deterministic selection behaves more randomly than the randomized selection using $D'<D$.

{\em Comments on assumption (ii).} The independence assumption is needed to apply classical results of extreme value theory. It is arguably unrealistic, and made here in order to illustrate the phenomenon in the simplest setting possible. We note that some extreme value results are also available under weak dependence models \citep[Chapter 3]{leadbetter1983extremes}, so that this assumption might be relaxed somewhat, though this is out of the scope of the present work.

The assumption that the noise marginal distribution belongs to MDA(Fr\'echet)
is needed to apply classical results of extreme value theory. Without getting into the details, this assumption (roughly) means that the distribution is heavy-tailed. This family includes Cauchy, Student's t, Pareto, Burr and Loggamma distributed noise \citep[Table 3.4.2]{embrechts1997modelling}. This assumption is reasonable, for instance, if we consider that the estimated scores are based on $t$-statistics estimated from a limited number of observations. In contrast, the next result shows that the phenomenon described in Theorem \ref{th:frechet} does not occur for Gaussian distributed noise:

\begin{theorem}\label{th:gaussian}
Consider a sequence of models of the form \eqref{eq:Q_hat}, a fixed number $\theta$, and
the following assumptions: 
\begin{itemize}
\item[(i)] the noise variables $\eps_i$ are independent and identically distributed with normal distribution.
\item[(ii)] $\liminf_{D\rightarrow \infty} \frac{\abs{A^c_{D,\theta}}}{D} >0$, 
where $A_{D,\theta}$ is defined by Equation \eqref{eq:A_D}.
\end{itemize}
Then for any $\theta' < \theta$:
  \begin{align*}
\lim_{D\rightarrow \infty}    \mathbb{P}\left[ \hat{d}_D \in A_{D,\theta'} \right]   =0.
  \end{align*}
\end{theorem}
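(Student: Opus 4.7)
The plan is to exploit that $A^c_{D,\theta}$ and $A_{D,\theta'}$ are disjoint with a fixed positive gap $c := \theta - \theta' > 0$ between their true-score ranges, while under i.i.d.\ Gaussian noise the maxima of $\eps_d$ over any subset of $\{1,\ldots,D\}$ of size $\Theta(D)$ concentrate around the same deterministic value $\sqrt{2\log D}$ up to a vanishing additive term, so that the additive gap $c$ is eventually detectable. First, I set $M_D^+ := \max_{d \in A^c_{D,\theta}} \eps_d$ and $M_D^- := \max_{d \in A_{D,\theta'}} \eps_d$; by assumption (ii) $A^c_{D,\theta}$ is nonempty for large $D$, and if $A_{D,\theta'}$ is empty the claim is trivial, so I may assume both are nonempty. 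Since $\hat Q_d \geq \theta + \eps_d$ on $A^c_{D,\theta}$ and $\hat Q_d \leq \theta' + \eps_d$ on $A_{D,\theta'}$, on the event $\{M_D^- - M_D^+ < c\}$ the global maximum of $\hat Q_d$ is attained outside $A_{D,\theta'}$, hence $\hat d_D \notin A_{D,\theta'}$. It therefore suffices to show $\mathbb{P}(M_D^- - M_D^+ \geq c) \to 0$.

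Next I would establish that for every $\delta > 0$,
\begin{equation*}
\mathbb{P}\bigl(M_D^+ \leq \sqrt{2\log D} - \delta\bigr) + \mathbb{P}\bigl(M_D^- \geq \sqrt{2\log D} + \delta\bigr) \longrightarrow 0.
\end{equation*}
Writing $n_D^+ := |A^c_{D,\theta}| \geq (\eta/2) D$ eventually, with $\eta := \liminf_{D\to\infty}|A^c_{D,\theta}|/D > 0$, the lower-tail bound on $M_D^+$ follows from $\mathbb{P}(M_D^+ \leq t) = \Phi(t)^{n_D^+} \leq \exp\bigl(-n_D^+(1-\Phi(t))\bigr)$ combined with the Mills-ratio bound $1-\Phi(t) \geq \phi(t)/(2t)$ at $t = \sqrt{2\log D} - \delta$; using $\phi(\sqrt{2\log D} - \delta) = (D\sqrt{2\pi})^{-1}\,e^{\delta\sqrt{2\log D} - \delta^2/2}$, a short computation shows the exponent diverges to $+\infty$. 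The upper-tail bound on $M_D^- \leq \max_{d=1}^D \eps_d$ follows from the plain union bound $\mathbb{P}(\max_d \eps_d \geq t) \leq D\,\phi(t)/t$ at $t = \sqrt{2\log D} + \delta$.

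Choosing e.g.\ $\delta = c/3$ then yields $M_D^- - M_D^+ \leq 2\delta < c$ with probability tending to $1$, as required. The only mildly technical point is verifying that $n_D^+$ being merely $\Theta(D)$ rather than exactly $D$ does not spoil the concentration of $M_D^+$ at $\sqrt{2\log D}$; this reduces to the elementary estimate $\sqrt{2\log D} - \sqrt{2\log((\eta/2)D)} = O(1/\sqrt{\log D}) \to 0$, which is absorbed harmlessly into $\delta$. The contrast with Theorem~\ref{th:frechet} is then transparent: light Gaussian tails make the noise maxima on different linear-in-$D$ subsets differ only by a vanishing additive amount, so the fixed gap $c$ is eventually detectable; heavy Fr\'echet-type tails, in contrast, can produce extreme $\eps_d$ with $d \in A_{D,\theta}$ that dwarf the bounded true scores and destroy the signal.
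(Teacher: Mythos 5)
Your proof is correct, and it reaches the conclusion by a genuinely more elementary route than the paper. The initial reduction is the same in both arguments: since $Q_d>\theta$ on $A_{D,\theta}^c$ and $Q_d\leq\theta'$ on $A_{D,\theta'}$, the event $\{\hat d_D\in A_{D,\theta'}\}$ forces $\max_{d\in A_{D,\theta'}}\eps_d \geq \max_{d\in A_{D,\theta}^c}\eps_d + (\theta-\theta')$, so everything hinges on showing that two Gaussian maxima over sets of respective sizes $O(D)$ and $\Theta(D)$ cannot differ by the fixed gap $c=\theta-\theta'$ with non-vanishing probability. Where the paper invokes extreme value theory --- the Gumbel limit $(\max_{i\leq k}\zeta_i-b_k)/a_k\to\mathsf{Gumbel}$, a computation showing the normalized threshold $(b_{\ell_D}-b_{k'_D}+\Delta)/a_{\ell_D}$ diverges like $\Delta\sqrt{2\ln\ell_D}$, and Slutsky's theorem to pass to a difference of independent Gumbels --- you instead prove two-sided concentration of the maxima at $\sqrt{2\log D}$ by hand: the bound $\Phi(t)^{n}\leq\exp(-n(1-\Phi(t)))$ together with the Mills lower bound for the lower tail of $M_D^+$ (using $|A_{D,\theta}^c|\gtrsim\eta D$ from assumption (ii)), and a plain union bound with the Mills upper bound for the upper tail of $M_D^-\leq\max_{d\leq D}\eps_d$; your exponent computations check out, and taking $\delta=c/3$ closes the argument. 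Your version is self-contained (no MDA machinery, no Slutsky) and even yields non-asymptotic rates; the paper's version buys uniformity with the proof of Theorem~\ref{th:frechet}, making the contrast between the two tail regimes visible in a single framework (the normalizing scale $a_k\to0$ in the Gaussian case versus the growing scale $G(k)$ in the Fr\'echet case). Two cosmetic points you may wish to make explicit: the reduction to standard normal noise (centering cancels in all comparisons and the variance only rescales $c$), and the measure-zero treatment of ties in the argmax.
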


{\em Remark.} The above does not contradict existing results
on support recovery of Lasso (reducing to a simple score
thresholding procedure in the case of orthogonal design), which is of sample complexity
$O(\log(D))$ and thus eventually fails if the dimension $D$ grows too large, for a fixed sample size.
This is because such results study the exact recovery of all
    of the support covariates. By contrast, the result of Theorem 3 concerns the much weaker requirement of recovery (with probability close to 1) of one single informative covariate in a situation where the total proportion of informative covariates is non-vanishing.

We illustrate by a simulation study how the error probability $\mathbb{P}\left[\hat{d}_D \in A_{D,\theta} \right]$ depends on the distribution of the noise $\eps_d$. We simulated  10000 noisy scores according to Equation \eqref{eq:Q_hat} where $Q_d \sim 2*\mathrm{Bernoulli}(0.1)$ for each of the five noise distributions Gaussian, Cauchy, Student's t with 3, 5 and 10 degrees of freedom. We count how often the largest noisy score comes from an uninformative true score and plot this frequency against the total number of scores $D$ in Figure~\ref{fig:effect_D}. We see that for 
the Gaussian distribution, the probability to select a uninformative covariate is monotone decreasing to zero as $D$ increases. In contrast, for heavy tailed distributions such as Cauchy and Student's t, the error probability decreases until reaching a minimum and increases afterwards. 

\begin{figure}[b]
\centerline{\includegraphics[width=0.5\textwidth]{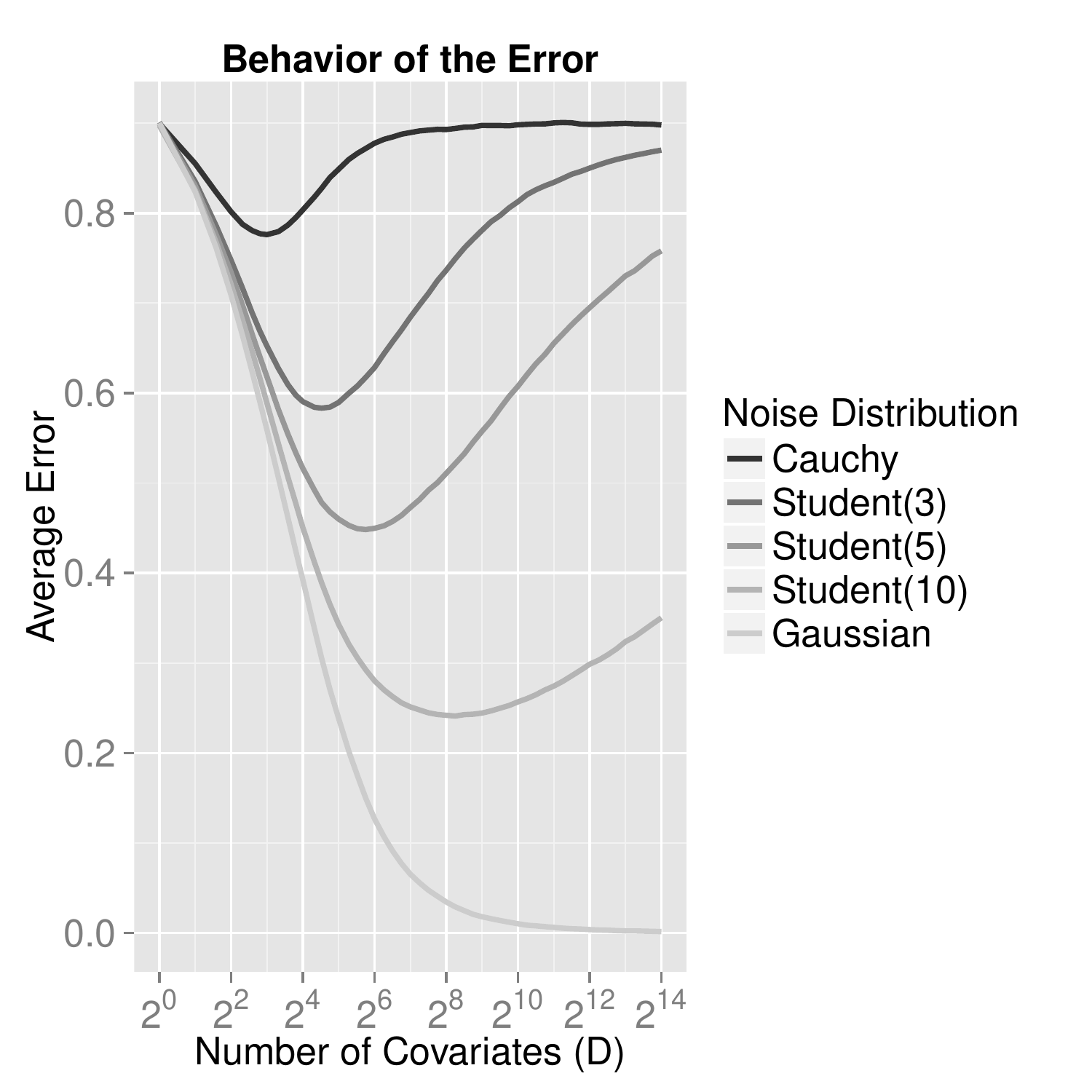}}
\caption{Dependence of the average error probability on $D$ for different noise distributions}
\label{fig:effect_D}
\end{figure}

We believe that the above results can be extended to the selection of $k$ covariates
$\left(\hat{d}^{(1)}_D,\ldots,\hat{d}_D^{(k)}\right)$ having the largest scores,
using results about the joint limit distribution of $k$ upper order statistics \citep[Theorem 4.2.8]{embrechts1997modelling}.
However, we emphasize that this does not appear to be a direct consequence of the above result,
since $\mathbb{P}\left[ \hat{d}^{(k)}_D \in A_{D,\theta} \right]$ is not necessarily monotone increasing in $k$.

\section{Experimental results}
In this section we evaluate the performance of our method using three different criteria and several datasets. For the first two criteria, we aim at the identification of informative covariates and consider the number of true and false positives selected in a controlled framework where the ground truth is known. In the third setting, we consider a real-data classification task and evaluate the effect of variable selection on the prediction performance of a learning algorithm using only the selected covariates.
\subsection{Identification of informative covariates}
\label{se:artif}

{\bf Setting.} 
For the first two criteria we generate the output variable of interest according to a known linear model (except for the Vitamin dataset):
\begin{equation}
  \label{linearModel}
  Y^{(n)}=\langle X^{(n)},\beta \rangle + \eps^{(n)}\quad \forall n=1,...,N\,,
\end{equation}
where $X^{(n)}$ and  $\beta$ are $D$-dimensional random vectors and $\eps^{(n)} \sim \mathcal{N}(0,1)$.

Additional experiments where $\eps^{(n)}$ follows a heavy-tailed Student(3) distribution are given in the supplemental material only, as the results are similar. The vector of coefficients $\beta$ contains only 20 non-zero entries. Their indices are chosen randomly and their values are generated from a $U[0,1]$ distribution. This setting is similar to the one considered
by \citet{meinshausen2010stability}.

We consider several different settings for the design matrix $X$, corresponding to 
controlled simulated situations or to real data. Except otherwise specified,
each experiment is performed for $N=500$ observations and $D=1000$ covariates.
\begin{itemize}
\item 4 Blocks: The covariates are divided into 4 blocks with correlation inside but not among the blocks. The covariates follow a multivariate normal distribution $\mathcal{N}_{D}(0,\Sigma)$, where $\Sigma_{i,j}=0.8*\ind{i=j\, \mathrm{ mod }\, 4}$
\item Toeplitz design: The correlation between two covariates is higher the closer their indices are. The covariates follow a multivariate Normal distribution $\mathcal{N}_{D}(0,\Sigma)$, with $\Sigma_{i,j}=0.99^{|i-j|}$
\item Toeplitz (grouped predictors): As Toeplitz design, but the indices of the informative covariates consist of 5 groups of 4 indices, each drawn uniformly in the interval $[100g-20, 100g+20]$ where g is the group number. Therefore the informative covariates exhibit a cluster structure.

\item 10 factors: Each covariate $X_d$ is generated as a linear combination of unknown latent factors \linebreak $X_d = \sum_{i=1}^{10} f_{d,i} \Phi_i + \nu_d \ \forall d=1,\ldots, D$
where the latent factors $\Phi_{i}$, and the noise $\nu_d$ follow a standard normal distribution.
The factor loading coefficients $f_{d,i}$ are fixed for any given realization of the dataset 
and are drawn beforehand from a $\mathcal{N}(0,1)$ distribution.
\item Correlated informative covariates, independent noise: The covariates follow a multivariate normal distribution $\mathcal{N}_{D}(0,\Sigma)$, with $\Sigma_{i,j}=0.9$ for all indices $i,j$ of informative
covariates and $\Sigma_{i,j}=0$ elsewhere.
\item Vitamin dataset: A gene expression dataset considered by \citet{meinshausen2010stability} (with $N=115$, $D=4088$.)
\end{itemize}

The only dataset that was not generated by the linear model is the Vitamin dataset. Following \citet{meinshausen2010stability} we choose 6 covariates that have high correlation with the target as signal covariates. We permute all other covariates across the samples to break its dependence with the target. The covariance structure of these noise covariates is kept as we use the same permutation for all covariates.

For all datasets except the Vitamin dataset we adjust the signal to noise ratio  $\Var(\langle X^{(n)},\beta \rangle) / \linebreak \Var(\eps^{(n)})=2$
for the first criterion and $8$ for the second and generate the vector $Y$ by the linear model given in Equation \eqref{linearModel}.

Finally, we denote an instance of the proposed method by $SFS(L,V)$ with $L,V$ as in Section \ref{sec:description}.

\subsubsection{Results with Precision@20 criterion}

{\bf Experimental protocol.} For the first criterion, we compare the number of truly informative covariates among the top $k=20$ selected, when ranked by selection frequency. This criterion is also known as precision@k in the information retrieval literature. We compare the proposed methods for each design with several choices of parameters, the usual stability selection and the Lasso as reference methods\footnote{ We used 
the R-package LARS \citep{hastie2012lars} as Lasso implementation.}. We consider the choices $SFS(L,1)$ for $L\in\set{2,4,8}$ to investigate the effect of the subsample size of observations, and $SFS(2,V)$ for $V\in \set{1,2,4,8}$ to investigate the effect of the amount of randomization of the base method by taking covariate subsets. Note that the $SFS(2,1)$ is the standard stability selection. 

We studied the performance of the different methods for a range of possible regularization parameters of the base method. More specifically, rather than comparing the methods for a grid of fixed
values of the regularization parameter $\lambda$ of Lasso, see Equation \eqref{eq:lasso}, for each realization
of the data, we used the values $\lambda_{q_L}$ such that exactly $q_L  \in \{1,\dots,100\}$ covariates are
selected by the base method. (In case several such values exist, the largest one is picked.)
This approach seemed more fair and in line with the analysis of Section \ref{se:anal}, in the sense
that the number of covariates selected by the baseline, $q_L =\e{\abs{S^{\mathrm{base}}_{L,V}}}$, is kept constant across choices of $(L,V)$. For each value of $q_L $, we report the average precision@k, i.e., how many
informative covariates are selected amongst the top $k=20$ covariates ranked by their
selection frequency. For standalone Lasso, we ranked the covariates by their estimated regression coefficients. The performance reported in Figure \ref{fig:resultsObservations} and \ref{fig:resultsCovariates} is an average over 100 repetitions of each experiment.

{\bf Results.} A first finding from Figure \ref{fig:resultsObservations} is that 
standard stability selection ($SFS(2,1)$) does not systematically outperform standalone Lasso, which appears to contradict the results of \citet{meinshausen2010stability}.
A first reason for this is that here are some substantial differences between our evaluation criterion
and theirs; this is discussed in more detail below.
Perhaps more importantly, we evaluate the performance of standalone Lasso differently.
\citet{meinshausen2010stability} consider a single
ranking of covariates by their selection order along the ``Lasso path'',
while we rank covariates by the
magnitude of their estimated Lasso coefficients (for each value of $q_L $ taken
separately).

We find that using our ranking allows standalone Lasso to recover informative covariates more successfully when it first selects a total number of covariates higher than the true actual number of informative ones,
then only keeps the ones with largest coefficient magnitude.
In other words, relevant covariates might at first not be included in the selection set along 
the Lasso path, but once they are, they tend to get estimated coefficients larger than those of noisy covariates.
As  a consequence, in the regime where $q_L $ is markedly larger than the true number of informative covariates, the success of stability
selection over standalone Lasso is not systematic and appears to depend on the
setting.

\begin{figure*}
  \includegraphics[height=0.29\textwidth]{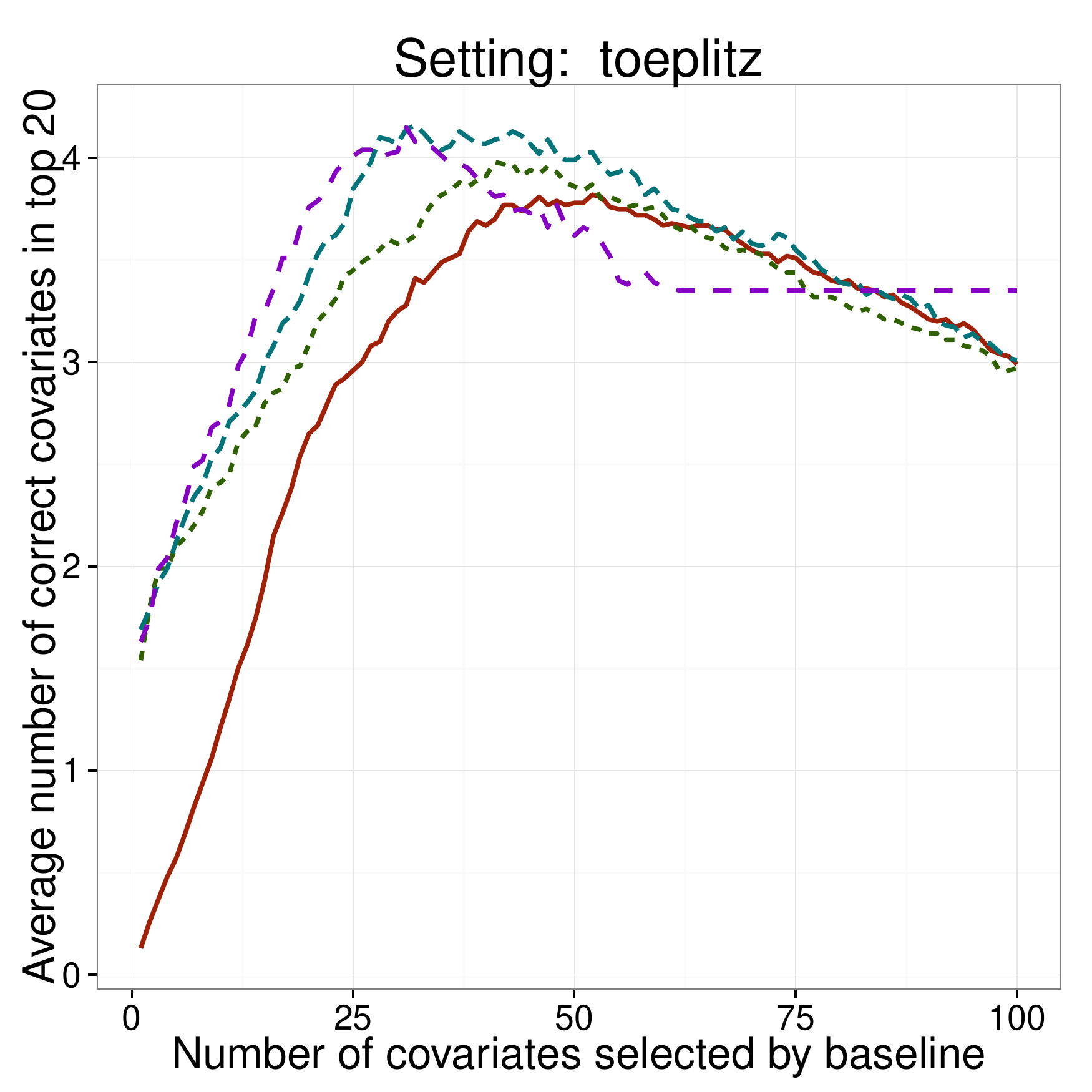}\nolinebreak
  \includegraphics[height=0.29\textwidth]{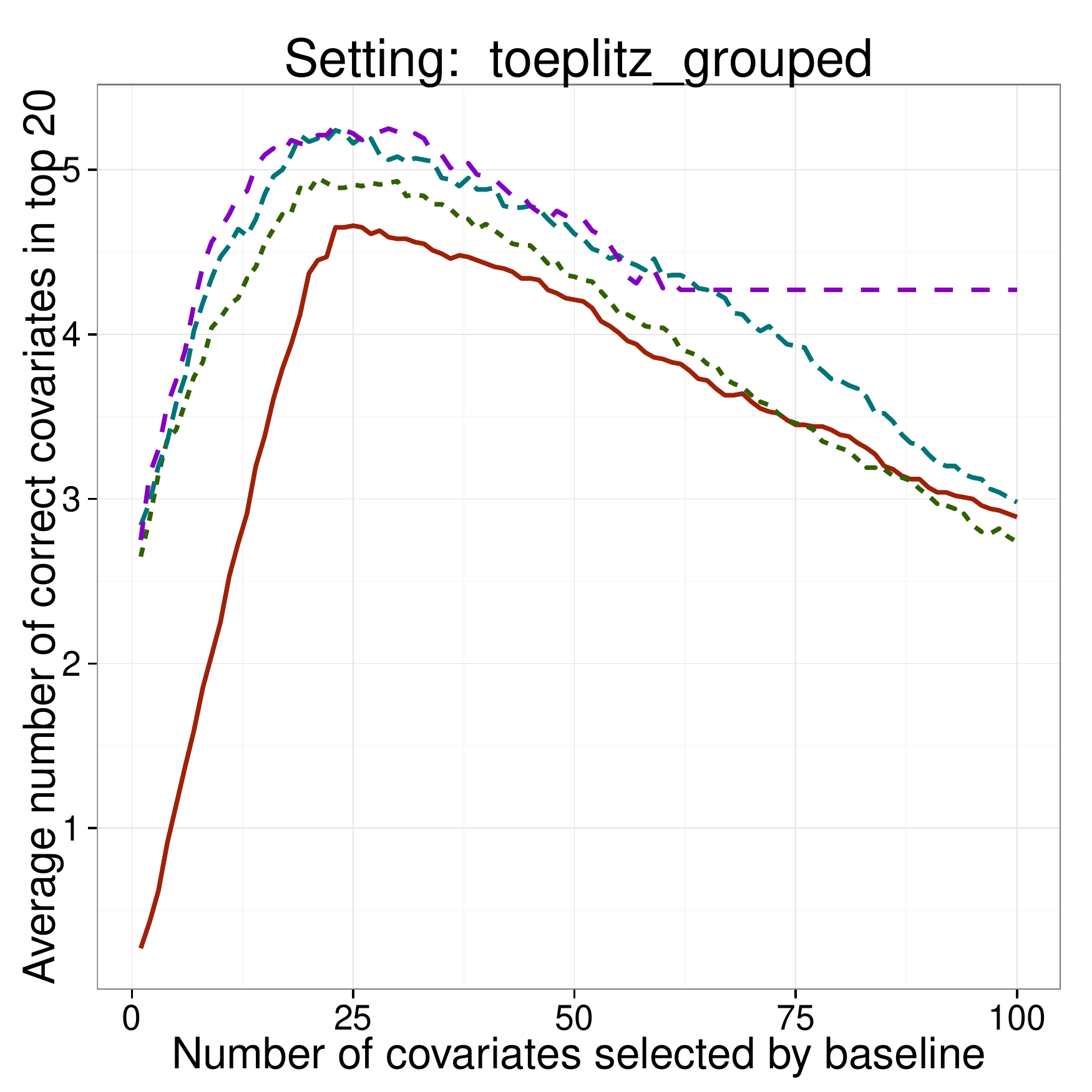}\nolinebreak
  \includegraphics[height=0.29\textwidth]{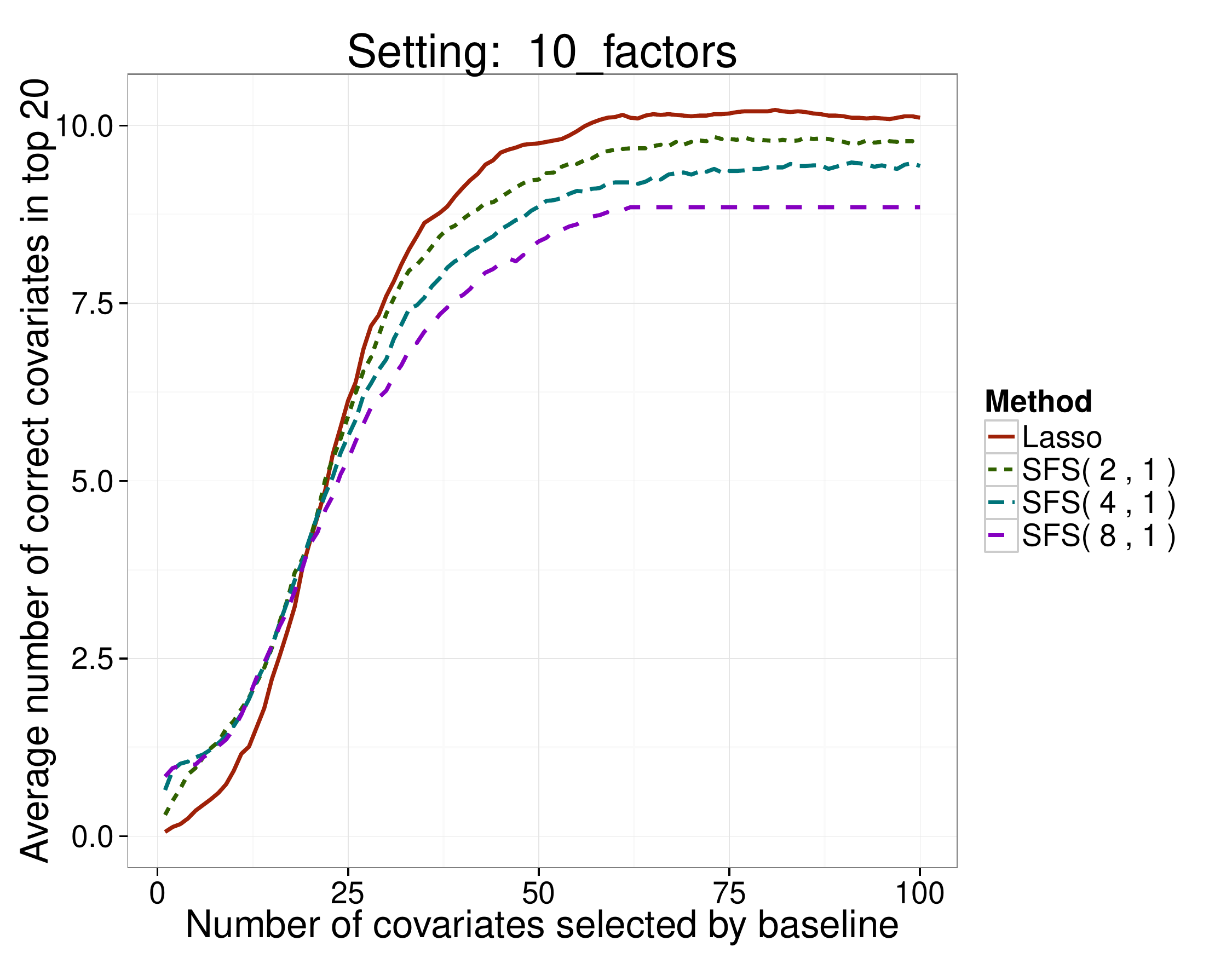}\\
  \includegraphics[height=0.29\textwidth]{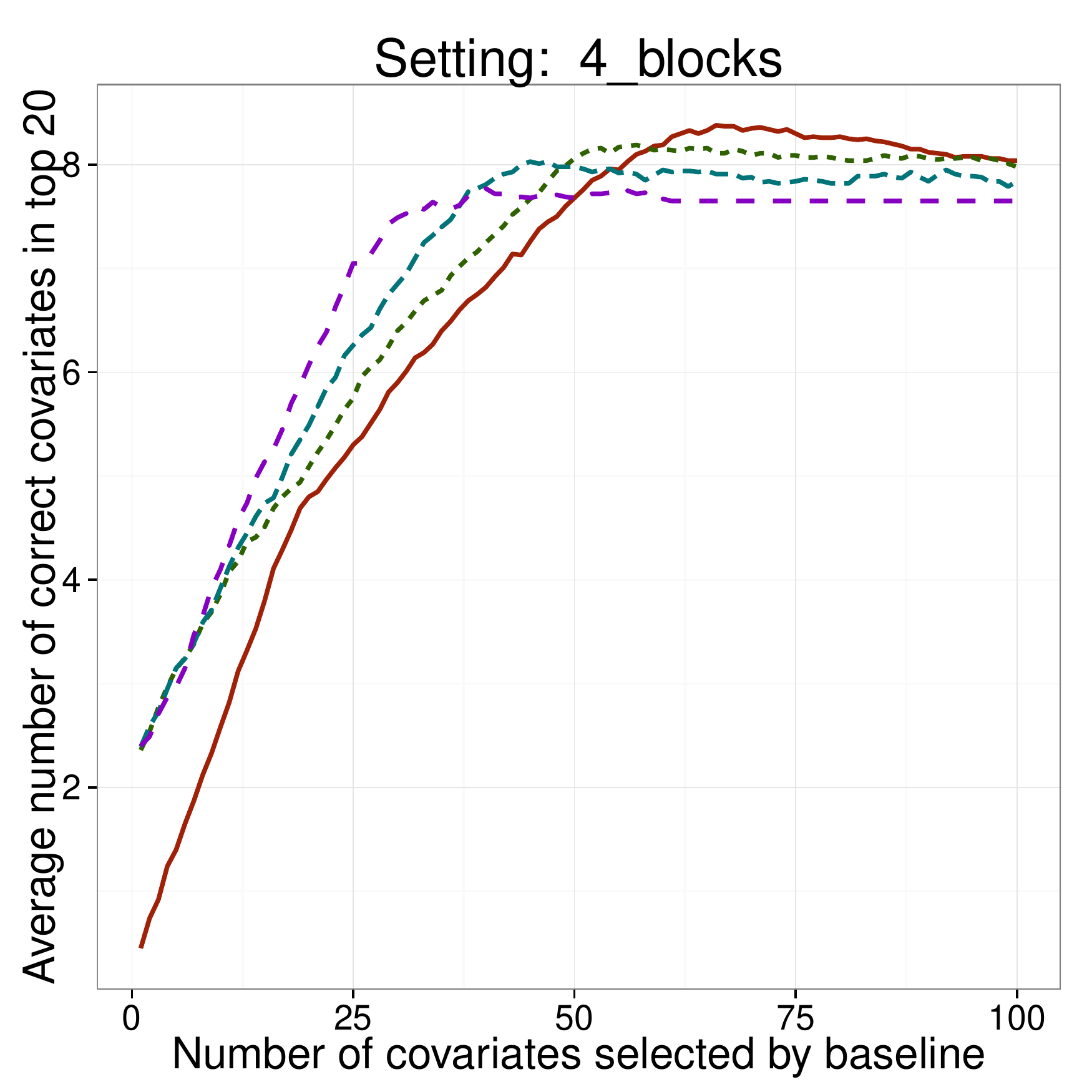}\nolinebreak
  \includegraphics[height=0.29\textwidth]{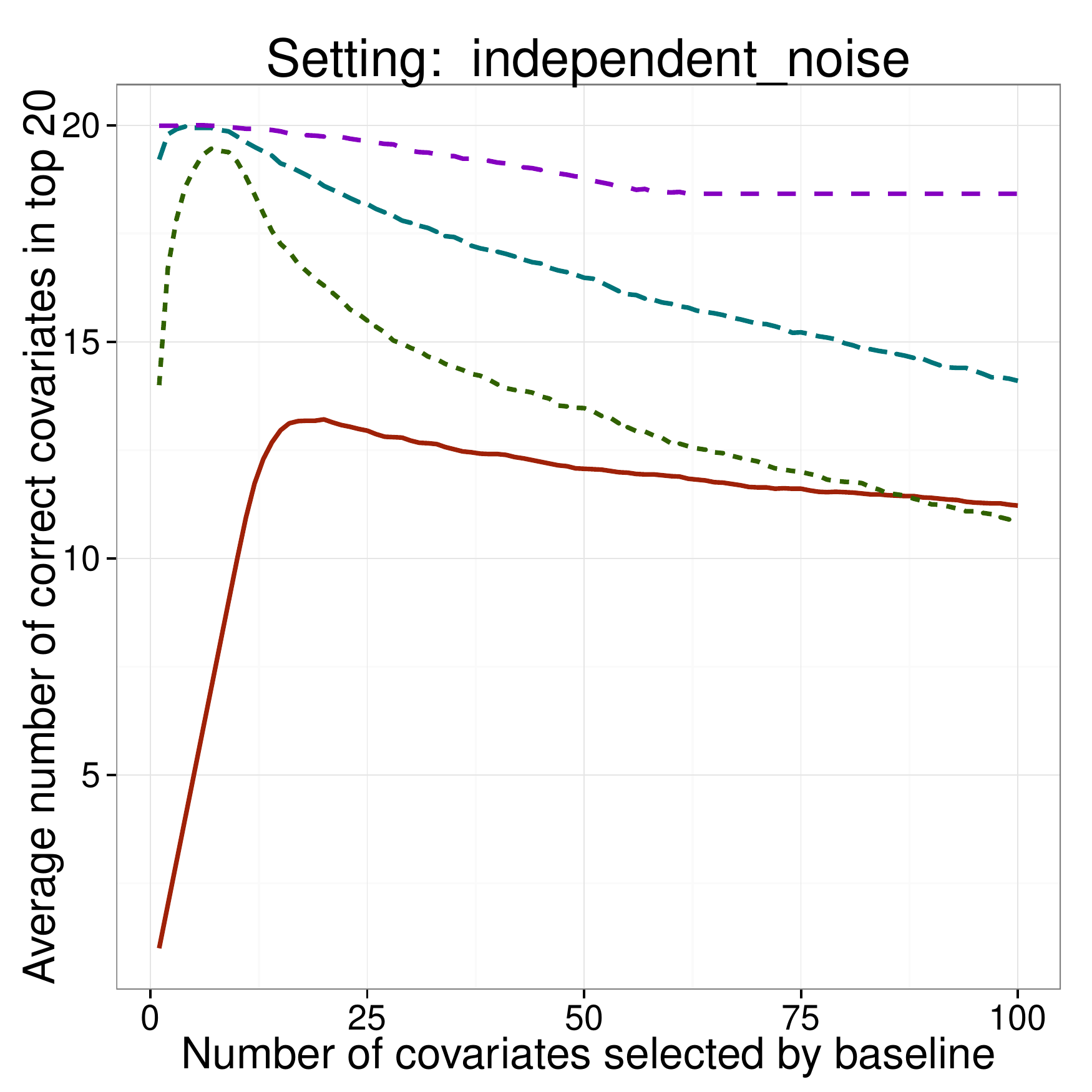}\nolinebreak
  \hspace{0.24cm}\includegraphics[height=0.29\textwidth,width=4.94cm]{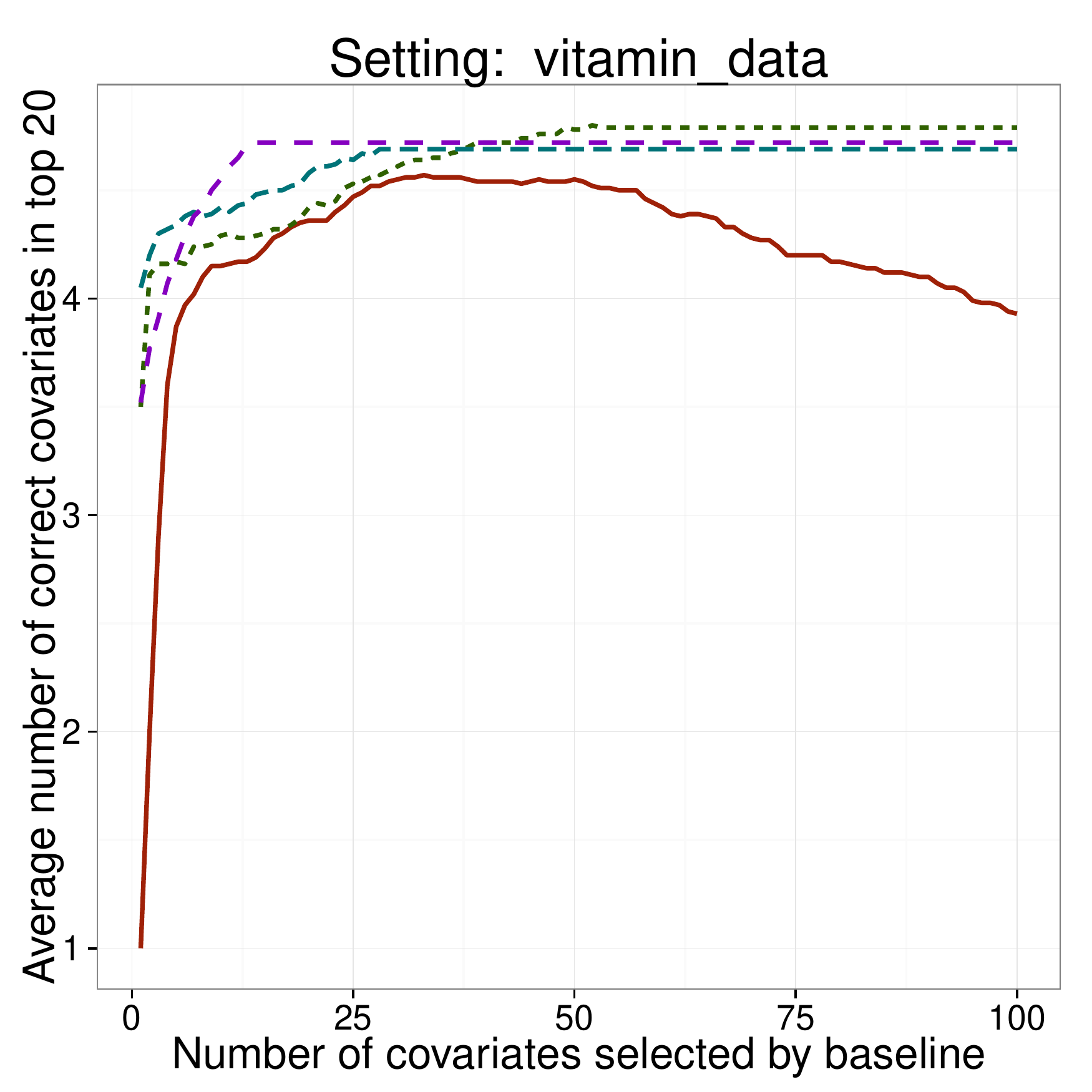}
  \caption{Comparison of plain Lasso and stability selection with varying numbers of disjoint observation subsamples (corresponding to line types). We plot the average number of informative covariates among the top 20 scored (for the Vitamin dataset: top 6), depending on the number of covariates selected by the base method, figures correspond to designs. Each dataset contains 20 signal and 980 noise covariates, except for the Vitamin dataset, which contains 6 signal and 4082 noise covariates.}
  \label{fig:resultsObservations}
\end{figure*}

\begin{figure*}
  \includegraphics[height=0.29\textwidth]{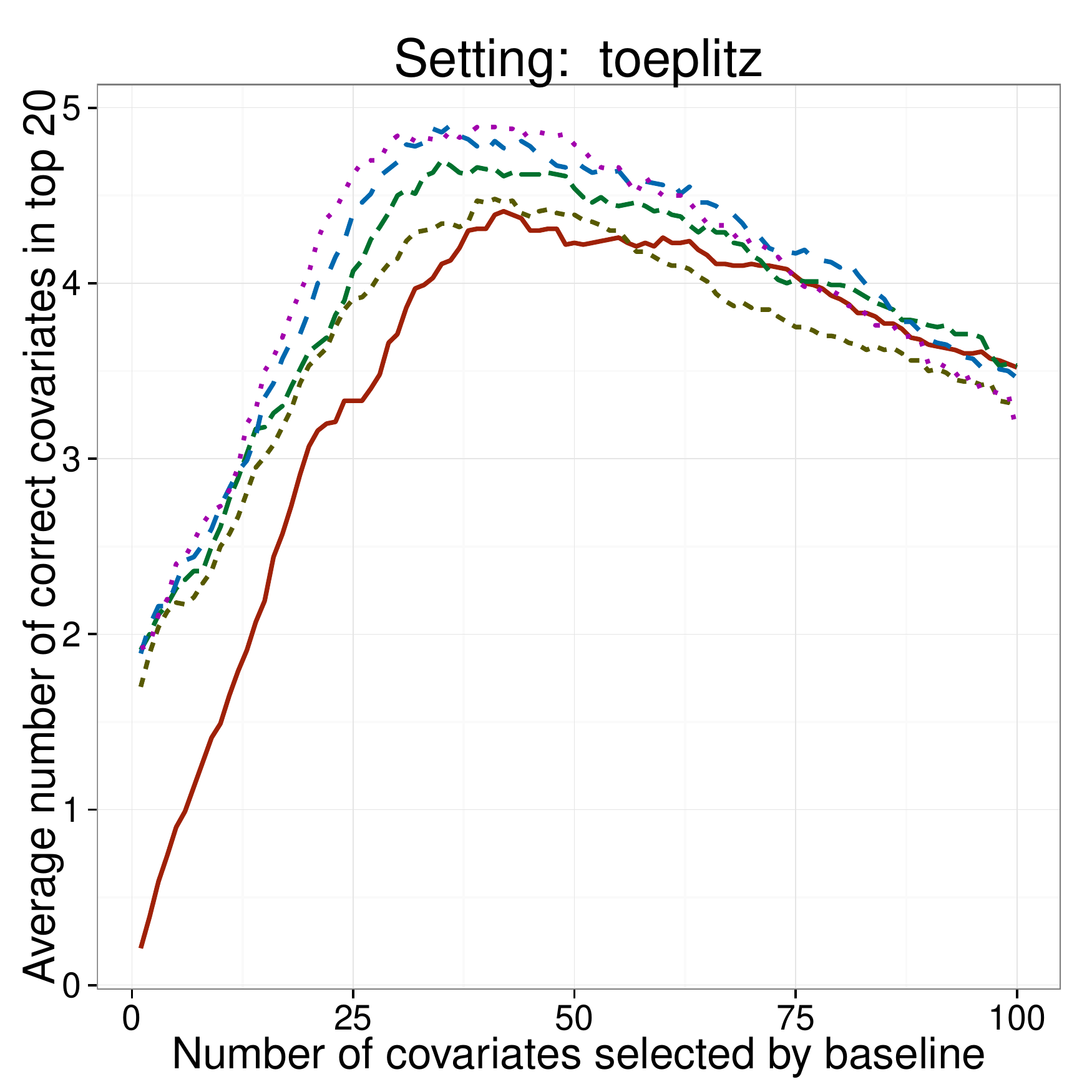}\nolinebreak
  \includegraphics[height=0.29\textwidth]{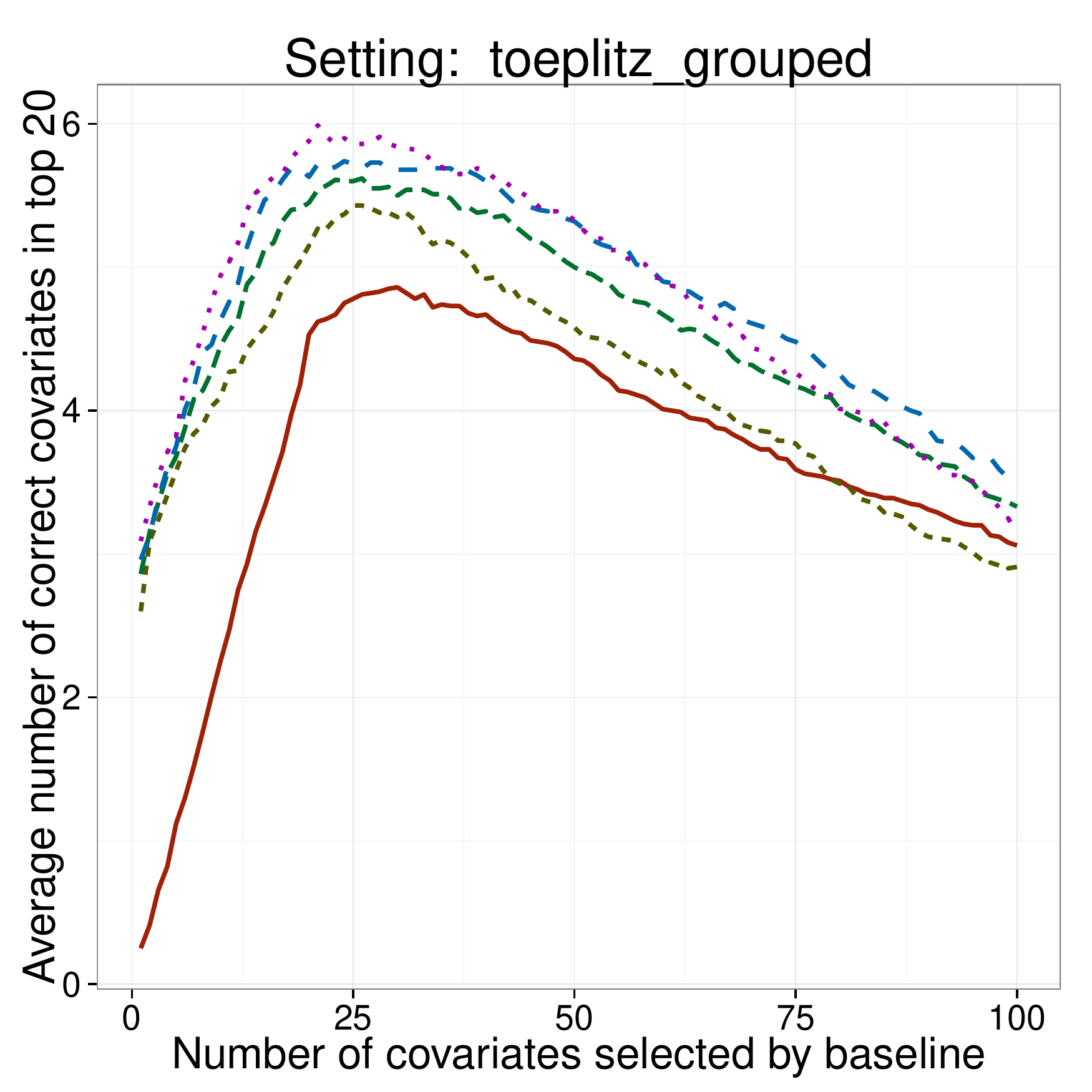}\nolinebreak
  \includegraphics[height=0.29\textwidth]{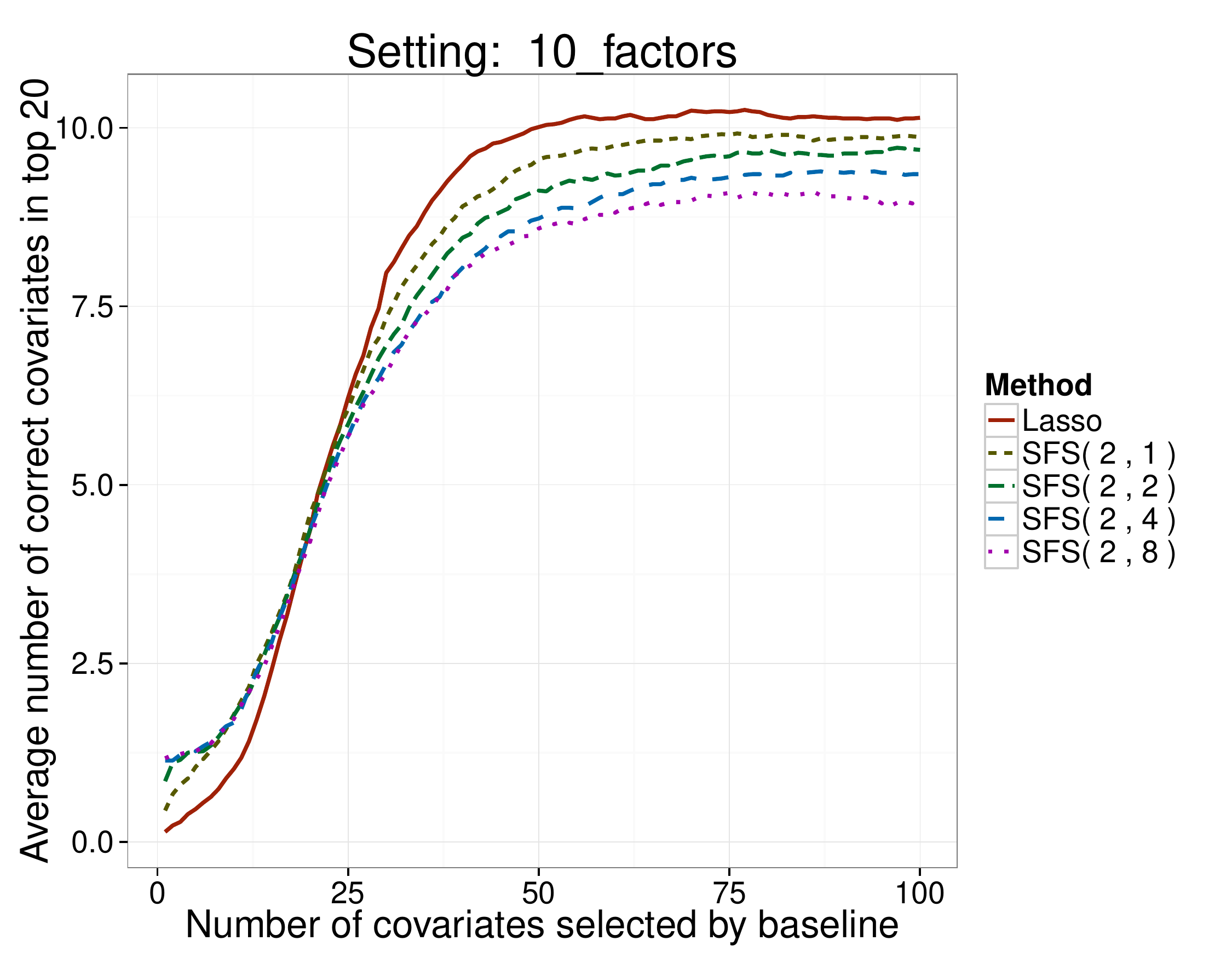}\\
  \includegraphics[height=0.29\textwidth]{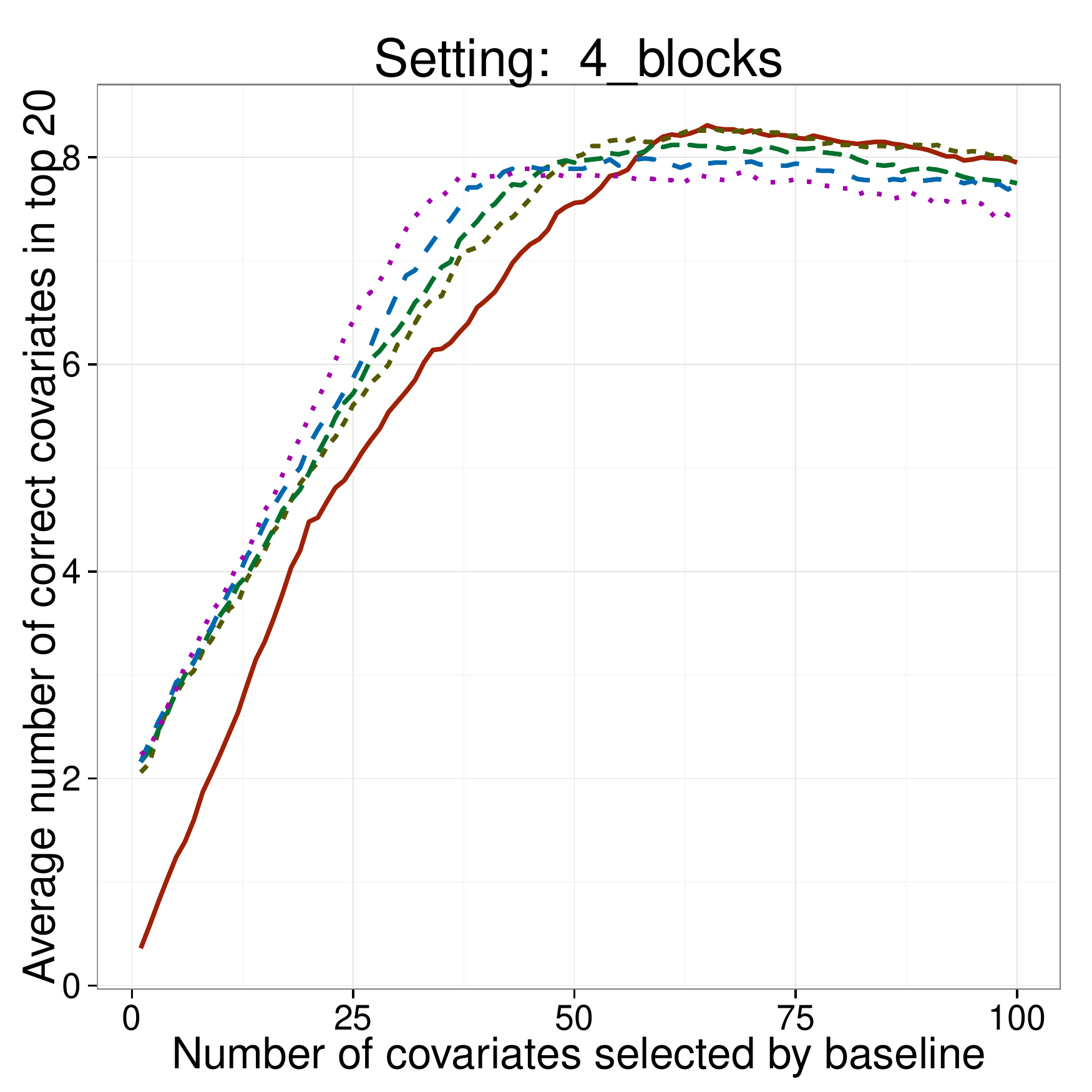}\nolinebreak
  \includegraphics[height=0.29\textwidth]{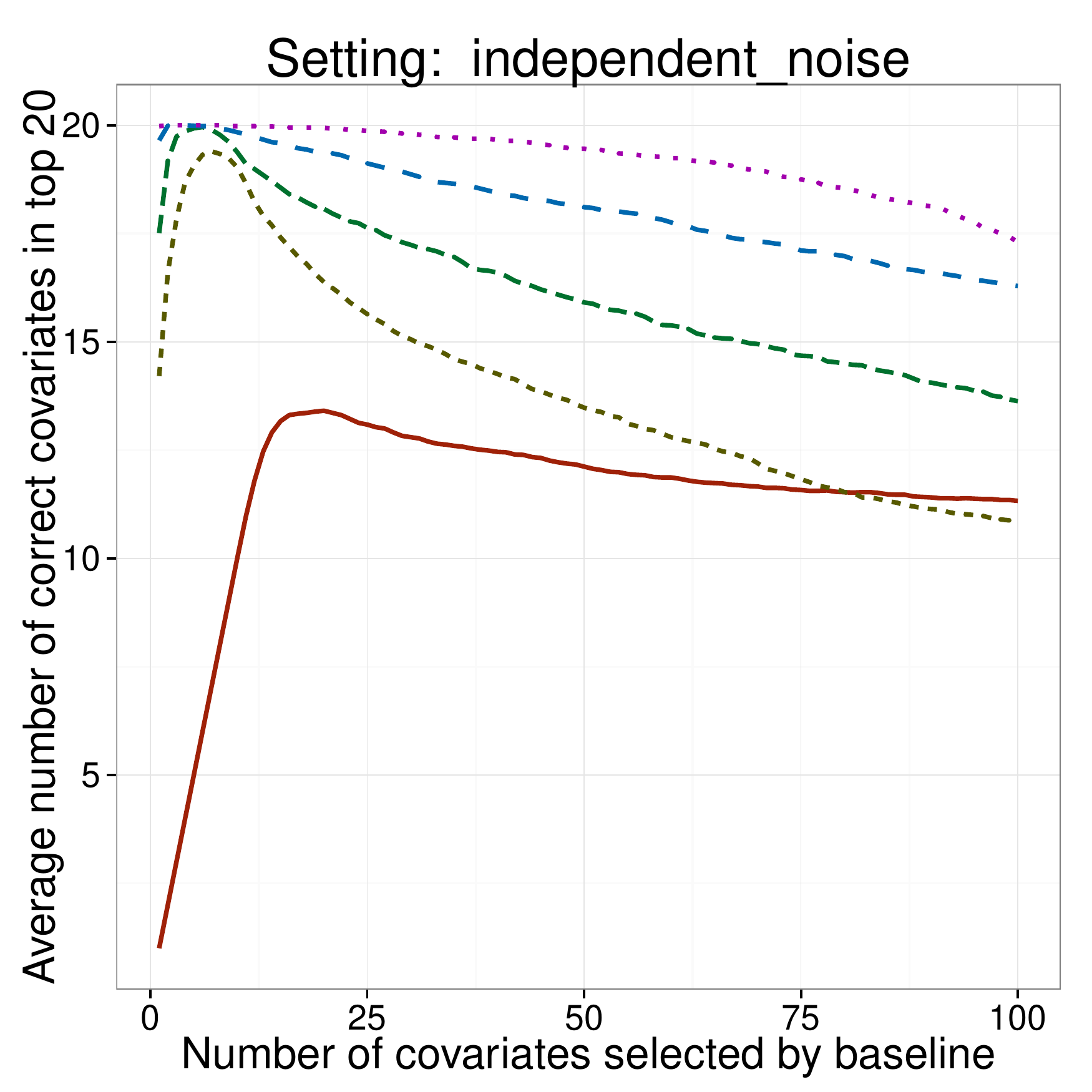}\nolinebreak
  \hspace{0.24cm}\includegraphics[height=0.29\textwidth,width=4.94cm]{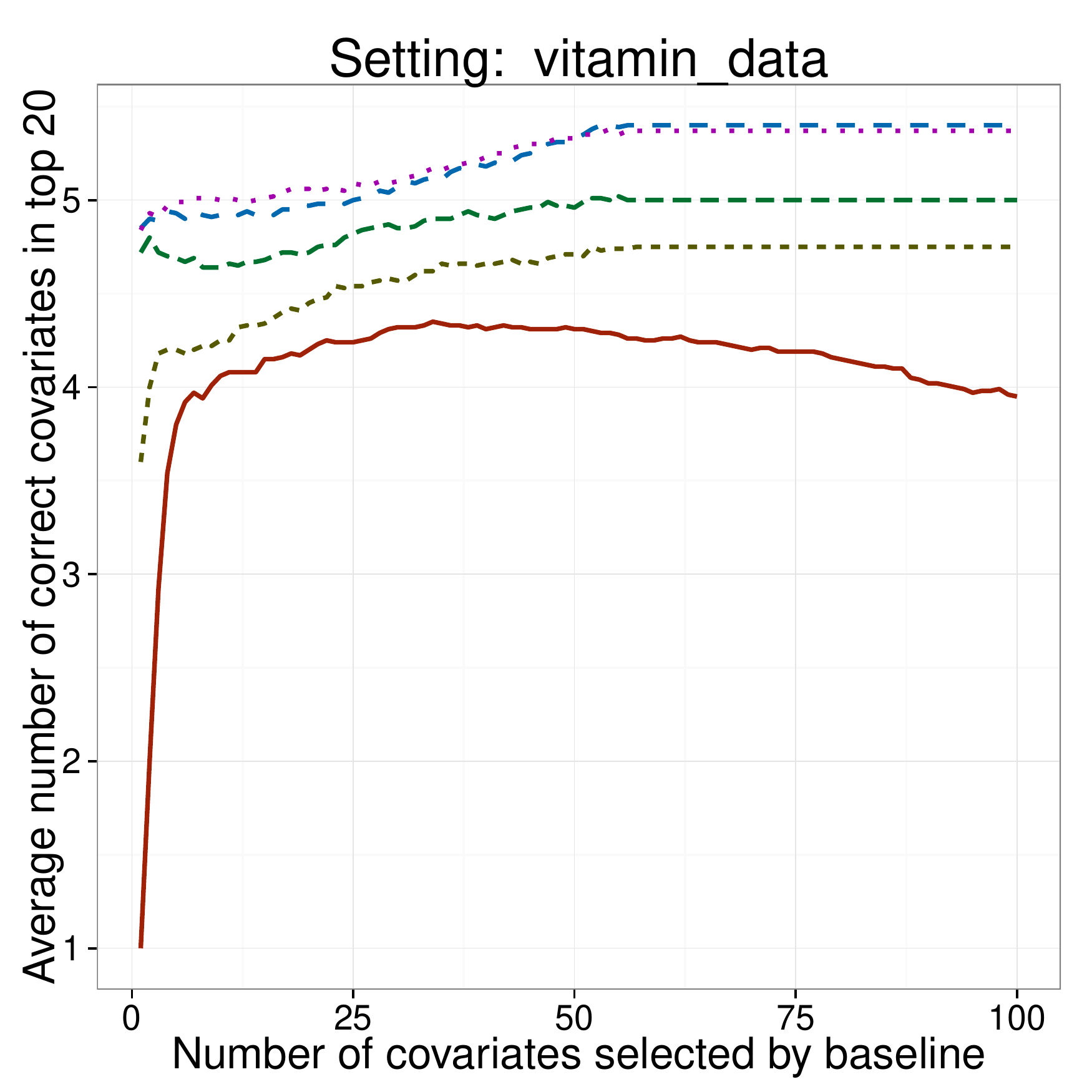}
  \caption{Like Figure \ref{fig:resultsObservations}, except that we fix the number of disjoint observation subsamples and vary the number of disjoint covariate subsets.
    \label{fig:resultsCovariates} 
  }
\end{figure*}

In additional experimental results presented in the supplemental material, we investigated the possibility of taking advantage of this effect by modifying the base method: We allowed Lasso to select $40$ covariates, ranked them by the magnitude of their coefficients and kept a fixed number $q_L$ (between $1$ and $40$) of covariates
with largest coefficients. This is a version of the thresholded Lasso, that has been shown to have favourable asymptotic properties compared to the regular Lasso, see for instance \cite{lounici2008sup} and \\ \cite{meinshausen2009lasso}. However, we found that the precision@20 results of applying SFS on top of this base method are remarkably stable and unaffected by the value of $q_L$ in this setting\,.
Thus, thresholded Lasso and SFS does not improve significantly over regular Lasso and SFS.

Second, we observed in Figure \ref{fig:resultsObservations} and \ref{fig:resultsCovariates} that in most cases if for a fixed value of $q_L $ standard stability selection $SFS(2,1)$ 
improves on plain Lasso, then the extended methods we propose generally lead to a further improvement.
A qualitative observation is that stability selection appears to be
more successful for situations with limited or short-range dependence (Toeplitz
design, Toeplitz grouped, independent noise, realistic datasets with presumably weak dependence)
than with systematic dependence structure (factorial design, block design).
Naturally, and as expected from the analysis and discussion in Section \ref{se:anal}, 
eventually the performance deteriorates again if the size of the observation subsamples $N/L$ 
becomes too small to allow a reasonable estimation, or if the number of 
covariates $D/V$ used in the randomized base procedure is too low 
(experiments not shown). Overall, these results suggest to use extended stability
selection, both for reasons of potential increase in performance, and
of efficient scalability via possible parallelization over standard
stability selection, whenever the latter itself improves on the base method.

{\bf Discussion of the precision@20 criterion.}
In their original work \citet{meinshausen2010stability}
reported two evaluation measures: the first was the probability that the top $\ell$
covariates by selection frequency are all relevant (for some fixed beforehand $\ell$).
The second was the number of noise 
variables included in the top-$\ell$ selection, where $\ell$ is chosen 
(separately in each data realization) so that
the selection contains at least a fixed proportion such as 20\% of the true variables. 
In experiments using these criteria, 
we found that the first criterion was only poorly informative, 
in the sense
that it was often very close to either 0 or 1, not providing a very clear contrast between the
methods; and that the second criterion was 
in many cases subject to a very large 
expectation and variance, so that we were also wary about its relevance.
For this reason, we chose precision@20 as the performance measure; this
criterion has the advantage of being stable, comparable across settings, in plausible 
relation to intended applications, and is standard in information retrieval.
Additionally, for stability selection \citet{meinshausen2010stability} 
only reported results for a specific value of $q_L $, namely $q_L \approx \sqrt{0.8 D}$,
and we preferred to report performance over a range of values of $q_L $.
Overall we believe our protocol presents a more complete and fairer overall picture.

\subsubsection{Results with theoretical control of false positives}

In this section we first illustrate and validate the bound given by Corollary \ref{co:sampleSplitting}.

We now perform experiments similar to those described above, to investigate the effect of observation subsampling while using the value of $\tau$ dictated by Corollary \ref{co:sampleSplitting} in order to achieve less than one expected false positive (i.e. the values represented in Figure~\ref{fig:tauQ}).  We determine the average number of false positives for $q_L\in \{1,..,100\}$. After comparing the empirical number of false positives to the theoretical bound, we also count the number of true positives to determine the power of the method. As we noticed that for a signal to noise ratio of $2$ hardly any covariates are selected at all, we used a signal to noise ratio of $8$. (In absence of a false positive error bound
for standalone Lasso, the latter is not included in these experiments.)
The results are given in Figures \ref{fig:resultsFp} and \ref{fig:resultsTp}. The main findings concerning the comparison of standard stability selection and extended stability selection are the following:

\begin{itemize}
\item The bound for the number of false positives holds in all experiments except for factorial and grouped Toeplitz design when $q_L$ is small;
it seems to indicate that the
symmetry assumptions of Corollary 1 are significantly violated in these situations.
  \item Extended stability selection selects fewer false positives in all experiments.
  \item Extended stability selection selects more true positives in all experiments except for the Vitamin dataset.
  \item The largest number of true positives is often achieved for some $q_L$ that is larger than the number of relevant covariates and often lies outside of the regime where the bound for standard stability selection can be used.
\end{itemize}
We observe in Figure \ref{fig:resultsFp} that the FP bound for extended stability selection seems to be loose.
There is probably room for improvement in the theoretical bounds, for instance using ideas from \cite{shah2013variable} to improve
on Markov's inequality under additional assumptions on the distribution of the frequency counts, though this
is outside of the intended scope of this paper.

\begin{figure*}
  \includegraphics[height=0.29\textwidth]{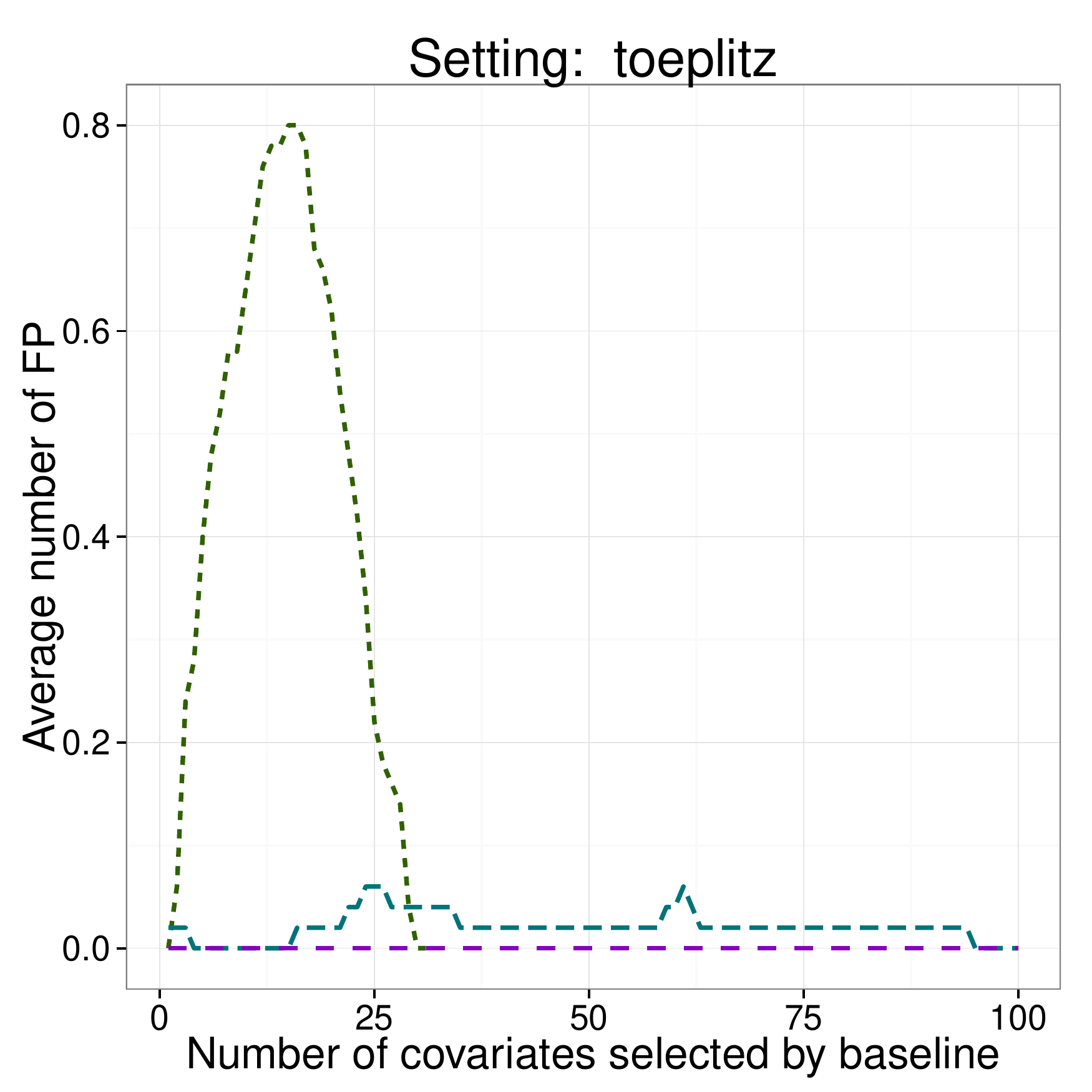}\nolinebreak
  \includegraphics[height=0.29\textwidth]{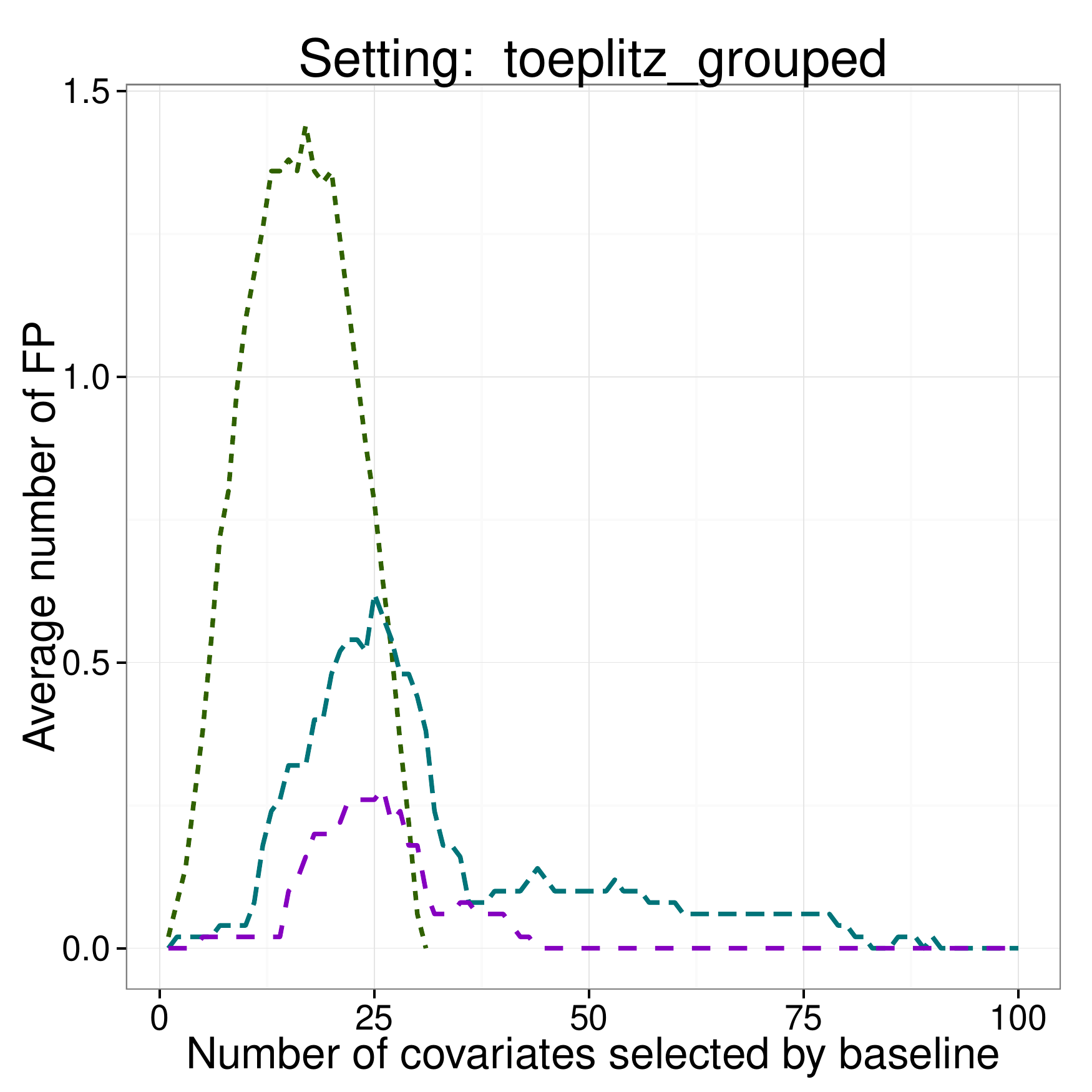}\nolinebreak
  \includegraphics[height=0.29\textwidth]{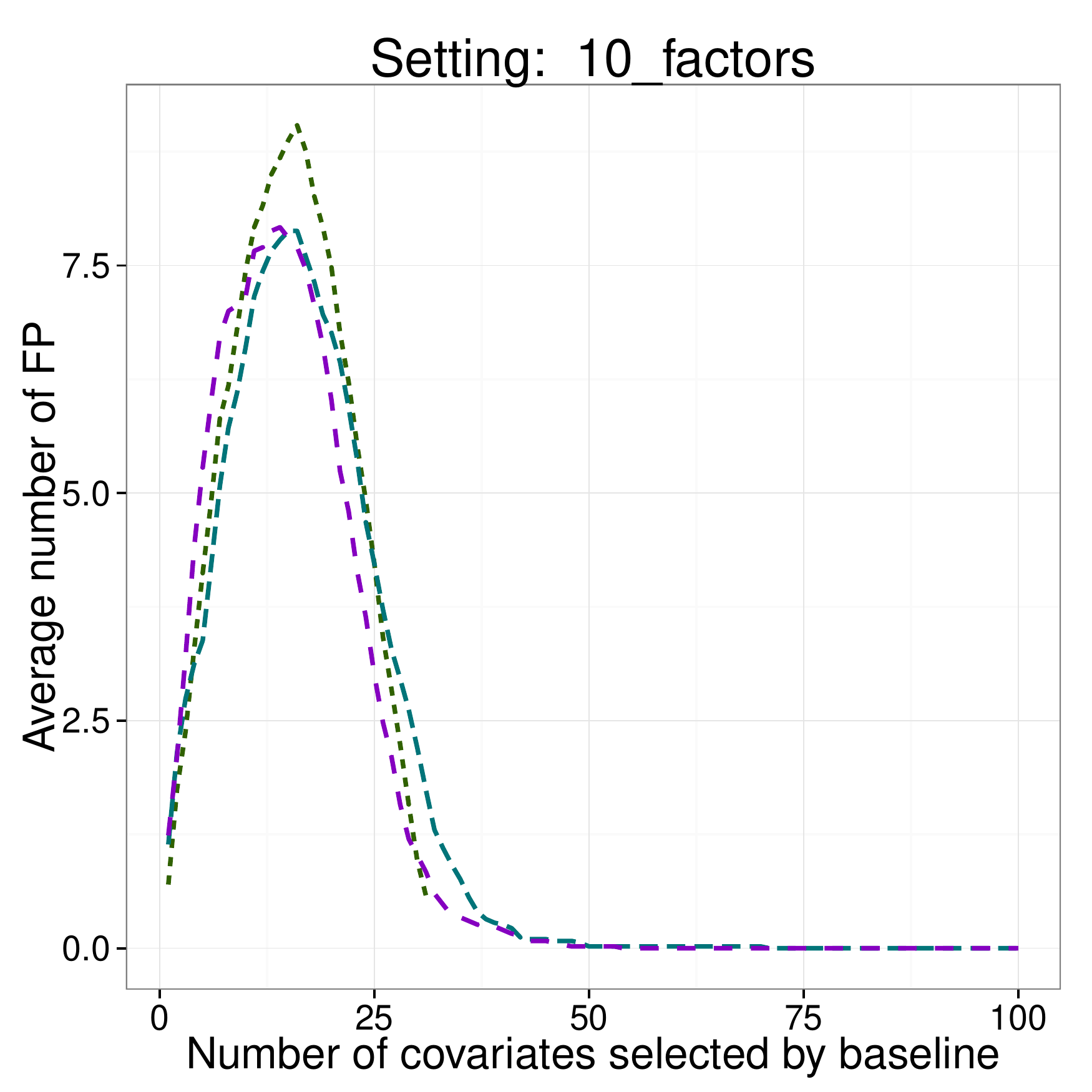}
  \includegraphics[height=0.29\textwidth]{legendFp}\\
  \includegraphics[height=0.29\textwidth]{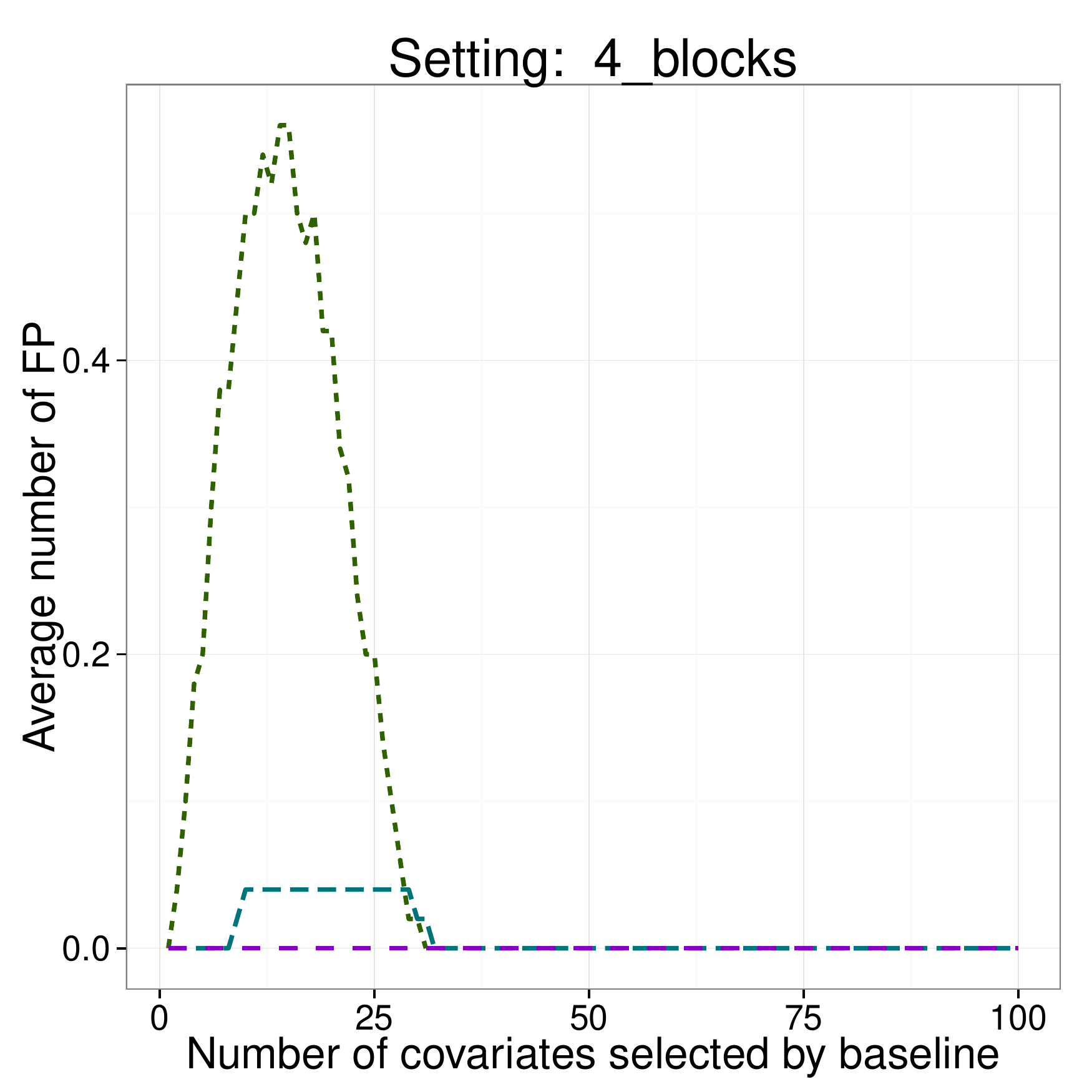}\nolinebreak
  \includegraphics[height=0.29\textwidth]{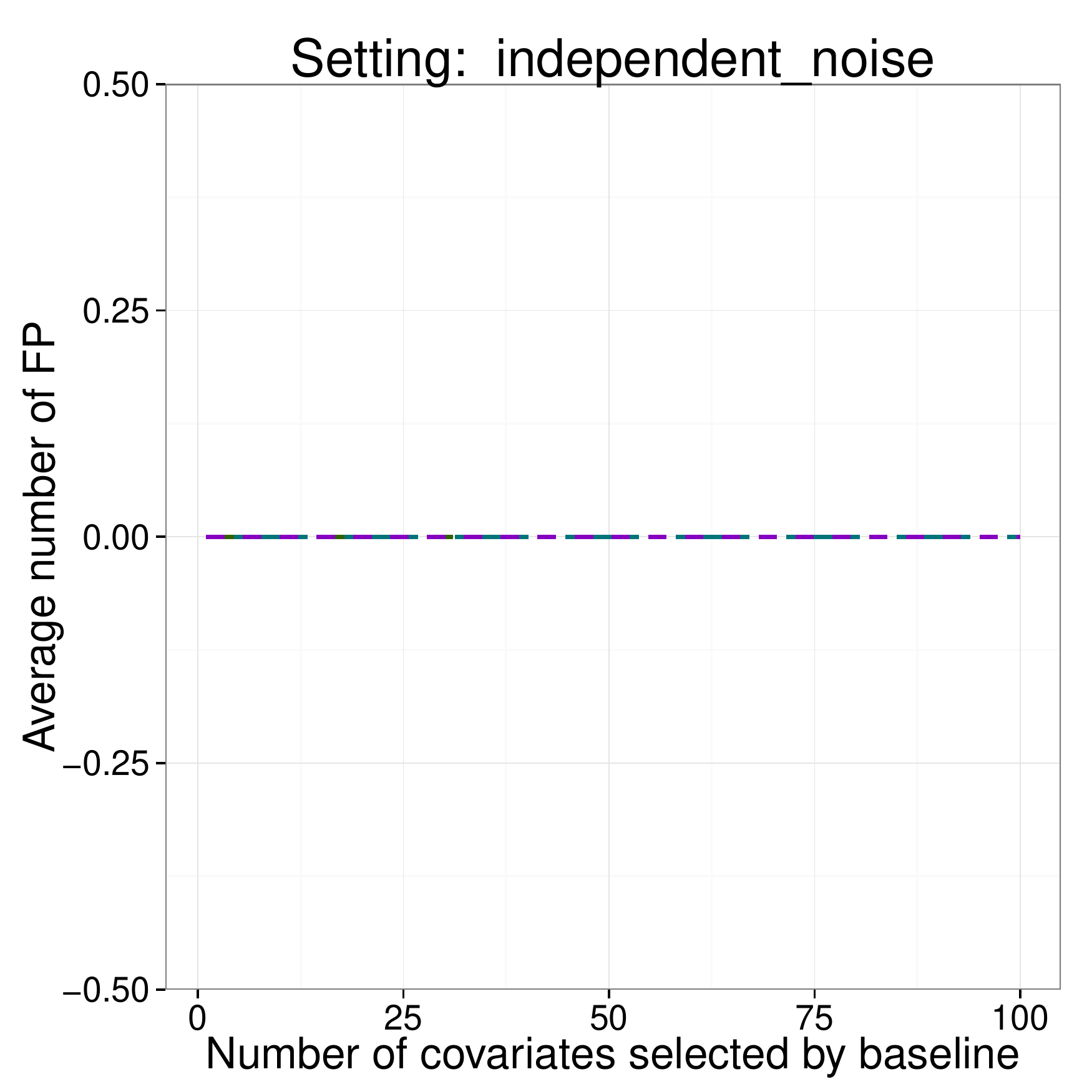}\nolinebreak
  \includegraphics[height=0.29\textwidth]{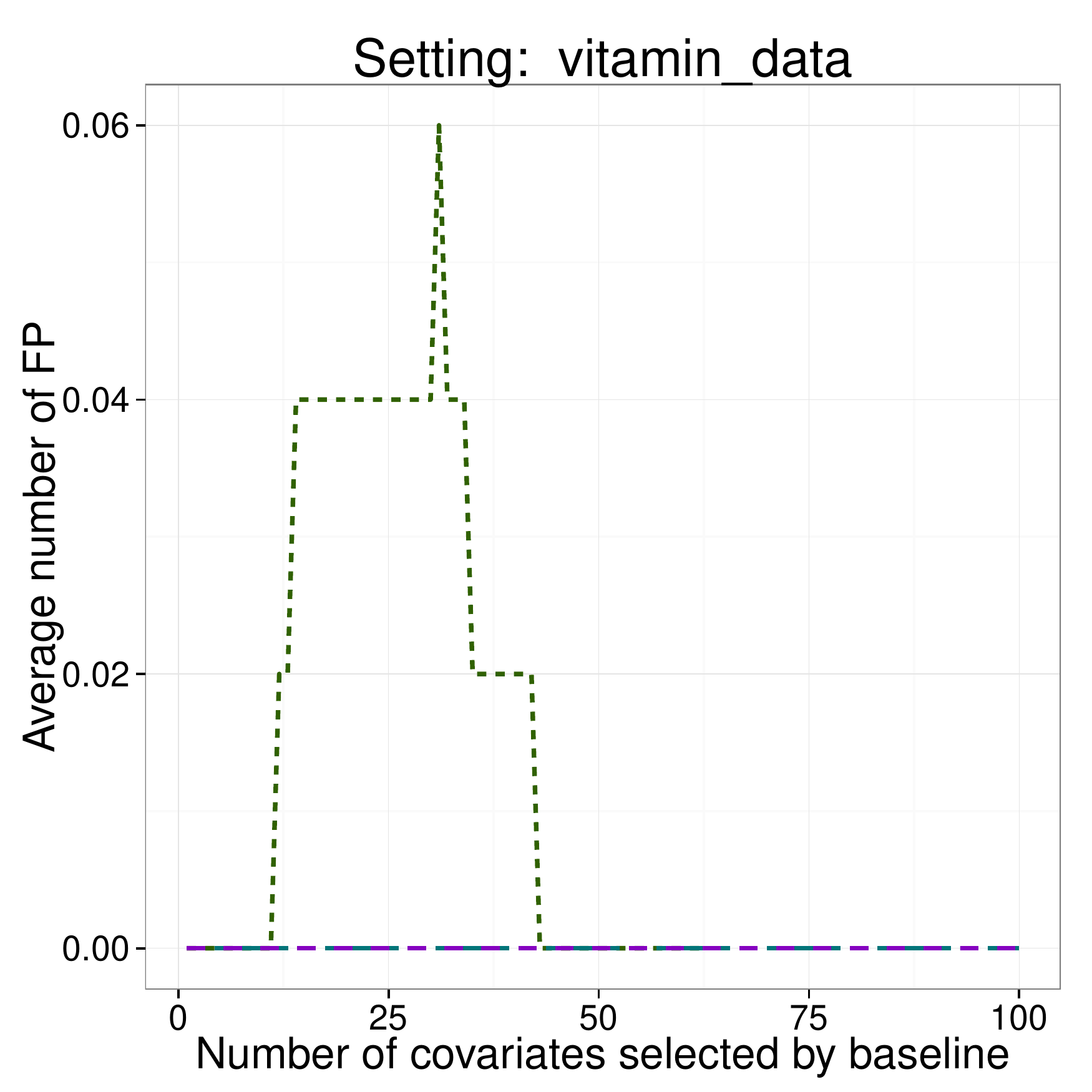}\nolinebreak
  \caption{Comparison of stability selection and its extensions for various numbers of observation subsamples (corresponding to line types), we choose the largest $\tau$ such that in Corollary \ref{co:sampleSplitting} the expected number of false positives is less than one. We plot the average number of false positives, depending on the number of covariates selected by the base method, figures correspond to designs. Each dataset contains 20 signal and 980 noise covariates, except for the Vitamin dataset, which contains 6 signal and 4082 noise covariates.} 
  \label{fig:resultsFp}
\end{figure*}

\begin{figure*}
  \includegraphics[height=0.29\textwidth]{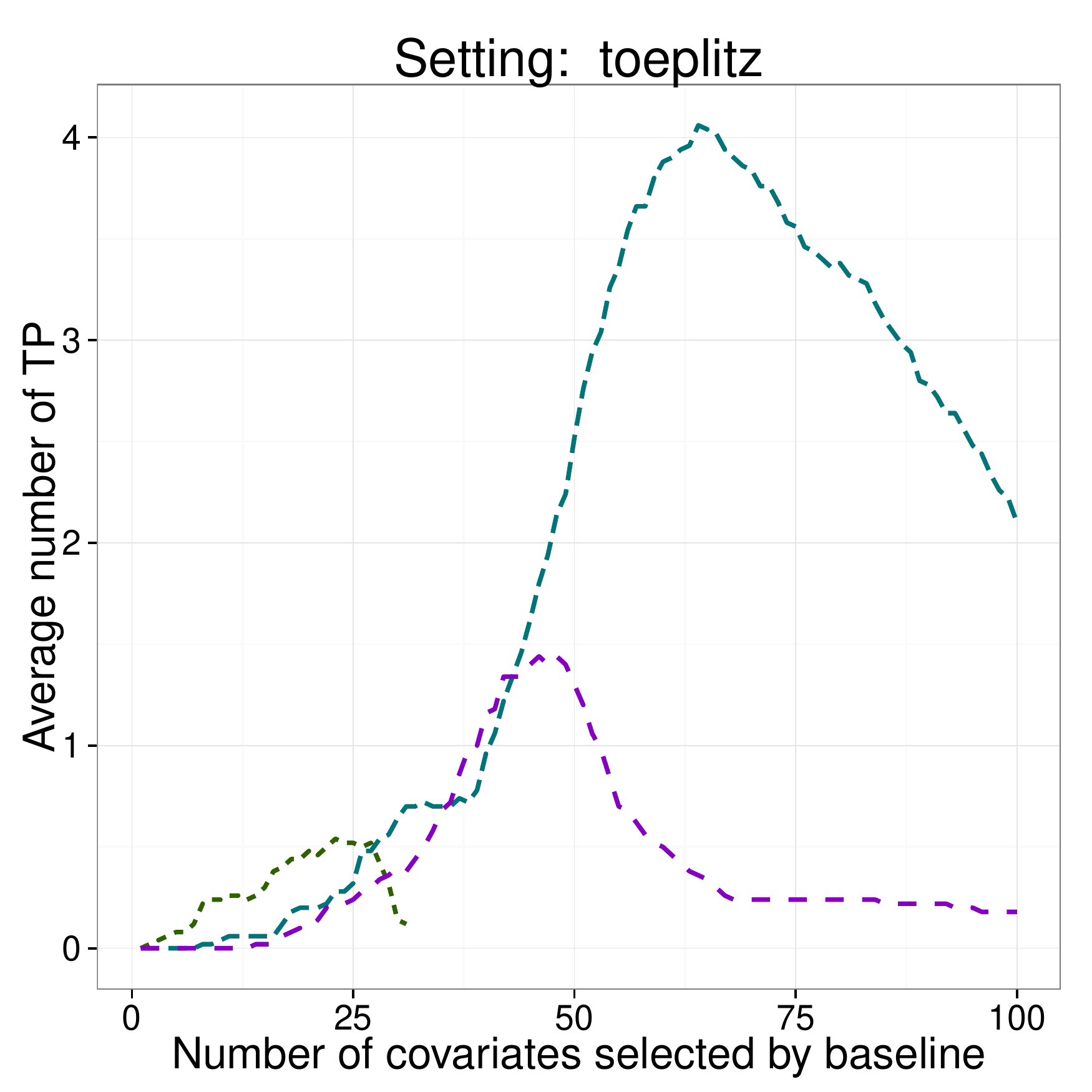}\nolinebreak
  \includegraphics[height=0.29\textwidth]{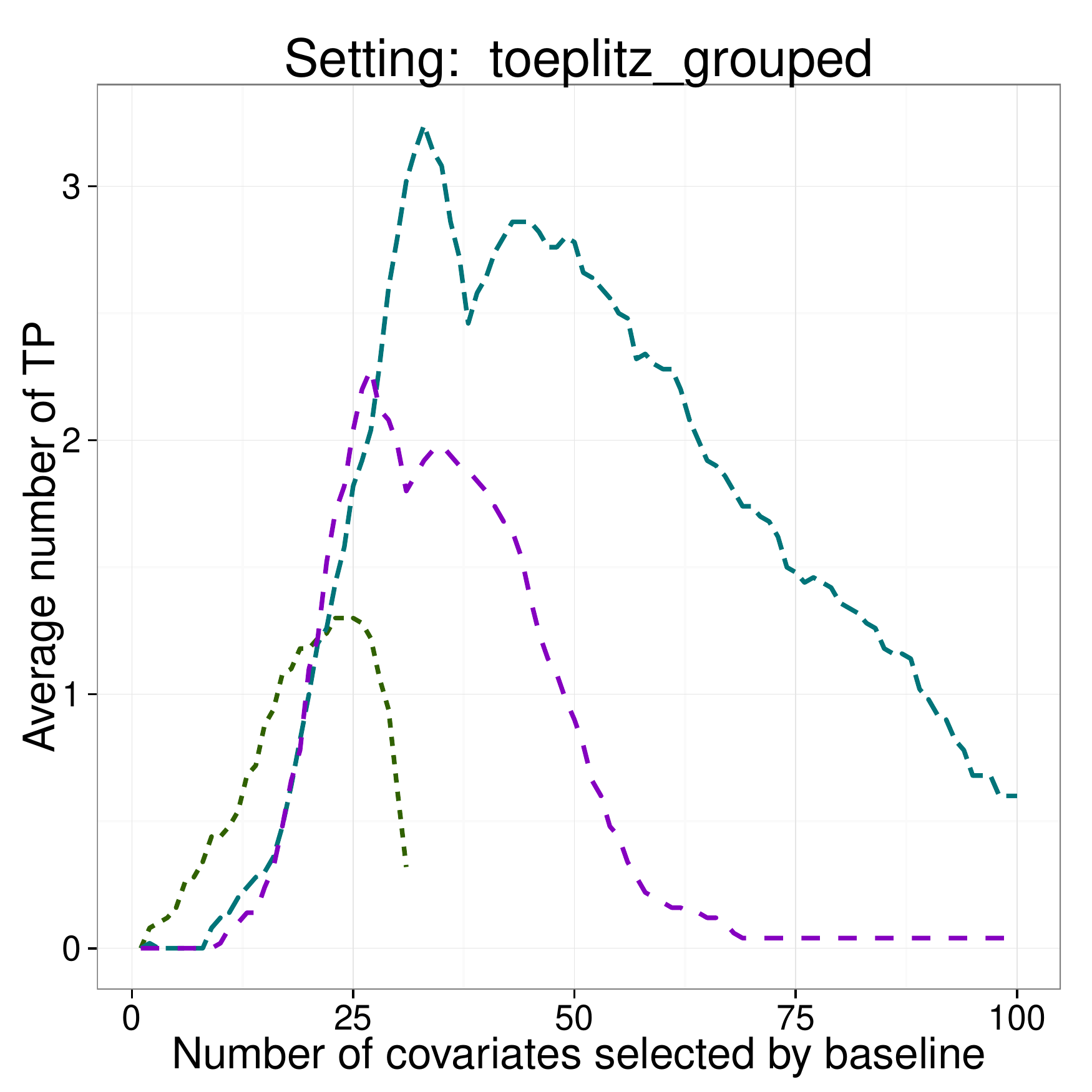}\nolinebreak
  \includegraphics[height=0.29\textwidth]{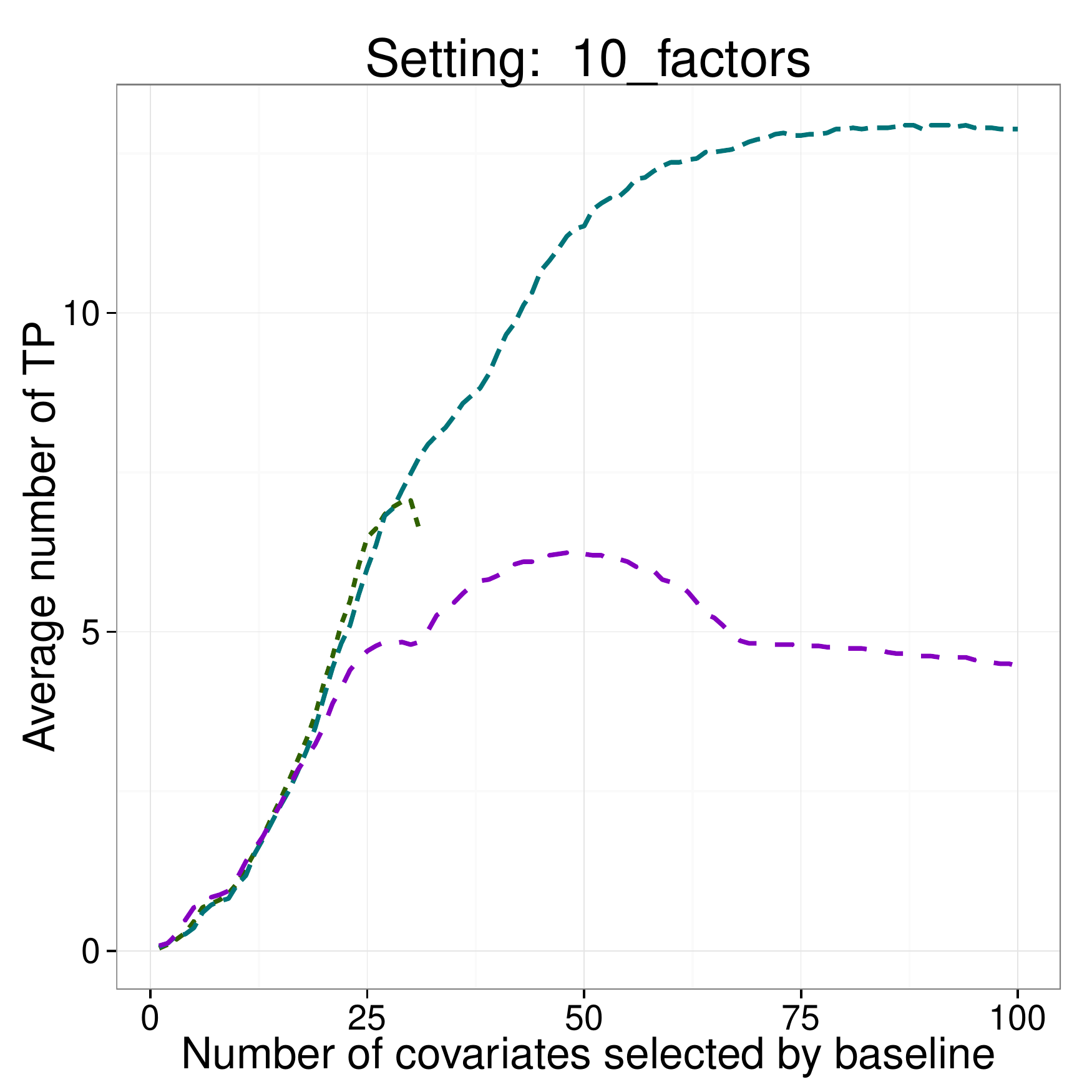}
  \includegraphics[height=0.29\textwidth]{legendFp}\\
  \includegraphics[height=0.29\textwidth]{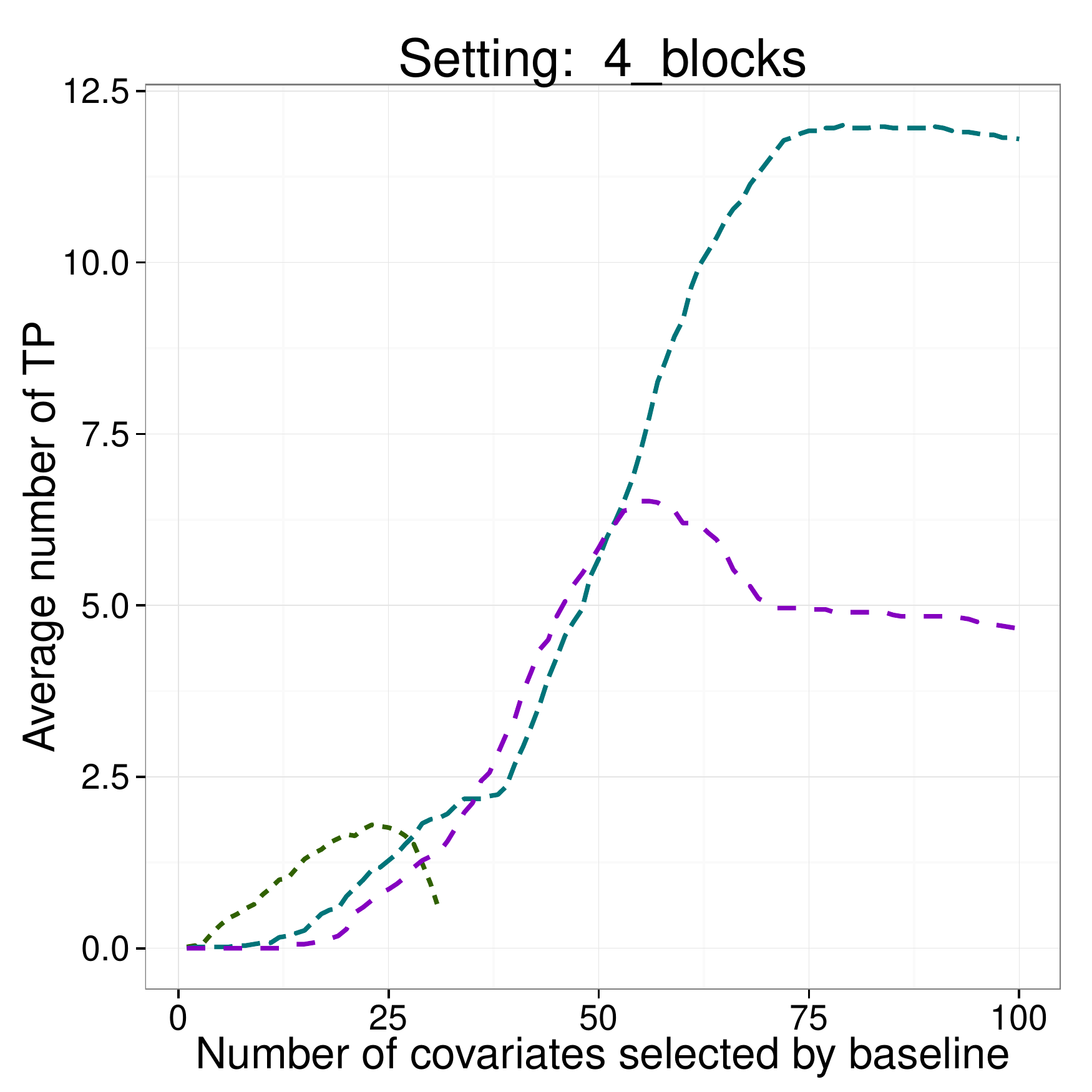}\nolinebreak
  \includegraphics[height=0.29\textwidth]{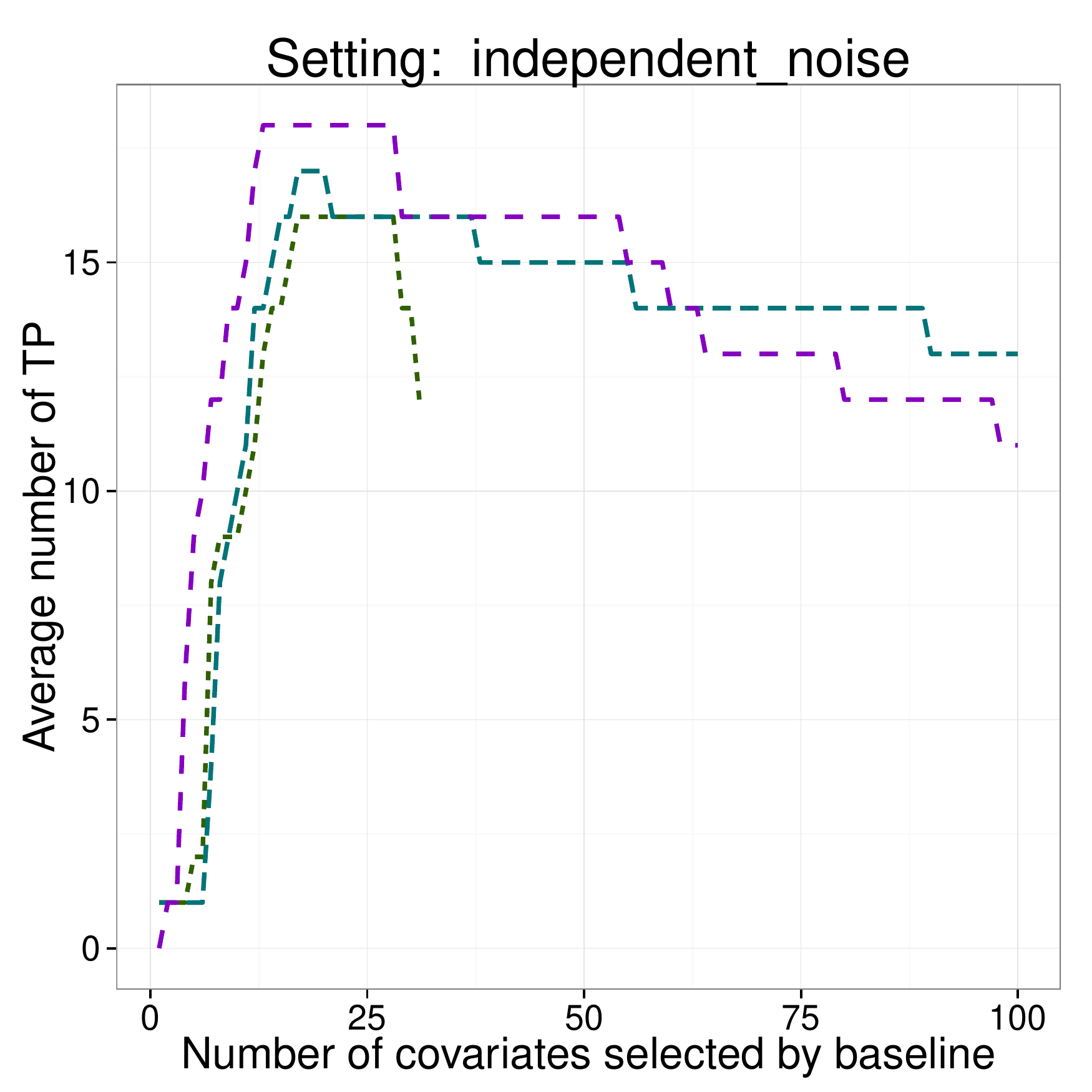}\nolinebreak
  \includegraphics[height=0.29\textwidth]{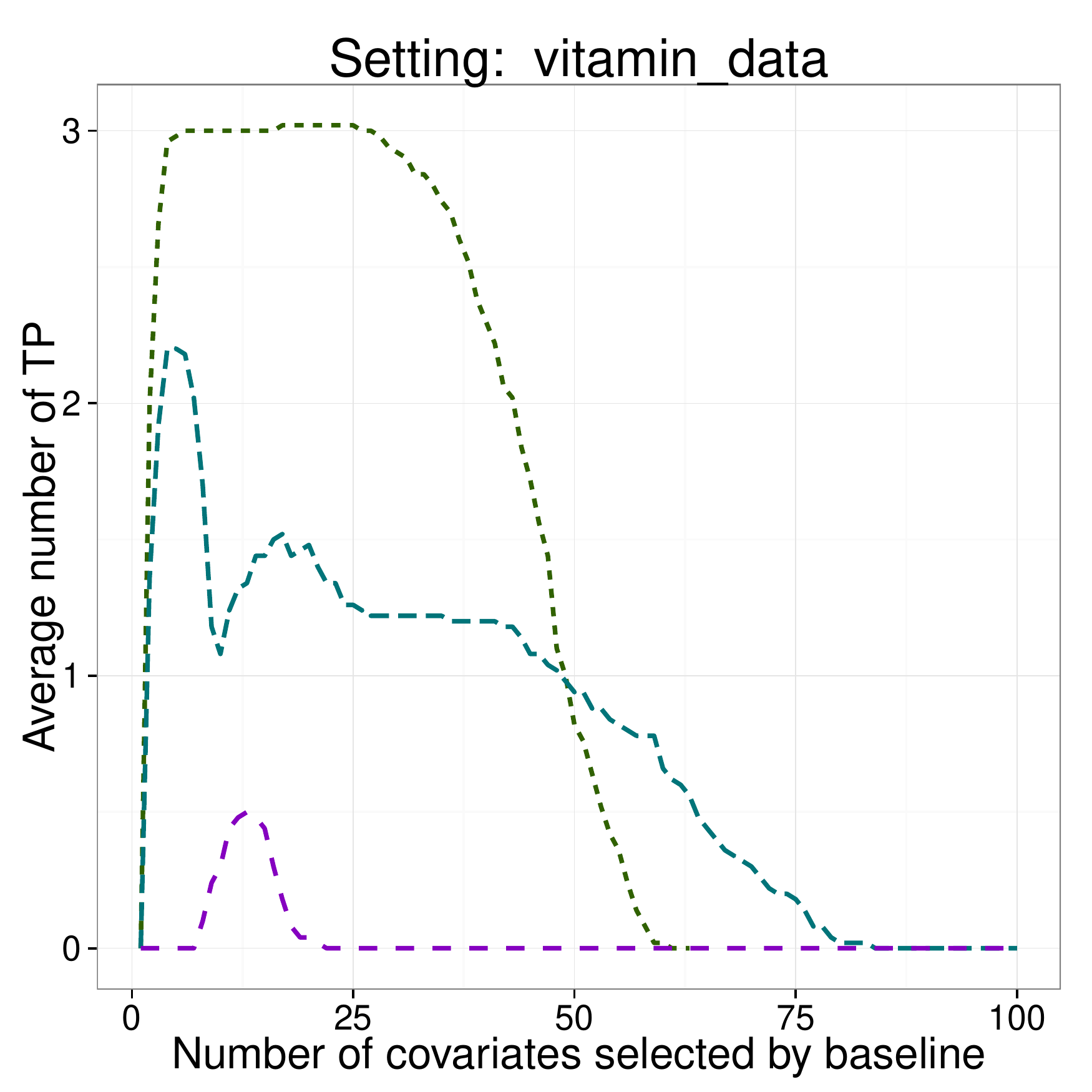}\nolinebreak
  \caption{Like Figure \ref{fig:resultsFp}, except that we plot the average number of true positives.}
  \label{fig:resultsTp}
\end{figure*}

\subsection{Application: Image classification.}\label{prediction}
In this section, we use variable selection as a preprocessing step to solve a classification problem. We investigate whether stabilising the variable selection method with extended stability selection improves the classification error rate compared with applying the variable selection method to the whole dataset.

\paragraph{Dataset Description.}

For prediction we used a subset of 6000 images of handwritten digits from the MNIST  dataset \citep{lecun1998gradient}. Covariates were computed using feature extractors from the 
a collaborative machine learning framework \citep{MASH} where external contributors
can directly submit feature extractors of their choice by uploading C++ code through a web interface.
A total of 48416 covariates were computed on each image from the contributed feature extractors. 
These covariates are heterogeneous because they come from different contributors,
generally exhibit strong correlations, do not have a sparse structure,
and many of them contain at least 
some information about the class label. Here, the goal is not
to strictly identify informative covariates, 
but to reduce the dimensionality of the problem in order to apply a learning and prediction method afterwards. 
This way, the computational complexity and memory requirements of the learning algorithm can be reduced considerably. Note that in these experiments 
we are not aiming at outperforming state-of-the-art
classification results on MNIST, but we wish to evaluate the effect of the proposed extensions
to stability selection on prediction performance, as compared to standard stability
selection with the same base method.

\paragraph{Experimental setup and results.}
As base method, we use CMIM (conditional mutual information maximization,  \citealp{fleuret2004fast}) to select 1000 covariates. As described in Section \ref{sec:basemethods},
Equations \eqref{eq:cmim1} and \eqref{eq:cmim2} CMIM iteratively selects covariates updating in each iteration a score for each covariate. To speed up computation, we use an approximation proposed by \citet{BeiDogBla12b} and only
update scores up until iteration $k$ (recomputation of the scores at each iteration being the
computationally costly part), so that Equation \eqref{eq:cmim2} is replaced by
\[
  \nu(\ell)  = 
  \argmax_{d\in\set{1, \dots, D}\setminus\set{\nu(i), i<\ell}}
  \min_{j \leq \min(\ell-1,k)} \wh{I}\left(X_d;Y|X_{\nu(j)} \right)\,.
  \label{eq:cmim2}
  \]
  In other words, after $k$ normal CMIM iterations 
  the $1000-k$ covariates which remain to be selected are chosen as the ones with the largest values of the score $\min_{j \leq k} \wh{I}\left(X_d;Y|X_{\nu(j)} \right)$.
  We considered $k=10$ and $k=100$ and denoted the resulting stability selection
  method as SFS$_{10}$ or SFS$_{100}$ respectively. Selection by CMIM only (updating scores until iteration 10) is denoted by CMIM$_{10}$.

\begin{table}[b]
  \caption{The effect of the size of covariate subsets on prediction error. Selection of $1000$ covariates with SFS$_k(2,V)$ using CMIM$_k$ as base method; prediction with AdaBoost.MH with various numbers of iterations. Reported are averaged results over 100 repetitions. }
  \label{results_tbl_2}
  \begin{center}
    \begin{tabular}{|r|c|c|c|c|}
      \cline{2-5}
      \multicolumn{1}{c} { }&\multicolumn{ 4}{|c|}{\textbf{SFS$_{10}(2,V), V=$}} \\
      \hline
      \textbf{\# it.} & \textbf{1} & \textbf{2} & \textbf{6} & \textbf{10} \\ \hline
      \textbf{50}  & 8.1 (0.2) & 7.7 (0.3) & 7.5 (0.3)  & \textbf{7.4} (0.3) \\ \hline
      \textbf{100} & 5.0 (0.2) & 4.8 (0.2) & \textbf{4.5} (0.1) & \textbf{4.5} (0.3) \\ \hline
      \textbf{200} & 3.3 (0.1) & 2.9 (0.1) & \textbf{2.5} (0.2) & 2.7 (0.2) \\ \hline
      \textbf{400} & 2.8 (0.1) & 2.7 (0.2) & \textbf{2.4} (0.1) & \textbf{2.4} (0.2) \\ \hline
      \textbf{800} & 2.4 (0.1) & 2.5 (0.2) & 2.2 (0.1) & \textbf{2.1} (0.1) \\ \hline
      \textbf{1600}& 2.0 (0.1) & 2.0 (0.2) & 1.9 (0.2) & \textbf{1.8} (0.1) \\ \hline
      \multicolumn{ 5}{|c|}{\textbf{SFS$_{100}(2,V), V=$}} \\ \hline
      \textbf{50} & 8.0 (0.2) & 7.5 (0.3) & 7.3 (0.2) & \textbf{7.2} (0.3) \\ \hline
      \textbf{100} & 4.9 (0.2) & 4.6 (0.2) & 4.0 (0.3) & \textbf{3.0} (0.2) \\ \hline
      \textbf{200} & 3.3 (0.2) & 3.0 (0.2) & \textbf{2.3} (0.1) & 2.6 (0.2) \\ \hline
      \textbf{400} & 3.0 (0.1) & 2.5 (0.1) & \textbf{2.1} (0.1)  & 2.2 (0.1) \\ \hline
      \textbf{800} & 2.3 (0.1) & 2.2 (0.1) & 2.0 (0.1)  & \textbf{1.9} (0.1) \\ \hline
      \textbf{1600} & 2.1 (0.2) & 1.9 (0.1)  & 1.8 (0.1) & \textbf{1.6} (0.1) \\ \hline
    \end{tabular}
  \end{center}
\end{table}
\begin{table}[b]
  \caption{The effect of the size of observation subsamples on prediction error. Selection of $1000$ covariates with CMIM$_{10}$ only and SFS$_{10}(L,1)$ using CMIM$_{10}$ as base method; prediction with AdaBoost.MH with various numbers of iterations. Reported are averaged results over 100 repetitions.}
  \label{results_tbl_3}
  \begin{center}
    \begin{tabular}{|r|c|c|c|c|}
      \cline{2-5}
      \multicolumn{1}{c} { }& \multicolumn{ 1}{|c|}{\textbf{CMIM$_{10}$}} &\multicolumn{ 3}{|c|}{\textbf{SFS$_{10}(L,1), L=$}} \\ \hline
      \textbf{\# it.} & \textbf{}& \textbf{2} & \textbf{6} & \textbf{10} \\ \hline
      \textbf{50} & 8.4 (0.4) & 8.1 (0.2) & \textbf{8.0} (0.3) & 8.1 (0.2) \\ \hline
      \textbf{100} & 5.5 (0.2) & 5.0 (0.2) & 5.2 (0.3) & \textbf{4.9} (0.2) \\ \hline
      \textbf{200} & 3.6 (0.3) & 3.3 (0.1) & \textbf{3.2} (0.2) & 3.4 (0.2) \\ \hline
      \textbf{400} & 2.8 (0.2) & 2.8 (0.1) & \textbf{2.7} (0.2) & \textbf{2.7} (0.1) \\ \hline
      \textbf{800} & 2.4 (0.1) & 2.4 (0.1) & \textbf{2.3} (0.2) & \textbf{2.3} (0.1) \\ \hline
      \textbf{1600} & 2.1 (0.1) & 2.0 (0.1) & 2.1 (0.1) & \textbf{1.9} (0.1) \\ \hline
    \end{tabular}
  \end{center}
\end{table}

  For all considered  methods  $1000$ covariates are selected at the end,
  corresponding to about $2.1\%$ of the total number of covariates.
Using the selected covariates we applied the AdaBoost.MH algorithm  \citep{escudero2000boosting,schapire1999improved} as a learning and prediction algorithm with various numbers of boosting iterations. 
We report our results in Table~\ref{results_tbl_2} and \ref{results_tbl_3}. In Table~\ref{results_tbl_3} we only show results for SFS$_{10}$, results for SFS$_{100}$ were similar.  The main
conclusions are the following: 
\begin{itemize}
\item We observe a slight trend that $L>2$ leads to better prediction
performance than $L=2$, but it is not statistically significant.
The main conclusion here though, is that taking smaller subsample sizes does not degrade the final performance.
This is of much  relevance in practice, since smaller subsample sizes require less memory.
Furthermore, if the computational complexity $C_{Base}(N)$ of the base method grows  faster than linearly in 
the number of 
observations $N$, the total computing cost is also reduced (since $LC_{Base}(N/L)$ is decreasing in $L$ in that case).
Parallelization is also easier for smaller disjoint subsamples.
\item We observe a statistically significant trend that $V>1$ (smaller covariate subsets)
leads to improvements in prediction performance in comparison to $V=1$ (standard
stability selection).
\end{itemize}

Overall, and paralleling the conclusions of Section~\ref{se:artif}, 
these results demonstrate the relevance of extended stability selection.

\section{Discussion/Conclusion}

We presented theoretical and experimental support for the proposed extensions of the stability
selection methodology.
Concerning subsampling of observations using smaller subsamples,
we generalized error bounds of preceding investigations \citep{meinshausen2010stability,shah2013variable}.
These new bounds give insights into the effect of the subsample size on the selection performance and can provide guidance to practitioners to use a version of stability selection that suits their needs.
Our simulations showed that the obtained false positive bounds can apply in regimes where previous
bounds are void, resulting a more powerful procedure (number of true positives). Still, the obtained
bounds appear loose in practice and could probably be further refined.
Concerning randomising of an arbitrary base procedure by taking random covariate subsets,
we motivated this method from a simplified theoretical toy model, showing in certain circumstances
(heavy tailed score noise) that restricting the search to a random subset increases the probability of correct recovery.
We expect that this second type of extension should be particularly appealing for practitioners in high dimensional settings, where the number of covariates largely exceeds the number of observations.

Experimental results using the precision@20 information retrieval criterion showed that both
extensions improve on standard stability selection whenever the latter improves on the base method applied to the whole dataset.
It remains an open task to determine precise conditions under which a variable selection method can be improved by stability selection. Even though our analysis gave first insights on the dependence of the error probability on the size of observation subsamples, a more precise rule for the optimal choice of the subsample size is left for further work. The same holds for the optimal choice of the size of covariate subsets, even though in practice in very high dimensional problems this choice might be dictated by computational constraints in the first place.

\begin{acknowledgements}
We are extremely grateful to Nicolai Meinshausen and Peter B\"uhlmann for communicating to us the R-code used by \citet{meinshausen2010stability} as well as for numerous discussions. We are indebted to Richard Samworth and Rajen Shah for numerous discussions and for hosting the first author during part of this work. 
We thank Maurilio Gutzeit for helping us with part of the numerical experiments.
\end{acknowledgements}

\appendix


\section{Proofs of theoretical results}

\subsection{Proofs of Section \ref{se:subobs}}

For notational convenience we use the shorthand $S^{\mathrm{base}}(\ell,t)\equiv S^{\mathrm{base}}(X^{(\mathscr{S}(\ell,t))},Y^{(\mathscr{S}(\ell,t))})$\,.
To prove Theorem \ref{th:sampleSplitting} and Corollary \ref{co:sampleSplitting} we need some notation and two lemmas. 
We define
\[
{\Pi}^{\mathrm{simult}}_{L,\ell_0}(d) := \frac{1}{T} \sum_{t=1}^T \mbf{1}\left\lbrace \sum_{\ell=1}^L  \mbf{1}\left\lbrace d \in S_L^{\mathrm{base}}(\ell,t) \right\rbrace  \geq \ell_0 \right\rbrace \]
the ratio of repetitions out of $T$ where covariate $d$ has been selected in at least $\ell_0$ subsamples simultaneously.

\begin{lemma}(Relation of $\Pi^{\mathrm{simult}}$ and $\Pi^{SFS}$)
\label{le:ctrpi}
It holds for any $d\in\cF$:
 \begin{equation*}
  \left( \frac{L-\ell_0+1}{L} \right) \Pi^{\mathrm{simult}}_{L,\ell_0}(d) + \frac{\ell_0-1}{L} \ge {\Pi}^{SFS}_{L}(d)\,.
 \end{equation*}
\end{lemma}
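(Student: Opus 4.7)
The plan is to reduce the inequality to a per-iteration (and hence purely arithmetic) statement. For a fixed covariate $d$ and each repetition $t\in\{1,\ldots,T\}$, let
\[
  N_t := \sum_{\ell=1}^L \mathbf{1}\{d \in S^{\mathrm{base}}(\ell,t)\} \in \{0,1,\ldots,L\},
\]
the number of the $L$ disjoint subsamples drawn at iteration $t$ in which $d$ is selected by the base method. Then, directly from the definitions,
\[
  \Pi^{SFS}_L(d) = \frac{1}{TL}\sum_{t=1}^T N_t
  \quad\text{and}\quad
  \Pi^{\mathrm{simult}}_{L,\ell_0}(d) = \frac{1}{T}\sum_{t=1}^T \mathbf{1}\{N_t \geq \ell_0\}.
\]

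After multiplying by $L$ and averaging, the claim is implied by the pointwise bound
\[
  (L-\ell_0+1)\,\mathbf{1}\{N_t \geq \ell_0\} + (\ell_0-1) \;\geq\; N_t
  \qquad \text{for every } t.
\]
So the second step is to verify this elementary inequality by splitting on the value of $N_t$: if $N_t \leq \ell_0-1$, the left-hand side equals $\ell_0-1 \geq N_t$, and if $N_t \geq \ell_0$, the left-hand side equals $(L-\ell_0+1)+(\ell_0-1)=L \geq N_t$. Summing over $t$, dividing by $TL$, and rearranging yields the lemma.

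There is no real obstacle here; the only subtle point is choosing the correct affine upper envelope of the indicator $x \mapsto x$ on $\{0,1,\ldots,L\}$ by the function $x\mapsto (L-\ell_0+1)\mathbf{1}\{x\ge \ell_0\}+(\ell_0-1)$. This envelope is tight at $x=\ell_0-1$ and $x=L$, which is precisely why the bound will later be usable inside Markov's inequality (via $\Pi^{\mathrm{simult}}_{L,\ell_0}$) to control $\Pi^{SFS}_L$ and thereby prove Theorem~\ref{th:sampleSplitting}.
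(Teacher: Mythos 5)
Your proposal is correct and follows essentially the same route as the paper: the paper's proof rests on the identical per-repetition inequality (written there as $\frac{1}{L}N_t \le \frac{\ell_0-1}{L}\mathbf{1}\{N_t \le \ell_0-1\} + \mathbf{1}\{N_t \ge \ell_0\}$, which becomes your affine envelope after substituting $\mathbf{1}\{N_t \le \ell_0-1\} = 1-\mathbf{1}\{N_t \ge \ell_0\}$), followed by the same averaging over $t$. Your explicit case split on the value of $N_t$ just makes the paper's implicit verification of that pointwise bound more transparent.
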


\begin{proof}
We have for all repetitions of drawings of subsamples $t=1,\ldots,T$:
\begin{align*}
&\frac{1}{L} \sum_{\ell=1}^L \ind{d \in S^{\mathrm{base}}(\ell,t)}\\
&\quad\leq \paren{\frac{\ell_{0}-1}{L}}\mbf{1}\left\lbrace \sum_{\ell=1}^L  \mbf{1}\left\lbrace d \in S^{\mathrm{base}}(\ell,t) \right\rbrace  \leq \ell_0 -1 \right\rbrace \\
&\quad+ \mbf{1}\left\lbrace \sum_{\ell=1}^L  \mbf{1}\left\lbrace d \in S^{\mathrm{base}}(\ell,t) \right\rbrace  \geq \ell_0 \right\rbrace\,.
\end{align*}
Averaging over the repetitions $t=1,\ldots,T$\,, we obtain
\begin{align*}
\Pi^{SFS}_{L}(d) 
& \leq \frac{\ell_{0}-1}{L} \left(1 - { {\Pi}^{\mathrm{simult}}_{L,\ell_0}}(d) \right) + {\Pi^{\mathrm{simult}}_{L,\ell_0}}(d)\\
&= \left( \frac{L-\ell_{0}+1}{L} \right) {\Pi^{\mathrm{simult}}_{L,\ell_{0}}(d)} + \frac{\ell_{0}-1}{L}\,.
\end{align*}
\end{proof}

\begin{lemma} \label{le:cher}(Exponential inequality for $\Pi^{\mathrm{simult}}$)
The following inequality holds for any $d\in\cF$,
$\xi>0$, and $\ell_0\in\set{1,\ldots,L}$ such that 
$p_0:= \frac{\ell_0}{L} \geq \pdl$:
 \begin{equation}
\label{eq:ctrprob}
  \prob{ \Pi^{\mathrm{simult}}_{L,\ell_0}(d) \ge\xi}
\leq \frac{1}{\xi} \exp\paren{-L D\paren{p_0, \pdl}}\,.
\end{equation}
\end{lemma}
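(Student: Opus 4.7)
The plan is to use a three-step argument: apply Markov's inequality, reduce the resulting expectation to a Binomial tail probability, and then invoke the classical Chernoff–Hoeffding bound. The only subtle point is a joint independence argument for the $L$ subsamples used within a single repetition, which is what makes the Binomial appear in the first place.

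First, I would apply Markov's inequality to write
\[
\prob{\Pi^{\mathrm{simult}}_{L,\ell_0}(d) \ge \xi} \;\le\; \frac{\e{\Pi^{\mathrm{simult}}_{L,\ell_0}(d)}}{\xi}.
\]
By linearity of expectation and the fact that the $T$ outer iterations are i.i.d., the expectation on the right equals
\[
\prob{\sum_{\ell=1}^L \mbf{1}\set{d \in S^{\mathrm{base}}(\ell,1)} \;\ge\; \ell_0}.
\]

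Next comes the crucial step: I claim that within a single iteration $t=1$, the $L$ indicators $\mbf{1}\set{d\in S^{\mathrm{base}}(\ell,1)}$, $\ell=1,\dots,L$, are jointly i.i.d.\ $\mathrm{Bernoulli}(\pdl)$. The idea is that the $N$ observations are i.i.d., and the subsamples are \emph{disjoint} random index subsets of equal size $\lfloor N/L\rfloor$ chosen by a uniform partition that is independent of the data. By exchangeability of i.i.d.\ variables, the joint distribution of $\bigl(X^{(\mathscr{S}(1,1))},\dots,X^{(\mathscr{S}(L,1))}\bigr)$ is the same as that of $L$ independently drawn i.i.d.\ batches of size $\lfloor N/L\rfloor$. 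Since $S^{\mathrm{base}}(\ell,1)$ is a function of the $\ell$-th such batch only, the indicators $\mbf{1}\set{d\in S^{\mathrm{base}}(\ell,1)}$ inherit independence, and each has marginal probability $\pdl$ by the very definition of $p_L(d)$.

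Given this independence, $\sum_{\ell=1}^L \mbf{1}\set{d\in S^{\mathrm{base}}(\ell,1)}$ is $\mathrm{Binomial}(L,\pdl)$. Since the hypothesis $p_0 = \ell_0/L \ge \pdl$ places us on the upper tail, the one-sided Chernoff–Hoeffding bound for binomials yields
\[
\prob{\mathrm{Bin}(L,\pdl) \ge \ell_0} \;\le\; \exp\paren{-L\,D\paren{p_0,\pdl}}.
\]
Chaining this with the Markov step gives exactly \eqref{eq:ctrprob}.

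The main obstacle is really just articulating cleanly the independence claim for the $L$ disjoint subsamples; everything else is routine. An alternative route avoiding the exchangeability argument would be to prove the Binomial bound by directly exhibiting the moment generating function inequality $\e{\exp(s\sum_\ell \mbf{1}\set{d\in S^{\mathrm{base}}(\ell,1)})} \le (1-\pdl + \pdl e^s)^L$ using the fact that conditioning on any subcollection of the disjoint subsamples leaves the remaining one distributed as an i.i.d.\ sample, then optimizing over $s>0$ to recover the Kullback–Leibler rate $D(p_0,\pdl)$. Either route produces the bound at essentially the same cost.
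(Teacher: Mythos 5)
Your proposal is correct and follows essentially the same route as the paper: Markov's inequality applied to $\Pi^{\mathrm{simult}}_{L,\ell_0}(d)$, reduction of its expectation to a single-repetition probability, the observation that disjointness of the $L$ subsamples of i.i.d.\ data makes the indicators independent $\mathrm{Bernoulli}(\pdl)$ so the count is $\mathrm{Bin}(L,\pdl)$, and the Chernoff binomial tail bound. The only difference is the order in which you chain the steps, which is immaterial.
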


\begin{proof}
We have
 \begin{align*}
 \e{ {\Pi}^{\mathrm{simult}}_{L,\ell_0}(d)} 
& =  \prob{ \sum_{\ell=1}^L \ind{d \in {S}^{\mathrm{base}}(\ell,1) } \ge \ell_0 } \\
& = \prob{ \mathrm{Bin}\left(L,\pdl \right) \ge \ell_0} \\
& \le \exp\paren{-L D\left(p_0, \pdl\right)}\,.
\end{align*}
The first equality is valid because the $L$ random 
observation subsamples are {\em disjoint}.
Therefore, their joint distribution is the same
as that of $L$ independent samples of size $\left\lfloor \frac{N}{L} \right\rfloor$;
thus $(S^{\mathrm{base}}(\ell,1))_{1\leq \ell \leq L}$ has the same distribution as
$L$ independent copies of the variable $S_L^{\mathrm{base}}$.
The last inequality is the Chernoff binomial bound.
Using Markov's inequality we get \eqref{eq:ctrprob}.
\end{proof}

{\em Proof of Theorem \ref{th:sampleSplitting}.}
We relate $\Pi^{SFS}$ to $\Pi^{\mathrm{simult}}$ and apply an exponential inequality on $\Pi^{\mathrm{simult}}$. For any $d\in A_{\theta,L}$, it holds by definition of $A_{\theta,L}$ and the assumptions on $p_0$ that $\pdl \leq \theta \leq p_0$,
hence it holds by Lemma~\ref{le:ctrpi} and Lemma~\ref{le:cher} that 
\begin{align*}
&\prob{\Pi^{SFS}_{L}(d) \ge \tau}\\
& \quad\le  \prob{\paren{\frac{L-\ell_0+1}{L}}
\Pi^{\mathrm{simult}}_{L,\ell_0}(d) +\frac{\ell_0-1}{L} \ge \tau}\\
& \quad=  \prob{ \Pi^{\mathrm{simult}}_{L,\ell_0}(d) \ge \frac{L\tau-\ell_0+1}{L-\ell_0+1}}\\
& \quad\le \frac{1-p_0+L^{-1}}{\tau-p_0+L^{-1}}
\exp\paren{-L D\paren{p_0, \pdl}}\,,
\end{align*}
where we have used $\xi:=\frac{L\tau-\ell_0+1}{L-\ell_0+1}$\,. This result generalizes  \citet[Lemma 5]{shah2013variable}. Hence 
\begin{align*}
&\frac{\e{\abs{S^{SFS}_{L,\tau} \cap A_{\theta,L}}}}{\abs{A_{\theta,L}}}\\
& \quad =  \frac{1}{\abs{A_{\theta,L}}} \sum_{d\in A_{\theta,L}}  
\prob{\Pi^{SFS}_{L,\ell_0}(d) \ge \tau} \\
& \quad\leq \frac{1-p_0+L^{-1}}{\tau-p_0+L^{-1}}
\frac{1}{\abs{A_{\theta,L}}} \sum_{d\in A_{\theta,L}}
\exp\paren{-L D\paren{p_0, \pdl}}\,.
\end{align*}
Since $x\rightarrow \exp ( -L D(p_0,x))$ is non-decreasing, we obtain the first part of the result by upper bounding for all 
$d \in A_{\theta,L}$:
\[
\exp\paren{-L D\paren{p_0, \pdl}} \leq
\exp\paren{-L D\paren{p_0, \theta}}\,.
\]
For the second part, we use the upper bound
\begin{align*}
\exp\paren{-L D\paren{p_0, \pdl}} & =
\frac{\exp\paren{-L D\paren{p_0, \pdl}}}{\pdl}
\pdl\\
& \leq \frac{\exp\paren{-L D\paren{p_0, \theta}}}{\theta}
\pdl\,,
\end{align*}
since the function $x\mapsto \frac{\exp\paren{-L D\paren{p_0, x}}}{x}$ can be shown to be non-decreasing for 
$x \leq p_0 - L^{-1}$\,. Finally, summing over $d \in A_{\theta,L}$, observe
\[
\sum_{d\in A_{\theta,L}} p_L(d)
= \e{ \sum_{d\in A_{\theta,L}} \ind{ d \in S^{\mathrm{base}}_L}} = \e{\abs{A_{\theta,L} \cap S^{\mathrm{base}}_L}},
\]
leading to the desired conclusion.
Equations \eqref{eq3} and \eqref{eq4} can be proved similarly.
\qed\\

Proof of Corollary \ref{co:sampleSplitting}. This follows the same argument as in \citet{shah2013variable}.
If the variable selection was completely at random, the marginal selection probability of any given covariate would be $\frac{q_L }{D}$, where we recall $q_L =\e{\abs{S^{\mathrm{base}}_L}}$ is the average number of covariates selected by the base method. 
As we assume that the selection probability of a signal covariate is better than random; it entails that for any $d \in \cN^C$, we must have $\pdl > \frac{q_L }{D}$. Conversely, as all noise covariates have the same probability to be selected by the base method, one has $\pdl < \frac{q_L }{D}$ for any $d \in \cN$. Therefore, with $\theta:=\frac{q_L }{D}$ we must have $A_{\theta,L}=\cN$ and $A_{\theta,L}^c=\cN^C$. Inequality \eqref{eq1} therefore implies \eqref{eq:MB-like}, wherein we have taken a minimum over the
range of $\ell_0$ allowed in Theorem \ref{th:sampleSplitting}.
\qed

\subsection{Proofs for Section \ref{se:subcov}}

\begin{proof}[Proof of Theorem \ref{th:frechet}]
  We can first bound the error probability from above by omitting $Q_d$:
  \begin{align*}
    \mathbb{P}\left[ \hat{d}_D \in A_{D,\theta} \right]   &=  \mathbb{P}\left[ \max_{d \in A_{D,\theta}} \hat{Q}_d > \max_{d \in A_{D,\theta}^c}\hat{Q}_d \right]\\
    &=\mathbb{P}\left[ \max_{d \in A_{D,\theta}} \left(Q_d + \eps_d\right) > \max_{d \in A_{D,\theta}^c}\left(Q_d + \eps_d\right) \right]\\
    &\le \mathbb{P}\left[ \max_{d \in A_{D,\theta}} (\theta + \eps_d) > \max_{d \in A_{D,\theta}^C} (\theta +\eps_d) \right]\\
& = \prob{ \argmax_{d\in\set{1,\ldots,D}} \eps_d \in A_{D,\theta}} = \frac{\abs{A_{D,\theta}}}{D}  \rightarrow \eta,
  \end{align*}
  as $D\rightarrow \infty$. If $\eta=0$, the conclusion is therefore established; in the remainder of
the proof we hence assume $\eta>0$. We defer to the end of the proof the case $\eta=1$ and assume 
for now that $\eta\in(0,1)$.  Then $\frac{|A_{D,\theta}|}{D} \rightarrow \eta\in(0,1)$ implies
both $|A_{D,\theta}| \rightarrow \infty$ and $|A^c_{D,\theta}|\rightarrow \infty$, as well as
$\frac{|A_{D,\theta}^c|}{|A_{D,\theta}|} \rightarrow \gamma := \frac{1-\eta}{\eta}$.
We return to the error probability and bound it from below by using  $Q_d\geq 0$ for 
$d \in A_{D,\theta}$ and $Q_d\leq M$ for $d\in A_{D,\theta}^c$:
  \begin{equation}
\label{eq:firststep}
    \mathbb{P}\left[\hat{d}_D \in A_{D,\theta} \right] 
    \ge  \mathbb{P}\left[ \max_{d \in A_{D,\theta}} \eps_d 
      > M + \max_{d \in A_{D,\theta}^C}\eps_d \right].
\end{equation}
Since the distribution of $\eps_i$ belongs to MDA($\fre(\alpha)$), from classical results
of extreme value theory \citep[Theorem 3.3.7]{embrechts1997modelling} we know that there exists a slow varying function $L$ so that, if we denote $G(x):=x^{1/\alpha}L(x)$, then
  \begin{align}\label{eq:Frechet}
    &\frac{\max_{d \in A_{D,\theta}} \eps_d }{G(|A_{D,\theta}|)} \rightarrow \fre(\alpha)\nonumber\\
    &\; \text{  and } \;\nonumber\\ 
    &\frac{\max_{d \in A_{D,\theta}^c} \eps_d }{G(|A_{D,\theta}^c|)} \rightarrow \fre(\alpha),
  \end{align}
in the sense of convergence in distribution, as $D\rightarrow\infty$.
Following on \eqref{eq:firststep}:
\begin{multline*}
    \mathbb{P}\left[\hat{d}_D \in A_{D,\theta} \right]  \ge  \mathbb{P}\left[ \frac{\max_{d \in A_{D,\theta}} \eps_d}{G(|A_{D,\theta}|)} > \frac{M}{G(|A_{D,\theta}|)} \right. \\
    \left.+\frac{G(|A^c_{D,\theta}|)}{G(|A_{D,\theta}|)} \frac{\max_{d \in A_{D,\theta}^c}\eps_d}{G(|A^c_{D,\theta}|)}  \right].
  \end{multline*}
As $L$ is slowly varying, we have $\lim_{x \rightarrow \infty} \frac{L(a x)}{L(x)} \rightarrow 1$ uniformly for $a$ belonging to a bounded interval of the positive real axis \citep[Theorem A 3.2]{embrechts1997modelling}. We deduce
  \begin{align}\label{eq:ToGamma}
  \frac{G(|A^c_{D,\theta}|)}{G(|A_{D,\theta}|)} = \paren{\frac{|A_{D,\theta}^c|}{|A_{D,\theta}|}}^{\frac{1}{\alpha}}
\frac{L\paren{|A_{D,\theta}|\paren{\frac{|A^c_{D,\theta}|}{|A_{D,\theta}|}}}}{L(|A_{D,\theta}|)} \rightarrow \gamma^{1/\alpha},
  \end{align}
as $D\rightarrow\infty$.   We apply Slutsky's theorem \citep[Example A 2.7]{embrechts1997modelling} to Equations (\ref{eq:ToGamma}) and (\ref{eq:Frechet}) to obtain
  \[
\frac{G(|A^c_{D,\theta}|)}{G(|A_{D,\theta}|)} \frac{\max_{d \in A_{D,\theta}^c}\eps_d}{G(|A^c_{D,\theta}|)} 
\rightarrow \fre(\alpha,  \gamma^{1/\alpha})
  \]
in distribution, where $\fre(\alpha,c)$ denotes the $\fre(\alpha)$ distribution rescaled by a factor $c>0$.

Further, slow variation of $L$ implies that $L(x)$ is asymptotically negligible with respect to any power function, so that 
  \begin{equation}
\label{eq:Mterm}
\frac{M}{G(|A_{D,\theta}|)} = \frac{M}{|A_{D,\theta}|^{\frac{1}{\alpha}}L(|A_{D,\theta}|)} 
\rightarrow 0, \;\; \text{ as } \;\; D\rightarrow \infty.
  \end{equation}
  As the maxima in $\max_{d \in A_{D,\theta}} \eps_d$ and $\max_{d \in A_{D,\theta}^C}\eps_d$ are taken over disjoint sets of independent random variables, they are independent. Since they converge marginally in distribution, they also converge jointly and their difference converges due to the continuous mapping theorem \citep[Theorem A 2.6]{embrechts1997modelling}.
Combining with \eqref{eq:Mterm} and using Slutsky's theorem again, we conclude that
  \begin{align*}\label{event}
 \frac{\max_{d \in A_{D,\theta}} \eps_d}{G(|A_{D,\theta}|)} 
      - \frac{M}{G(|A_{D,\theta}|)} -
\frac{G(|A^c_{D,\theta}|)}{G(|A_{D,\theta}|)} \frac{\max_{d \in A_{D,\theta}^c}\eps_d}{G(|A^c_{D,\theta}|)}
\end{align*}
converges in distribution  to the difference of two independent Fr\'echet distributed random variables. This convergence implies the convergence of the c.d.f. for all continuity points \citep[Equation A.1]{embrechts1997modelling}. As the limiting distribution is continuous, we finally obtain
\[
\liminf_{D\rightarrow \infty} \mathbb{P}\left[\hat{d}_D \in A_{D,\theta} \right]
\geq \mathbb{P}\left[ F - \gamma^{\frac{1}{\alpha}} F' > 0 \right],
\]
where $F,F'$ are independent $\fre(\alpha)$ random variables.
To identify the value of this lower bound, observe that it 
is also the limiting value of
\[
\prob{ \argmax_{d\in\set{1,\ldots,D}} \eps_d \in A_{D,\theta}} = \frac{|A_{D,\theta}|}{D}.
\]
Indeed, it suffices to repeat the above argument, except for skipping
inequality \eqref{eq:firststep}. Hence this limiting value is exactly equal to $\eta$.

Finally, for the case $\eta=1$, observe that the above argument remains valid
provided $|A_{D,\theta}^c|\rightarrow \infty$. Even if this is not the case (i.e. 
$|A_{D,\theta}^c|$ remains bounded), then the conclusion would be {\em a fortiori} true
since we could replace $A_{D,\theta}^c$ by a slightly larger set of cardinality $\ln(D)$
(say), which can only decrease the lower bound while still obtaining the above
limiting value.
\end{proof}

\begin{proof}[Proof of Theorem \ref{th:gaussian}]
To show the convergence of the error probability we use  similar arguments as in the proof of Theorem \ref{th:frechet}. From classical results of extreme value
theory for independent standard normal random variables $(\zeta_k)_{k \in \mbn}$ \citep[Example 3.3.29]{embrechts1997modelling} it holds that
\[
\frac{\max_{i \leq k} \zeta_i - b_k }{a_k} \rightarrow \mathsf{Gumbel}\,, \qquad 
\text{ as } k \rightarrow \infty\,,
\]
in distribution, where $a_k:= \frac{1}{\sqrt{2\ln(k)}}$ and $b_k:=\sqrt{2\ln(k)} - \frac{\ln(4\pi)+\ln(\ln(k))}{2\sqrt{2\ln(k)}}$. Below, to clarify the argument we
will introduce
$(\zeta_k)_{k \in \mbn}$ and $(\zeta'_k)_{k \in \mbn}$ two independent sequences
of independent standard normal variables.

Denote $k_D:=|A_{D,\theta'}|$, $\ell_D:= |A_{D,\theta}^c|$ and $\Delta:=\theta-\theta'>0$. Since $A_{D,\theta}^c \subseteq A_{D,\theta'}^c$
we can bound the error probability from above as follows:
\begin{align*}
  \mathbb{P}\left[ \hat{d}_D \in A_{D,\theta'} \right]
& \leq \mathbb{P}\left[ \max_{d \in A_{D,\theta'}} \left(Q_d + \eps_d\right) > \max_{d \in A_{D,\theta}^c}\left(Q_d + \eps_d\right) \right]\\
  & \leq \prob{\max_{d \in A_{D,\theta'}} \eps_d > \max_{d \in A_{D,\theta}^c} \eps_d + \Delta}\\
  & = \prob{\max_{d\leq k_D} \zeta'_d > \max_{d \leq \ell_D} \zeta_d + \Delta}\,.
\end{align*}
The last  equality holds since $A_{D,\theta'}$
and $A_{D,\theta}^c$ are disjoint;
it is purely formal but notationally convenient for the sequel.
Now denote $k'_D := \max(k_D,\ell_D)$\,. The above implies
\begin{multline*}
  \mathbb{P}\left[ \hat{d}_D \in A_{D,\theta'} \right]\\
  \begin{aligned}
  & \leq \prob{\max_{d\leq k'_D} \zeta'_d > \max_{d \leq \ell_D} \zeta_d + \Delta}\\
  & = \mathbb{P}\left[ 
\frac{a_{k'_D}}{a_{\ell_D}}\paren{\frac{\max_{d \leq k'_D} \zeta'_d - b_{k'_D}}{a_{k'_D}}} - \frac{\max_{d \leq \ell_D}\zeta_d - b_{\ell_D}}{a_{\ell_D}} \right.\\
 & \qquad \quad> \left.\frac{b_{\ell_D} - b_{k'_D} + \Delta}{a_{\ell_D}} \right]
  \end{aligned}
  \end{multline*}
We treat the different terms in the above upper bound. We have
\[
 \frac{a_{k'_D}}{a_{\ell_D}} 
=   \frac{\sqrt{2\ln(\ell_D)}}{\sqrt{2\ln(k'_D)}} 
 \leq  1 .
\]
Noting that $b_k = \sqrt{2 \ln k} + o(1)$, 
we have
\begin{align}
&  \frac{b_{\ell_D} - b_{k'_D} + \Delta}{a_{\ell_D}} \nonumber \\
& \qquad= \sqrt{2 \ln \ell_D} \paren{ \sqrt{2 \ln \ell_D} - \sqrt{2 \ln k'_D} + \Delta + o(1)} \nonumber  \\ 
&\qquad= -2\paren{\ln \frac{k'_D}{\ell_D}} \paren{\frac{\sqrt{\ln \ell_D}}{\sqrt{\ln k'_D} + \sqrt{\ln \ell_D}}} \label{eq:plast} \\ 
 & \qquad \quad + \Delta \sqrt{2\ln \ell_D} + o(\sqrt{\ln \ell_D}) \nonumber  \\
 & \qquad \geq \Delta \sqrt{2\ln \ell_D} + o(\sqrt{\ln \ell_D}).  \label{eq:lastInequality}
\end{align}
To check that the last inequality holds, note that Assumption (ii) of the Theorem states that $\liminf \frac{\ell_D}{D} := \eta >0$,
in particular $\ell_D \rightarrow \infty$. On the other hand, since $A_{D,\theta'} \subseteq A_{D,\theta}$, we have $\limsup \frac{k_D}{D} \leq 1-\eta$, therefore $\limsup \frac{k_D}{\ell_D} \leq \frac{1-\eta}{\eta}:=\gamma$ and finally
$\limsup \frac{k'_D}{\ell_D} \leq \max(\gamma,1)$\,.
Since $\ln \frac{k'_D}{\ell_D} \geq 0$\,,
$ \limsup \ln \frac{k'_D}{\ell_D} \leq \paren{\ln \gamma}_+$, and the second factor in \eqref{eq:plast} is positive and upper bounded by 1,
the whole term in \eqref{eq:plast} is $O(1)$\,, so that inequality \eqref{eq:lastInequality}  follows.

We deduce that for any $B>0$ and for $D$ large enough 
$\frac{b_{\ell_D} - b_{k'_D}+\Delta}{a_{\ell_D}} >B$ holds, and we have 
\begin{multline*}
  \mathbb{P}\left[\hat{d}_D \in A_{D,\theta'} \right]
  \leq \mathbb{P}\left[  \paren{\frac{\max_{d \leq k'_D} \zeta'_d - b_{k'_D}}{a_{k'_D}}} \right.\\
\left. \qquad - \frac{\max_{d \leq \ell_D}\zeta_d - b_{\ell_D}}{a_{\ell_D}} 
> B \right].
\end{multline*}

By similar arguments as in the proof of Theorem \ref{th:frechet},
the latter upper bound converges to $\mathbb{P}[ G - G'>B ]$, where $G,G'$
are two independent Gumbel random variables. As $B$ is arbitrary we come to the
announced conclusion.
\end{proof}

\bibliographystyle{abbrvnat}
\bibliography{bibliography}   


\checknbdrafts

\end{document}